\let\csname equation*\endcsname\relax
\let\csname endequation*\endcsname\relax
\newcommand{\newblock}{}
\newcommand{\Hor}{\mathscr{H}} 
\newcommand{\cB}{\mathcal{B}} 
\newcommand{\sB}{\mathscr{B}} 
\newcommand{\sC}{\mathscr{C}} 
\newcommand{\sD}{\mathscr{D}} 
\newcommand{\M}{\mathcal{M}} 
\newcommand{\sK}{\mathscr{K}} 
\newcommand{\cT}{\mathcal{T}} 
\newcommand{\cC}{\mathcal{C}} 
\newcommand{\W}{\mathscr{W}} 
\theoremstyle{plain}
\newtheorem{thm}{Theorem}[subsection]
\newtheorem{lem}[thm]{Lemma}
\newtheorem{prop}[thm]{Proposition}
\theoremstyle{definition}
\newtheorem{mydef}[thm]{Definition}
\begin{document}

\title{Hadamard state in Schwarzschild-de Sitter spacetime}
\author{Marcos Brum$^{1}$ and Sergio E. Jor\'{a}s$^{2}$}
\address{Instituto de F\'{\i}sica, Universidade Federal do Rio de Janeiro, Caixa Postal 68528, Rio de Janeiro, RJ 21941-972, Brazil}
\eads{$^{1}$\mailto{mbrum@if.ufrj.br}, $^{2}$\mailto{joras@if.ufrj.br}}

\begin{abstract}
We construct a state in the Schwarzschild-de Sitter spacetime which is invariant under the action of its group of symmetries. Our state is not defined in the whole Kruskal extension of this spacetime, but rather in a subset of the maximally extended conformal diagram. The construction is based on a careful use of the bulk-to-boundary technique. We will show that our state is Hadamard and that it is not a KMS state, differently from the case of states constructed in spacetimes containing only one event horizon.
\end{abstract}

\pacs{04.62.+v,04.70.Dy,03.65.Fd}
\submitto{CQG}

\maketitle

\section{Introduction}\label{intro}

The quantum field theory in globally hyperbolic spacetimes possessing a bifurcate Killing horizon was greatly clarified in \citep{KayWald91}, where the authors proved that a state invariant under the action of the group of isometries generated by this Killing vector is a KMS state, and the ``temperature'' is given by the surface gravity of the horizon. Moreover, they presented the first rigorous formulation of the Hadamard condition. More recently, this condition was translated to the language of microlocal analysis \citep{Radzikowski96,RadzikowskiVerch96}. This allowed the incorporation of interacting fields (by means of perturbation theory) into the field theory in a mathematically rigorous manner \citep{BruFreKoe96,BruFre00,HoWa01,HoWa02}. More exactly,  the authors of \citep{Wald77,BruFreKoe96,HoWa01,HoWa02,Moretti03,HoWa05} showed that, in this setting, the normal ordering of important observables, such as the energy-momentum tensor, with respect to the Hadamard states, have finite fluctuations in these states.

In the case of the Schwarzschild spacetime \citep{Wald84} the authors of \citep{KayWald91} showed that the Hartle-Hawking-Israel state is invariant under the action of the group of isometries generated by its Killing vector and it is a KMS state, with ``temperature'' given by the surface gravity of the horizon. In spite of that, the existence of the Hartle-Hawking-Israel state was only proved more recently \citep{Sanders13}, where it was also shown that this state is Hadamard. Also, the existence and Hadamard property of the Unruh state in the Schwarzschild spacetime have only recently been rigorously established \citep{DappiaggiMorettiPinamonti09}.

Another spacetime possessing a bifurcate Killing horizon is the de Sitter spacetime. The vacuum state invariant under the action of the isometries generated by the group of symmetries of this maximally symmetric spacetime was constructed in \citep{Allen85}. The association of a ``temperature'' to the surface gravity of this horizon, analogously to the case of the black hole horizon, was established in \citep{GibbonsHawking77}. It was shown long ago that there exists only one Hadamard state which can be extended to the whole Kruskal extension of this spacetime and is invariant under the action of the isometries generated by the Killing vector which also generates the bifurcate horizon. This state satisfies the KMS condition with ``temperature'' given by the surface gravity of the horizon \citep{NarnhoferPeterThirring96}. Thus the Unruh state can be defined in the completely extended de Sitter spacetime, and it is KMS everywhere.

The Schwarzschild-de Sitter spacetime, describing a universe with both a static black hole and a cosmological constant \citep{LakeRoeder77,BazanskiFerrari86}, possesses a pair of bifurcate Killing horizons, each with a different surface gravity. A state invariant under the action of the isometries generated by the Killing vector cannot satisfy the KMS condition, as proved in \citep{KayWald91}. Furthermore, since the Kruskal extension of this spacetime gives rise to an infinite diagram, the mere existence of the Hartle-Hawking-Israel state would give rise to problems related to causality. Also the Unruh state in this spacetime would be problematic, for the same reason. This is clearly a restriction on the existence of an invariant Hadamard state in this spacetime but, as we show in this paper, this restriction does not represent an impossibility.

We will construct here a Hadamard state in the spherically symmetric Schwarzschild-de Sitter spacetime. To our knowledge, this is the first explicit example of such a state in this spacetime. Our state will neither be a KMS state nor be defined in the whole of its Kruskal extension. Therefore, our state can neither be interpreted as the Hartle-Hawking-Israel state in this spacetime, nor as the Unruh state. We will construct the state solely from the geometrical features of the spacetime, using the bulk-to-boundary technique \citep{DappiaggiMorettiPinamonti06,DappiaggiMorettiPinamonti09b,DappiaggiMorettiPinamonti09c,DappiaggiMorettiPinamonti09} to show that it can be isometrically mapped to the tensor product of two states, each one defined on a subset of an event horizon, as shown in equation \eqref{state}. Since the event horizons constitute a Cauchy hypersurface for the regions of the spacetime where the state will be constructed, this result shows that the state is formally written in terms of its ``initial values''. Each one of these states defined on the horizons is a KMS state at ``temperature'' given by the surface gravity of the corresponding horizon. Since the surface gravities, at the two horizons, are different, the resulting state is not KMS. Moreover, we will use results of \citep{DafermosRodnianski07} and an adaptation of the argument presented in \citep{DappiaggiMorettiPinamonti09} to show that our state is Hadamard.

The organization of the paper is the following: In section \ref{sec_MathMethods} we will present the basic formalism of field quantization in globally hyperbolic spacetimes. Afterwards, in section \ref{sec_Sch-dS} we will present the geometrical features of the Schwarzschild-de Sitter spacetime, construct the Weyl algebra from the solutions of the Klein-Gordon equation and show how we can construct an invariant state on this algebra. Finally in section \ref{sec_Hadamard_condition} we will prove that this state is a Hadamard state. In section \ref{sec_concl} we present our conclusions. In \ref{theorem_state} we prove the existence of our state and in \ref{tech_results} we present the proofs of a couple of technical results.

\section{Scalar Field quantization in globally hyperbolic spacetimes}\label{sec_MathMethods}

We will now introduce the mathematical concepts needed for the assignment of a C$^{\ast}$-algebra to the space of solutions of the Klein-Gordon equation and to the definition of Hadamard states on this algebra.

\subsection{Wave equation in globally hyperbolic spacetimes and the {\it Weyl} algebra}\label{subsec_wave-eq}

Globally hyperbolic spacetimes $\M$ are smooth, orientable, time orientable and paracompact manifolds. They also possess smooth Cauchy hypersurfaces, which are achronal subsets $\Sigma\subset\M$ such that $\forall p\in\M$, every inextendible causal curve through $p$ intersects $\Sigma$ \citep{BernalSanchez03,Wald84}. They have the topological structure $\M = \mathbb{R}\times\Sigma$.

The {\it principal symbol} of a linear differential operator $P$ with real coefficients is the map
\[\sigma_{P}:\cT^{\ast}\M\rightarrow \textrm{Hom}(\mathbb{R},\mathbb{R}) \; ,\]
where Hom$(\mathbb{R},\mathbb{R})$ is the space of homomorphisms from $\mathbb{R}$ to  $\mathbb{R}$. For a neighborhood of a point $p\in\M$, take a local coordinate chart in which $P=\sum_{|\alpha|\leq k}A^{\alpha}\partial^{|\alpha|}/\partial x^{\alpha}$. For every $\xi=\sum_{l=0}^{3}\xi_{l}\cdot dx^{l}\in\cT^{\ast}_{p}\M$,
\[\sigma_{P}(\xi)\coloneqq\sum_{|\alpha|=k}\xi_{\alpha}A^{\alpha}(p) \; .\]
The principal symbol of a differential operator is independent of the coordinate chart chosen.

The zeroes of $\sigma_{P}$ outside of the zero section of the cotangent bundle, i.e., the points $(p,\xi)$ with $\xi\in\cT^{\ast}_{p}\M\diagdown\{0\}$ such that $\sigma_{P}(\xi)=0$, are called the {\it characteristics} of $P$. The curves in $\cT^{\ast}\M$ along which $\sigma_{P}$ vanishes identically are called the {\it bicharacteristics} of $P$.

A {\it normally hyperbolic operator} is a second-order differential operator $P$ whose principal symbol is given by the metric, i.e.,
\[\sigma_{P}(\xi)=g^{-1}(\xi,\xi)\cdot\textrm{id}_{\mathbb{R}} \; .\]
Hence the characteristics are the bundle of null cones $\mathcal{N}_{g}\subset{\mathcal T}^{\ast}\M\diagdown\{0\}$ defined by
\begin{equation}
\mathcal{N}_{g}\coloneqq \left\{(x,k_{x})\in{\mathcal T}^{\ast}\M\diagdown\{0\}\, |\, g^{\mu\nu}(x)(k_{x})_{\mu}(k_{x})_{\nu}=0\right\} \; .
\label{char_normhyp}
\end{equation}
The {\it bicharacteristic strip} generated by $(x,k_{x})\in\mathcal{N}_{g}$ is given by
\begin{equation}
B(x,k_{x})\coloneqq \left\{(x',k_{x'})\in\mathcal{N}_{g}\, |\, (x',k_{x'})\sim (x,k_{x})\right\} \; ,
\label{bichar_normhyp}
\end{equation}
where $(x',k_{x'})\sim (x,k_{x})$ means that there exists a null geodesic connecting $x'$ and $x$, $k_{x'}$ is the cotangent vector to this geodesic at $x'$ and $k_{x}$, its parallel transport, along this geodesic, at $x$.

A {\it wave equation} is an equation of the form $Pu=f$, where $P$ is a normally hyperbolic operator, the right hand side $f$ is given and the distribution $u$ is to be determined. It is well known that the wave equation of a massive scalar field in a globally hyperbolic spacetime admits unique retarded and advanced fundamental solutions, which are maps $\mathds{E}^{\pm}:\cC_{0}^{\infty}(\M,\mathbb{K})\rightarrow\cC^{\infty}(\M,\mathbb{K})$, such that, for $f\in\cC_{0}^{\infty}(\M,\mathbb{K})$ ($\mathbb{K}$ is either $\mathbb{R}$ or $\mathbb{C}$),
\begin{equation}
 \left(\Box +m^{2}\right)\mathds{E}^{\pm}f=\mathds{E}^{\pm}\left(\Box +m^{2}\right)f=f
 \label{KGfund}
\end{equation}
and
\[\textrm{supp}(\mathds{E}^{\pm}f)\subset J^{\pm}(\textrm{supp}f) \; .\]

The functions $f\in\cC_{0}^{\infty}(\M,\mathbb{K})$ are called test functions and we will denote the differential operator $\Box +m^{2}$ by $P$. From the fundamental solutions, one defines the {\it advanced-minus-retarded operator} $\mathds{E}\coloneqq \mathds{E}^{-}-\mathds{E}^{+}$ as a map $\mathds{E}:C_{0}^{\infty}(\M,\mathbb{K})\rightarrow C^{\infty}(\M,\mathbb{K})$, and the antisymmetric form
\begin{equation}
 \sigma(f,f')\coloneqq -\int\textrm{d}^{4}x\sqrt{|g|}\, f(x)(\mathds{E}f')(x) \eqqcolon -E(f,f') \; ,
 \label{symplform}
\end{equation}
where $f$ and $f'$ are test functions. Dimock \citep{Dimock80} showed that this antisymmetric form can be equivalently constructed using the initial-value fields and that it does not dependend on the Cauchy hypersurface on which it is calculated.

This antisymmetric form is degenerate because, if $f$ and $f'$, both elements of $\cC_{0}^{\infty}(\M,\mathbb{K})$, are related by $f=Pf'$, then $\forall f'' \in \cC_{0}^{\infty}(\M,\mathbb{K})$ we have
\[\sigma(f'',f)=0 \; .\]
Therefore the domain of the antisymmetric form must be replaced by the quotient space\footnote{$\textrm{Ran}P$ is the {\it range} of the operator $P$, that is, the elements $f\in\cC_{0}^{\infty}(\M,\mathbb{K})$ such that $f=Ph$ for some $h\in\cC_{0}^{\infty}(\M,\mathbb{K})$. Moreover, $\textrm{Ker}E=\textrm{Ran}P$.} $\cC_{0}^{\infty}(\M,\mathbb{K})/\textrm{Ran}P$. We thus define the real vector space $L\coloneqq \textrm{Re}\left(\mathcal{C}_{0}^{\infty}(\M,\mathbb{R})/\textrm{Ran}P\right)$. Hence $(L,\sigma)$ is a real symplectic space where $\sigma$ is the symplectic form. From the elements of this real symplectic space one can define the symbols $W(f)$, $f\in L$, satisfying
\begin{enumerate}[label=(\Roman{*})]
 \item $W(0)=\mathds{1}$;
 \item $W(-f)=W(f)^{*}$;
 \item For $f,g\in L$, $W(f)W(g)=e^{-i\frac{\sigma(f,g)}{2}}W(f+g)$.
\end{enumerate}
The relations (II) and (III) are known as {\it Weyl relations}. The algebra constructed from the formal finite sums
\[\W(L,\sigma)\coloneqq\sum_{i}a_{i}W(f_{i})\]
admits a unique $C^{\ast}$-norm \citep{BraRob-II}. The completion of this algebra with respect to this norm is the so-called {\it Weyl algebra}. From the nondegenerateness of the symplectic form one sees that $W(f)=W(g)$ iff $f=g$.

\subsection{Quasifree states and the Hadamard condition}\label{secstates}

States $\omega$ are linear, positive-semidefinite and normalized functionals over the C$^{\ast}$-algebra $\W$. Throughout this work we will focus on states which are completely described by their two-point functions, the so-called {\it quasifree states}. All odd-point functions of such states vanish identically and the higher even-point functions can be written as combinations of the two-point function \citep{ManuceauVerbeure68,ArakiShiraishi71}.


The two-point function of a state $\omega$ can be decomposed in its symmetric and anti-symmetric parts. For $f_{1},f_{2}\in L$,
\begin{equation}
w_{\omega}^{(2)}(f_{1},f_{2})=\mu(f_{1},f_{2})+\frac{i}{2}\sigma(f_{1},f_{2}) \; ,
\label{quasifree-2ptfcn}
\end{equation}
where $\mu(\cdot,\cdot)$ is a real linear symmetric product which majorizes the symplectic product, i.e.
\begin{equation}
|\sigma(f_{1},f_{2})|^{2}\leq 4\mu(f_{1},f_{1})\mu(f_{2},f_{2}) \; .
\label{mu_geq_sigma}
\end{equation}
The state is pure if and only if the inequality above is saturated, i.e., $\forall f_{1}\in L$,
\begin{equation}
\mu(f_{1},f_{1})=\frac{1}{4}\underset{f_{2}\neq 0}{\textrm{l.u.b.}}\frac{|\sigma(f_{1},f_{2})|^{2}}{\mu(f_{2},f_{2})} \; ,
\label{pure-state}
\end{equation}
where l.u.b. is the {\it least upper bound}. Since the symplectic form is uniquely determined, the characterization of the quasifree state amounts to the choice of the real linear symmetric product $\mu$. The definition of the {\it one-particle structure}, which we will present now, shows that the choice of $\mu$ is equivalent to the choice of a Hilbert space: Consider a real vector space $L$ on which are defined both a bilinear symplectic form, $\sigma$, and a bilinear positive symmetric form, $\mu$, satisfying \eqref{mu_geq_sigma}. Then, one can always find a complex Hilbert space $\mathscr{H}$, with scalar product $\langle\cdot|\cdot\rangle_{\mathscr{H}}$, together with a real linear map $K:L\rightarrow\mathscr{H}$ such that \citep{KayWald91}
\begin{tabbing}
\= (iii) \=  $\sigma(f_{1},f_{2})=2\textrm{Im}\langle Kf_{1}|Kf_{2} \rangle_{\mathscr{H}}$, $\forall f_{1},f_{2}\in L$. \kill
\> (i) \>  the complexified range of $K$, $KL+iKL$, is dense in $\mathscr{H}$; \\
\> (ii) \>  $\mu(f_{1},f_{2})=\textrm{Re}\langle Kf_{1}|Kf_{2} \rangle_{\mathscr{H}}$, $\forall f_{1},f_{2}\in L$; \\
\> (iii) \>  $\sigma(f_{1},f_{2})=2\textrm{Im}\langle Kf_{1}|Kf_{2} \rangle_{\mathscr{H}}$, $\forall f_{1},f_{2}\in L$.
\end{tabbing}
The pair $(K,\mathscr{H})$ is uniquely determined up to an isomorphism, and it is called the {\it one-particle structure}. Moreover, we have $w_{\omega}^{(2)}(f_{1},f_{2})=\langle Kf_{1}|Kf_{2} \rangle_{\mathscr{H}}$ and the quasifree state with this two-point function is pure if and only if $KL$ alone is dense in $\mathscr{H}$.

The concept of Hadamard states is reminiscent of the spectral condition in Minkowski spacetime. There the spectral condition provides sufficient control on the singularities of the $n$-point functions, opening the possibility of extending the states to correlation functions of nonlinear functions of the field as, e.g., the energy momentum tensor. These nonlinear functions are incorporated, in Minkowski spacetime, by means of normal ordering and the Wick product \citep{StreaterWightman64}. The first rigorous form of the two-point function $w^{(2)}$ of a Hadamard state was given by Kay and Wald \citep{KayWald91} as a restriction on the singularity structure of $w^{(2)}$. Remarkably, the singular part of $w^{(2)}$ is a purely geometrical term, and it amounts to the antisymmetric part of the two-point function. The dependence on the state is contained in the smooth symmetric part, whence it is possible to define the renormalized quantum field theory for the whole class of Hadamard states at once. The Hadamard condition, as presented by \citep{KayWald91}, makes explicit use of a coordinate system. A purely geometrical characterization of Hadamard states was only achieved in the works of Radzikowski (with the collaboration of Verch) \citep{Radzikowski96,RadzikowskiVerch96}, where the Hadamard condition was written in terms of the wave front set of the two-point function corresponding to the state. We will now introduce the concept of {\it wave front sets}.

Let $v$ be a distribution of compact support and $\hat{v}(k)$ its Fourier transform. If $\forall N \in \mathbb{N}_{0} \; ,\, \exists C_{N} \in \mathbb{R}_{+}$ such that
\begin{equation}
 \lvert \hat{v}(k) \rvert \leqslant C_{N}\left(1+\lvert k \rvert\right)^{-N} \; ,\, k \in \mathbb{R}^{n} \; ,
 \label{regsupp}
\end{equation}
then $v$ is in $\cC_{0}^{\infty}(\mathbb{R}^{n},\mathbb{K})$. If for a $k \in \mathbb{R}^{n}\diagdown\{0\}$ there exists a cone $V_{k}$ such that for every $p\in V_{k}$ \eqref{regsupp} holds, then $k$ is a direction of rapid decrease for $v$. Accordingly, the singular support ({\it singsupp}) of $v$ is defined as the set of points having no neighborhood where $v$ is in $\cC^{\infty}$. Moreover, we define the cone $\Sigma(v)$ as the set of points $k \in \mathbb{R}^{n}\diagdown\{0\}$ having no conic neighborhood $V$ such that \eqref{regsupp} is valid when $k \in V$.

For a general distribution $u \in \left(\cC_{0}^{\infty}\right)'(X,\mathbb{K})$, where $X$ is an open set in $\mathbb{R}^{n}$ and $\phi \in \cC_{0}^{\infty}(X,\mathbb{R})$, $\phi(x)\neq 0$, we define
\[\Sigma_{x}(u) \coloneqq \bigcap_{\phi}\Sigma(\phi u) \; .\]

\begin{mydef}\label{smooth-wf}
 If $u \in \left(\cC_{0}^{\infty}\right)'(X,\mathbb{K})$ then the {\it wave front set} of $u$ is the closed subset of $X \times (\mathbb{R}^{n}\diagdown\{0\})$ defined by
\[WF(u)=\{(x,k) \in X \times (\mathbb{R}^{n}\diagdown\{0\}) \arrowvert \, x \in \textrm{\it singsupp }u \, ,\, k \in \Sigma_{x}(u)\} \; .\]
\end{mydef}
In \citep{Hormander-I} it was proved that the wave front set of a distribution $u$ defined on a smooth manifold $X$ is a closed subset of $\cT^{\ast}X\diagdown\{0\}$ which is conic in the sense that the intersection with the vector space $\cT^{\ast}_{x}X$ is a cone for every $x\in X$. The restriction to a coordinate patch $X_{\kappa}$ is equal to $\kappa^{\ast}WF(u\circ\kappa^{-1})$.

The authors of \citep{DuistermaatHormander72} proved that the singularities of the solutions of a differential operator $P$ with real principal symbol propagate along the bicharacteristics of $P$. This implies that through every point in {\it singsupp} of $u$ ($u$ is a distribution satisfying $Pu=f\in\cC^{\infty}(\M,\mathbb{K})$) there is a bicharacteristic curve which stays in the {\it singsupp}.

Moreover, the product of two distributions $u,v\in\left(\cC_{0}^{\infty}\right)'(X,\mathbb{K})$ can be defined,  unless $(x,\xi)\in Wf(u)$ and $(x,-\xi)\in Wf(v)$ for some $(x,\xi)$ \citep{Hormander-I}. Then
\begin{align*}
WF(uv)\subset\{(x,\xi+\eta);\, &(x,\xi)\in WF(u)\textrm{ or }\xi=0, \\
&(x,\eta)\in WF(v)\textrm{ or }\eta=0\} \; .
\end{align*}

We finally present the definition of Hadamard states in terms of the wave front set of its two-point function:
\begin{mydef}\label{Hadamard-wf}
A quasifree state $\omega$ is said to be a {\it Hadamard state} if its two-point distribution $\omega_{2}$ has the following wave front set:
\begin{equation}
 WF(\omega_{2})=\left\{\left(x_{1},k_{1};x_{2},-k_{2}\right) | \left(x_{1},k_{1};x_{2},k_{2}\right)\in {\mathcal T}^{*}\left(\M\times\M\right) \diagdown \{0\} ; (x_{1},k_{1})\sim (x_{2},k_{2}) ; k_{1}\in \overline{V}_{+}\right\} \; ,
 \label{Wfcond}
\end{equation}
where $(x_{1},k_{1})$ and $(x_{2},k_{2})$ are as in the definition of the bicharacteristic strip \ref{bichar_normhyp} and $\overline{V}_{+}$ is the closed forward light cone of $\mathcal{T}^{*}_{x_{1}}\M$.
\end{mydef}
To facilitate the writing, we will call this set $C^{+}$ and say that a quasifree state is Hadamard if its two-point function has this wave front set:
\begin{equation}
WF(\omega_{2})=C^{+}\; .
\label{HadWF-C+}
\end{equation}

We will finish this preliminary subsection with the KMS condition. The states which satisfy this condition generalize the concept of thermal states to situations where the density matrix cannot be defined \citep{Haag96}.

In the usual study of nonrelativistic statistical mechanics, the density matrix $\rho_{\beta}$, where $\beta=T^{-1}$, is defined as a trace-class operator with trace $\textrm{tr }\rho_{\beta}=1$. The expectation value of a bounded operator $A$ is given by $\omega_{\beta}(A)=\textrm{tr }\rho_{\beta}A$. If one considers the time evolution of $A$ we get, for $B$ another bounded operator,
\begin{equation}
\omega_{\beta}(\alpha_{t}(A)B)=\omega_{\beta}(B\alpha_{t+i\beta}(A)) \; .
\label{KMS-density}
\end{equation}

The KMS condition, named after Kubo, Martin and Schwinger, comes from the observation made by the authors of \citep{HHW67} (see also \citep{BraRob-II}) that the above equality remains valid even when one cannot define a density matrix. Further properties of KMS states, also in curved spacetimes, can be found in the recent review \citep{Sanders13b}.

\section{Hadamard state in Schwarzschild-de Sitter spacetime}\label{sec_Sch-dS}

\subsection{Schwarzschild-de Sitter Spacetime}

The Schwarzschild-de Sitter (SdS) spacetime is a spherically symmetric solution of the Einstein equations in the presence of a positive cosmological constant. Its metric, in the coordinates $(t,r,\theta,\varphi)$, has the form \citep{GriffithsPodolsky09}
\begin{equation}
ds^{2}=\left(1-\frac{2M}{r}-\frac{\Lambda}{3}r^{2}\right)dt^{2}-\left(1-\frac{2M}{r}-\frac{\Lambda}{3}r^{2}\right)^{-1}dr^{2}-r^{2}(d\theta^{2}+\sin^{2}(\theta)d\varphi^{2}) \; ,
\label{SdS-metric}
\end{equation}
where $M>0$ is the black hole mass and $\Lambda$ is the cosmological constant (we will consider only $\Lambda>0$, the other case being the so-called {\it Anti-de Sitter spacetime}). The coordinates $(\theta,\varphi)$ have the usual interpretation of polar angles. If $3M\sqrt{\Lambda}<1$, $F(r)\coloneqq\left(1-\frac{2M}{r}-\frac{\Lambda}{3}r^{2}\right)$ has two distinct positive real roots, corresponding to the horizons. Defining $\xi=\arccos(-3M\sqrt{\Lambda})$ ($\pi<\xi<3\pi/2$), the positive roots are located at
\begin{align}
r_{b}&=\frac{2}{\sqrt{\Lambda}}\cos\left(\frac{\xi}{3}\right) \; ; \nonumber \\
r_{c}&=\frac{2}{\sqrt{\Lambda}}\cos\left(\frac{\xi}{3}+\frac{4\pi}{3}\right) \; ,
\label{event_hor}
\end{align}
while the negative real root is located at
\[r_{-}=\frac{2}{\sqrt{\Lambda}}\cos\left(\frac{\xi}{3}+\frac{2\pi}{3}\right)=-(r_{b}+r_{c}) \; .\]
One can easily see that $2M<r_{b}<3M<r_{c}$ \citep{LakeRoeder77}. The horizon located at $r_{b}$ is a black hole horizon. One can see that $\lim_{\Lambda\rightarrow 0}r_{b}=2M$ and $\lim_{M\rightarrow 0}r_{b}=0$. On the other hand, the horizon located at $r_{c}$ is a cosmological horizon, $\lim_{\Lambda\rightarrow 0}r_{c}=\infty$ and $\lim_{M\rightarrow 0}r_{c}=\sqrt{3/\Lambda}$.

One can see from equation \eqref{SdS-metric} that the character of the coordinates $t$ and $r$ changes as one crosses the horizons. For $r_{b}<r<r_{c}$, $F(r)>0$ and $t$ is a timelike coordinate, $r$ being spacelike. If either $r<r_{b}$ or $r>r_{c}$, $F(r)<0$, $t$ becomes a spacelike coordinate and $r$, a timelike coordinate. Besides, it is immediate to see that the vector $X=\frac{\partial}{\partial_{t}}$ is a Killing vector. For $r_{b}<r<r_{c}$, the Killing vector is a timelike vector, thus this region of spacetime is a static region. If either $r<r_{b}$ or $r>r_{c}$, this vector becomes spacelike. Thus these are not static regions. On the horizons $r=r_{b}$ or $r=r_{c}$, $X$ is a null vector. There exists a constant $\kappa$, the {\it surface gravity}, defined on the horizon and also constant along the orbits of $X$, such that \citep{Wald84}
\[\kappa^{2}=-\frac{1}{2}(\nabla^{a}X^{b})(\nabla_{a}X_{b}) \; .\]
The surface gravities on each of the horizons in the SdS spacetime are given by
\begin{align}
\kappa_{b}&=(r_{c}-r_{b})(r_{c}+2r_{b})\frac{\Lambda}{6r_{b}} \; ; \nonumber \\
\kappa_{c}&=(r_{c}-r_{b})(2r_{c}+r_{b})\frac{\Lambda}{6r_{c}} \; .
\label{surface_grav}
\end{align}
It is immediate to see that $\kappa_{b}>\kappa_{c}$.

The metric \eqref{SdS-metric} is not regular at the horizons. 
As shown in \citep{BazanskiFerrari86}, one cannot obtain a coordinate system in which the metric is regular at both horizons. However, we can construct a pair of coordinate systems such that each renders the metric regular at one of the horizons. First, we define the usual tortoise coordinate $r_{\ast}$:
\begin{equation}
r_{\ast}=\int\frac{dr}{F(r)}=\frac{1}{2\kappa_{b}}\log\left(\frac{r}{r_{b}}-1\right)-\frac{1}{2\kappa_{c}}\log\left(1-\frac{r}{r_{c}}\right)-\frac{1}{2}\left(\frac{1}{\kappa_{b}}-\frac{1}{\kappa_{c}}\right)\log\left(\frac{r}{r_{b}+r_{c}}+1\right) \; .
\label{tortoise}
\end{equation}
It maps the region $r\in(r_{b},r_{c})$ into $r^{\ast}\in(-\infty,+\infty)$. We define null coordinates as $u=t-r_{\ast}$, $v=t+r_{\ast}$. The coordinate system which renders the metric regular at $r=r_{b}$ is defined as \citep{ChoudhuryPadmanabhan07}
\begin{equation}
U_{b}\coloneqq \frac{-1}{\kappa_{b}}e^{-\kappa_{b}u} \quad ; \quad V_{b}\coloneqq \frac{1}{\kappa_{b}}e^{\kappa_{b}v} \; .
\label{coord_UbVb}
\end{equation}
Since $u,v\in(-\infty,+\infty)$, $U_{b}\in(-\infty,0)$ and $V_{b}\in(0,+\infty)$. In these coordinates, the metric becomes (neglecting the angular part)
\begin{equation}
ds^{2}=\frac{2M}{r}\left(1-\frac{r}{r_{c}}\right)^{1+\kappa_{b}/\kappa_{c}}\left(\frac{r}{r_{b}+r_{c}}+1\right)^{2-\kappa_{b}/\kappa_{c}}dU_{b}dV_{b} \; .
\label{metric_UbVb}
\end{equation}
This expression is regular at $r_{b}$, but not at $r_{c}$. Therefore, we can extend $U_{b}$ to positive values and $V_{b}$ to negative values across the horizon at $r_{b}$. In this coordinate system, the metric covers the whole region $(0,r_{c})$ regularly. The Kruskal extension of this region is similar to the corresponding extension of the Schwarzschild spacetime \citep{Wald84} and is shown in figure \ref{Kruskal_rb} below.

\begin{figure}[h]
\begin{center}
\begin{tikzpicture}
\draw[<-] (-3,0) -- (3,0);
\draw[->] (0,-3) -- (0,3);
\draw[thin] (-2.5,-2.5) -- node[near end,sloped,above] {$r=r_{b}$} (2.5,2.5);
\draw[thin] (2.5,-2.5) -- node[near start,sloped,below] {$r=r_{b}$} (-2.5,2.5);
\draw (-2,2.5) sin (0,2) cos (2,2.5); curve
\draw (-2,-2.5) sin (0,-2) cos (2,-2.5); curve
\node[above] at (1.5,0) {$I$};
\node[above] at (-1.5,0) {$IV$};
\node at (0,1.5) {$II$};
\node at (0,-1.5) {$III$};
\node[above] at (0,2) {\small $r=0$};
\node[below] at (0,-2) {\small $r=0$};
\node[left] at (-3,0) {$U_{b}$};
\node[above] at (0,3) {$V_{b}$};
\end{tikzpicture}
\end{center}
\caption{Conformal diagram of the Schwarzschild-de Sitter spacetime, extended only across the horizon at $r=r_{b}$.}
\label{Kruskal_rb}
\end{figure}
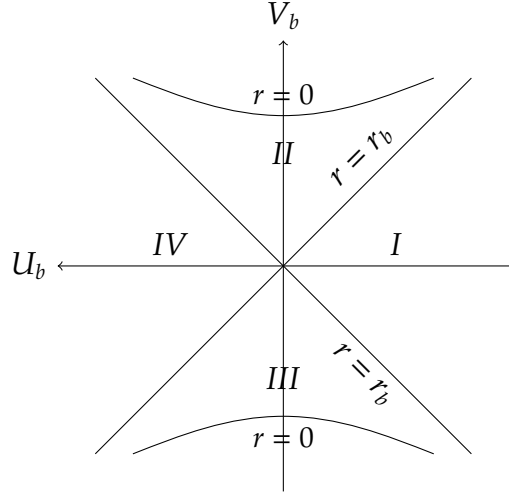

The region $I$ in figure \ref{Kruskal_rb} is the exterior region. Asymptotically, it tends to $r=r_{c}$. We call attention to the fact that $U_{b}$ increases to the left. Region $II$ is the black hole region. Any infalling observer initially at $I$ will fall inside this region and reach the singularity at $r=0$. Regions $III$ and $IV$ are copies of $II$ and $I$, the only difference being that, in these regions, time runs in the opposite direction.

Similarly, we define the coordinate system which renders the metric regular at $r=r_{c}$:
\begin{equation}
U_{c}\coloneqq \frac{1}{\kappa_{c}}e^{\kappa_{c}u} \quad ; \quad V_{c}\coloneqq \frac{-1}{\kappa_{c}}e^{-\kappa_{c}v} \; ,
\label{coord_UcVc}
\end{equation}
where $U_{c}\in(0,+\infty)$ and $V_{c}\in(-\infty,0)$. In these coordinates,
\begin{equation}
ds^{2}=\frac{2M}{r}\left(\frac{r}{r_{b}}-1\right)^{1+\kappa_{c}/\kappa_{b}}\left(\frac{r}{r_{b}+r_{c}}+1\right)^{2-\kappa_{c}/\kappa_{b}}dU_{c}dV_{c} \; .
\label{metric_UcVc}
\end{equation}
This expression is regular at $r_{c}$, but not at $r_{b}$. Now, extending $U_{c}$ to negative values and $V_{c}$ to positive values across the horizon at $r_{c}$, the metric covers the region $(r_{b},\infty)$ regularly. The Kruskal extension of this region is similar to the corresponding extension of the de Sitter spacetime \citep{GibbonsHawking77} and is shown in figure \ref{Kruskal_rc} below.

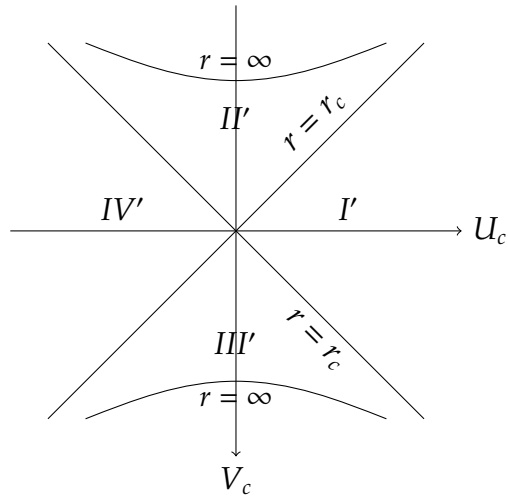
\begin{figure}[h]
\begin{center}
\begin{tikzpicture}
\draw[->] (-3,0) -- (3,0);
\draw[<-] (0,-3) -- (0,3);
\draw[thin] (-2.5,-2.5) -- node[near end,sloped,above] {$r=r_{c}$} (2.5,2.5);
\draw[thin] (2.5,-2.5) -- node[near start,sloped,below] {$r=r_{c}$} (-2.5,2.5);
\draw (-2,2.5) sin (0,2) cos (2,2.5); curve
\draw (-2,-2.5) sin (0,-2) cos (2,-2.5); curve
\node[above] at (1.5,0) {$I'$};
\node[above] at (-1.5,0) {$IV'$};
\node at (0,1.5) {$II'$};
\node at (0,-1.5) {$III'$};
\node[above] at (0,2) {\small $r=\infty$};
\node[below] at (0,-2) {\small $r=\infty$};
\node[right] at (3,0) {$U_{c}$};
\node[below] at (0,-3) {$V_{c}$};
\end{tikzpicture}
\end{center}
\caption{Conformal diagram of the Schwarzschild-de Sitter spacetime, extended only across the horizon at $r=r_{c}$.}
\label{Kruskal_rc}
\end{figure}

The region $I'$ in figure \ref{Kruskal_rc} is identical to region $I$ in figure \ref{Kruskal_rb}. Asymptotically, it tends to $r=r_{b}$. We call the attention to the fact that, now, $V_{c}$ increases downwards. Region $II'$ is the region exterior to the cosmological horizon. Any outwards directed observer initially at $I'$ will fall inside this region and reach the singularity at $r=\infty$. Regions $III'$ and $IV'$ are copies of $II'$ and $I'$, the only difference being that, in these regions, times runs in the opposite direction.

The authors of \citep{BazanskiFerrari86} have also shown that transformations of coordinates of the form \eqref{coord_UbVb} and \eqref{coord_UcVc} are the only ones which give rise to expressions for the metric that are regular at each of the horizons. 

To obtain a maximally extended diagram, we first identify the regions $I$ and $I'$ of figures \ref{Kruskal_rb} and \ref{Kruskal_rc}, respectively. The wedges $IV$ and $IV'$ are also identical, hence we can combine new diagrams, identifying these wedges with the newly introduced wedges $IV'$ and $IV$, respectively. Now, the wedges $I$ and $I'$ can be combined with new wedges $I'$ and $I$, and this process is repeated indefinitely. Thus the maximally extended diagram is an infinite chain. In figure \ref{Sch-dS_conformal_diagram} below we depict part of the Penrose diagram of this maximally extended manifold (where we will also rename some of the regions):

\begin{figure}[h]
\begin{center}
\begin{tikzpicture}
\draw (0,0) -- (2,2) -- node[sloped,below] {\small $\Hor_{b}^{+}$} (4,0) -- (2,-2) -- (0,0); 
\draw[fill=gray!60!] (4,0) -- node[sloped,below] {\small $\Hor_{b}^{0}$} (6,2) -- node[sloped,below] {\small $\Hor_{c}^{0}$} (8,0) -- node[sloped,above] {\small $\Hor_{c}^{-}$} (6,-2) -- node[sloped,above] {\small $\Hor_{b}^{-}$} (4,0); 
\draw (8,0) -- node[sloped,below] {\small $\Hor_{c}^{+}$} (10,2) -- (12,0) -- (10,-2) -- (8,0); 
\draw (14,2) -- (12,0) -- (14,-2); 
\draw (-2,2) -- (0,0) -- (-2,-2); 
\draw (-2,2) -- (2,2);
\draw (-2,-2) -- (2,-2);
\draw (6,2) -- (10,2);
\draw (6,-2) -- (10,-2);
\draw (10,2) sin (12,1.8) cos (14,2); curve
\draw (10,-2) sin (12,-1.8) cos (14,-2); curve
\node[below=2pt] at (4,0) {\small $\cB_{b}$};
\node[below=2pt] at (8,0) {\small $\cB_{c}$};
\draw [fill=gray!20!] (2,2) sin (4,1.8) cos (6,2) -- (4,0) -- (2,2); curve
\draw [fill=gray!20!] (6,2) -- (10,2) -- (8,0) -- (6,2);
\draw (2,-2) sin (4,-1.8) cos (6,-2) -- (4,0) -- node[sloped,above] {\small $\Hor_{b}^{0-}$} (2,-2); curve
\draw (6,-2) -- (10,-2) -- node[sloped,above] {\small $\Hor_{c}^{0-}$} (8,0) -- (6,-2);
\draw (-2,0) sin (6,-1.5) cos (14,0); curve
\node at (12,-1) {$\Sigma$};
\node at (2,0) {$IV$};
\node at (10,0) {$IV'$};
\node at (4,1.6) {\small $r=0$};
\node at (4,-1.6) {\small $r=0$};
\node at (8,1.7) {\small $r=\infty$};
\node at (8,-1.7) {\small $r=\infty$};
\node [left] at (-2,0) {$\cdots$};
\node [right] at (14,0) {$\cdots$};
\end{tikzpicture}
\end{center}
\caption{Maximally extended conformal diagram of the Schwarzschild-de Sitter spacetime.}
\label{Sch-dS_conformal_diagram}
\end{figure}
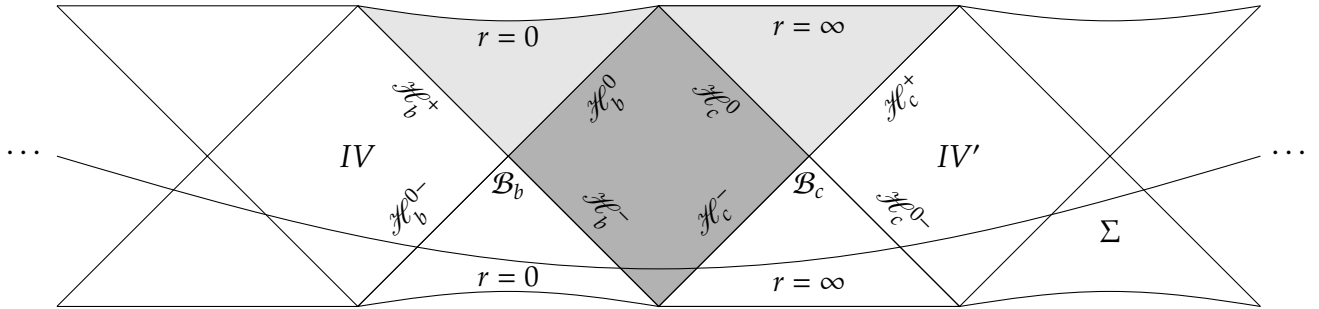

The region between the horizons (in dark gray color in figure \ref{Sch-dS_conformal_diagram}) is denoted by $\sD$. The black-hole horizon is located at the surfaces denoted by $\Hor_{b}^{\pm (0,0-)}$, and the cosmological horizon, at $\Hor_{c}^{\pm (0,0-)}$. The horizons are defined by:
\[\Hor_{b}^{+}:\{(U_{b},V_{b},\theta,\varphi)\in\mathbb{R}^{2}\times\mathbb{S}^{2};\, U_{b}>0,V_{b}=0\}\; ; \; \Hor_{b}^{-}:\{(U_{b},V_{b},\theta,\varphi)\in\mathbb{R}^{2}\times\mathbb{S}^{2};\, U_{b}<0,V_{b}=0\}\; ;\]
\[\Hor_{b}^{0}:\{(U_{b},V_{b},\theta,\varphi)\in\mathbb{R}^{2}\times\mathbb{S}^{2};\, U_{b}=0,V_{b}>0\}\; ; \; \Hor_{b}^{0-}:\{(U_{b},V_{b},\theta,\varphi)\in\mathbb{R}^{2}\times\mathbb{S}^{2};\, U_{b}=0,V_{b}<0\}\; ;\]
\[\Hor_{c}^{+}:\{(U_{c},V_{c},\theta,\varphi)\in\mathbb{R}^{2}\times\mathbb{S}^{2};\, U_{c}=0,V_{c}>0\}\; ; \; \Hor_{c}^{-}:\{(U_{c},V_{c},\theta,\varphi)\in\mathbb{R}^{2}\times\mathbb{S}^{2};\, U_{c}=0,V_{c}<0\}\; ;\]
\[\Hor_{c}^{0}:\{(U_{c},V_{c},\theta,\varphi)\in\mathbb{R}^{2}\times\mathbb{S}^{2};\, U_{c}>0,V_{c}=0\}\; ; \Hor_{c}^{0-}:\{(U_{c},V_{c},\theta,\varphi)\in\mathbb{R}^{2}\times\mathbb{S}^{2};\, U_{c}<0,V_{c}=0\}\; .\]
The bifurcation spheres are defined by:
\[\cB_{b}:\{(U_{b},V_{b},\theta,\varphi)\in\mathbb{R}^{2}\times\mathbb{S}^{2};\, U_{b}=V_{b}=0\}\; ;\; \cB_{c}:\{(U_{c},V_{c},\theta,\varphi)\in\mathbb{R}^{2}\times\mathbb{S}^{2};\, U_{c}=V_{c}=0\}\; .\]
We note that the Killing vector $X$ vanishes on these spheres.

The completely extended manifold will be denoted by $\sK$, and $\Sigma$ is a smooth Cauchy surface of $\sK$. The Killing vector $X$ is timelike and future pointing in region $\sD$. This Killing vector is also timelike in the regions $IV$ and $IV'$, but past directed there. This Killing vector is spacelike in the light gray regions and in the regions opposed to them, with respect to the bifurcation spheres. An infalling observer initially in region $\sD$ will fall inside the light gray region to the left of $\sD$. This region will be denoted by $II$, and it represents the inside of the black hole. On the other hand, any outwards directed observer initially in $\sD$ will fall inside the ligh gray region to the right of $\sD$. This region will be denoted by $III$, and it represents the region exterior to the cosmological horizon. The regions opposed with respect to the bifurcation spheres will be denoted by $II'$ and $III'$. These regions are defined as follows:
\[\sD\coloneqq \{(U_{b},V_{b},\theta,\varphi)\in\mathbb{R}^{2}\times\mathbb{S}^{2};U_{b}<0,V_{b}>0\}\; ;\]
\begin{alignat*}{3}
II\coloneqq \{(U_{b},V_{b},\theta,\varphi)\in\mathbb{R}^{2}\times\mathbb{S}^{2};U_{b}>0,V_{b}>0\} \quad && ; \quad && II'\coloneqq \{(U_{c},V_{c},\theta,\varphi)\in\mathbb{R}^{2}\times\mathbb{S}^{2};U_{c}>0,V_{c}>0\} \; ; \\
III\coloneqq \{(U_{b},V_{b},\theta,\varphi)\in\mathbb{R}^{2}\times\mathbb{S}^{2};U_{b}<0,V_{b}<0\} \quad && ; \quad && III'\coloneqq \{(U_{c},V_{c},\theta,\varphi)\in\mathbb{R}^{2}\times\mathbb{S}^{2};U_{c}<0,V_{c}<0\} \; .
\end{alignat*}
The region $\sD$ can be equivalently defined by
\[\sD=\{(U_{c},V_{c},\theta,\varphi)\in\mathbb{R}^{2}\times\mathbb{S}^{2};U_{c}>0,V_{c}<0\} \; .\]

We will construct a Hadamard state in the region
\[\M\coloneqq II\cup \sD\cup II' \; .\]
Any past inextensible causal curve passing through any point of $\M$ passes through $\sB\cup\sC$, where $\sB:\{(U_{b},V_{b},\theta,\varphi)\in\mathbb{R}^{2}\times\mathbb{S}^{2};\, V_{b}=0\}=\Hor_{b}^{+}\cup\cB_{b}\cup\Hor_{b}^{-}$, and $\sC:\{(U_{c},V_{c},\theta,\varphi)\in\mathbb{R}^{2}\times\mathbb{S}^{2};\, U_{c}=0\}=\Hor_{c}^{+}\cup\cB_{c}\cup\Hor_{c}^{-}$. Since this surface is achronal, $\sB\cup\sC$ is a Cauchy surface for $\M$. We will also show how the state can be restricted to the past horizons $\sB\cup\sC$ and will investigate the physical properties of this restriction.

\subsection{Algebras and State}

\subsubsection{Algebras}\label{sec-Sch-dS-algebra}~\\

We will construct the Weyl algebra on the symplectic space given by the pair $\left(S(\M),\sigma_{\M}\right)$, where $S(\M)$ is the vector space of solutions of the Klein-Gordon equation having particular decaying properties (see complete definition below) and $\sigma_{\M}$ is the symplectic form constructed from the {\it advanced-minus-retarded operator} (see section \ref{subsec_wave-eq}). Dafermos and Rodnianski \citep{DafermosRodnianski07} showed that, if the solutions of the Klein-Gordon equation have smooth initial data on $\Sigma$, then there exist, due to spherical symmetry, a constant $c$ which depends on $M$ and $\Lambda$, and another constant $C$ depending on $M$, $\Lambda$, the geometry of $\Sigma\cap J^{-}(\sD)$ and on the initial values of the field, such that
\[|\phi_{l}(u,v)|\leq C(e^{-cv_{+}/l^{2}}+e^{-cu_{+}/l^{2}})\]
and
\begin{equation}
|\phi_{0}(u,v)-\uline{\phi}|\leq C(e^{-cv_{+}/l^{2}}+e^{-cu_{+}/l^{2}})
\label{Sch-dS_decay}
\end{equation}
are valid in $J^{+}(\Sigma)\cap\sD$. Here,
\[|\uline{\phi}|\leq \underset{x\in\Sigma}{\textrm{inf}}\, |\phi_{0}(x)|+C \; ,\]
$u_{+}=\mathrm{max}\{u,1\}$, $v_{+}=\mathrm{max}\{v,1\}$ and $l$ is the spherical harmonic. These bounds are also valid on the horizons. This feature will play a crucial role in the restriction of the algebra to the horizons and in the subsequent construction of the state. The regions $IV$ and $IV'$ are also static regions, with time running in the opposite direction. Therefore, this fast decay is also verified on $\Hor_{b}^{+}$ and on $\Hor_{c}^{+}$. Moreover, we make the further requirement that the solutions vanish at the bifurcation spheres $\cB_{b}$ and $\cB_{c}$ (as remarked in \citep{DafermosRodnianski07}, this requirement creates no additional complications).

The vector spaces of solutions in $\M$ and on the horizons are defined by
\begin{align}
S(\M):\Big\{&(\phi-\phi_{0})\, |\; \phi\in\cC^{\infty}(\M;\mathbb{R}), \Box_{g}\phi=0;\; \phi_{0}=\textrm{constant}\Big\} \; ,
\label{S_M} \\
S(\sB):\Big\{&(\phi-\phi_{0})\, |\; \phi\in\cC^{\infty}(\sB;\mathbb{R}), \Box_{g}\phi=0, \phi=0 \textrm{ at }\cB_{b};\; \phi_{0}=\textrm{constant;}\; \exists\exists C_{\phi}>0, C'>1 \nonumber \\
&\textrm{with } |\phi-\phi_{0}|<\frac{C_{\phi}}{U_{b}},\, |\partial_{U_{b}}\phi|<\frac{C_{\phi}}{U_{b}^{2}}\, \textrm{for } |U_{b}|>C'\Big\} \; ,
\label{S_sB} \\
S(\Hor_{b}^{-}):\Big\{&(\phi-\phi_{0})\, |\; \phi\in\cC^{\infty}(\Hor_{b}^{-};\mathbb{R}), \Box_{g}\phi=0;\; \phi_{0}=\textrm{constant;}\; \exists\exists C_{\phi}>0, C'>1 \nonumber \\
&\textrm{with } |\phi-\phi_{0}|<C_{\phi}e^{-u},\, |\partial_{u}\phi|<C_{\phi}e^{-u}\, \textrm{for } |u|>C'\Big\} \; ,
\label{S_Hor_b-} \\
S(\sC):\Big\{&(\phi-\phi_{0})\, |\; \phi\in\cC^{\infty}(\sC;\mathbb{R}), \Box_{g}\phi=0, \phi=0 \textrm{ at }\cB_{c};\; \phi_{0}=\textrm{constant;}\; \exists\exists C_{\phi}>0, C'>1 \nonumber\\
&\textrm{with } |\phi-\phi_{0}|<\frac{C_{\phi}}{V_{c}},\, |\partial_{V_{c}}\phi|<\frac{C_{\phi}}{V_{c}^{2}}\, \textrm{for } |V_{c}|>C'\Big\} \; ,
\label{S_sC} \\
S(\Hor_{c}^{-}):\Big\{&(\phi-\phi_{0})\, |\; \phi\in\cC^{\infty}(\Hor_{c}^{-};\mathbb{R}), \Box_{g}\phi=0;\; \phi_{0}=\textrm{constant;}\; \exists\exists C_{\phi}>0, C'>1 \nonumber \\
&\textrm{with } |\phi-\phi_{0}|<C_{\phi}e^{-v},\, |\partial_{v}\phi|<C_{\phi}e^{-v}\, \textrm{for } |v|>C'\Big\} \; .
\label{S_Hor_c-}
\end{align}

The Weyl algebras $\W(S(\M))$, $\W(S(\sB))$, $\W(S(\sC))$, $\W(S(\Hor_{b}^{-}))$, $\W(S(\Hor_{c}^{-}))$ (we will omit the $\sigma$'s to simplify the notation) are constructed from each of the symplectic spaces as explained in section \ref{subsec_wave-eq}.


The authors of \citep{DappiaggiMorettiPinamonti09} constructed the Unruh state in the Schwarzschild spacetime using the bulk-to-boundary technique. That state was defined in the union of the static region, the interior of the black hole and on the event horizon separating these regions. They defined the Weyl algebra from the symplectic space of solutions in these regions. Besides, they proved that this Weyl algebra is related by an injective isometric $\ast$-homomorphism to the tensor product of the Weyl algebras defined from the symplectic spaces of solutions on the past null horizons, the one corresponding to the past black hole and the other one, at null infinity. The proof presented there for the mapping from the algebra in the bulk to the algebra on the past black hole horizon can be repeated here {\it verbatim} not only to map $\W(S(\M))$ to $\W(S(\sB))$, but also to map $\W(S(\M))$ to $\W(S(\sC))$. Moreover, the verification that the decay estimates are correctly satisfied, which there required an additional Proposition to be proven, here is verified directly from the results of \citep{DafermosRodnianski07}. We note that the authors of \citep{DappiaggiMorettiPinamonti09} needed additional results to verify the decay estimate on null infinity. These are not necessary here. Therefore we will only state the theorem, knowing that the proof can be read from the proof of Theorem 2.1 in \citep{DappiaggiMorettiPinamonti09}.


\begin{thm}\label{algebra-horizon-isomorphism}
For every $\phi\in S(\M)$, let us define
\[\phi_{\sB}\coloneqq \phi_{\upharpoonright\sB} \quad ; \quad \phi_{\sC}\coloneqq \phi_{\upharpoonright\sC} \; .\]
Then, the following holds:

(a)\ \ The linear map
\[\Gamma:S(\M) \ni \phi \mapsto \left(\phi_{\sB},\phi_{\sC}\right)\]
is an injective symplectomorphism of $S(\M)$ into $S(\sB)\oplus S(\sC)$ equipped with the symplectic form, s.t., for $\phi,\phi'\in S(\M)$:
\begin{equation}
\sigma_{\M}(\phi,\phi')\coloneqq\sigma_{\sB}(\phi_{\sB},\phi'_{\sB})+\sigma_{\sC}(\phi_{\sC},\phi'_{\sC}) \; .
\label{sigma-B;C}
\end{equation}

(b)\ \ There exists a corresponding injective isometric $^{\ast}$-homomorphism
\[\iota:\W(S(\M))\rightarrow \W(S(\sB))\otimes\W(S(\sC)) \; ,\]
which is uniquely individuated by
\begin{equation}
\iota(W_{\M}(\phi))=W_{\sB}(\phi_{\sB})W_{\sC}(\phi_{\sC}) \; .
\label{iota-B;C}
\end{equation}
\end{thm}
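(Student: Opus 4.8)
The plan is to read part (a) as a statement about the characteristic initial-value problem on the null Cauchy surface $\sB\cup\sC$ and part (b) as a formal consequence of the functoriality of the Weyl construction. The conceptual anchor for (a) is Dimock's result \citep{Dimock80}, quoted in Section \ref{subsec_wave-eq}: the antisymmetric form built from the advanced-minus-retarded operator is obtained by integrating the symplectic current over \emph{any} Cauchy surface, independently of that choice. Since it was argued above that $\sB\cup\sC$ is an achronal Cauchy surface for $\M$, I would evaluate $\sigma_{\M}(\phi,\phi')$ on $\sB\cup\sC$ rather than on a spacelike slice. Because $\sB$ and $\sC$ are \emph{disjoint} null hypersurfaces, the current integral splits into a contribution supported on $\sB$ plus one supported on $\sC$, and on each null leaf the symplectic current collapses to the familiar boundary form in which only the derivative along the null generator survives. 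Identifying these two pieces with $\sigma_{\sB}(\phi_{\sB},\phi'_{\sB})$ and $\sigma_{\sC}(\phi_{\sC},\phi'_{\sC})$ respectively yields \eqref{sigma-B;C}.

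First I would check that $\Gamma$ is well defined, i.e.\ that $\phi_{\sB}\in S(\sB)$ and $\phi_{\sC}\in S(\sC)$. Smoothness of the restriction and validity of the Klein--Gordon equation on the horizons are immediate; the real content is the asymptotic decay built into \eqref{S_sB} and \eqref{S_sC}. This is precisely where the estimates of Dafermos and Rodnianski \eqref{Sch-dS_decay} enter: the exponential control in the null coordinates $u,v$, transported to the horizons through the exponential relations \eqref{coord_UbVb} and \eqref{coord_UcVc} between $(u,v)$ and the Kruskal coordinates, produces exactly the inverse-power bounds $|\phi-\phi_{0}|<C_{\phi}/U_{b}$ on $\sB$ and $|\phi-\phi_{0}|<C_{\phi}/V_{c}$ on $\sC$ (together with the matching bounds on the transverse derivatives), while the stipulated vanishing at $\cB_{b},\cB_{c}$ is imposed by hand. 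The same estimates guarantee that the improper integrals defining $\sigma_{\sB}$ and $\sigma_{\sC}$ over the noncompact generators converge absolutely, so that the splitting of the previous paragraph is legitimate rather than merely formal.

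Injectivity of $\Gamma$ I would read off directly from \eqref{sigma-B;C}: if $\phi_{\sB}=0$ and $\phi_{\sC}=0$ then $\sigma_{\M}(\phi,\phi')=0$ for every $\phi'$, and nondegeneracy of $\sigma_{\M}$ on $S(\M)$ forces $\phi\equiv0$; equivalently, this is uniqueness for the characteristic Cauchy problem with vanishing data on $\sB\cup\sC$. For part (b) I would invoke the functoriality of the Weyl functor: an injective symplectomorphism of real symplectic spaces lifts to a unique injective $\ast$-homomorphism of the associated Weyl algebras sending $W(\phi)$ to $W(\Gamma\phi)$, and this map is automatically isometric because, as recalled after the definition of $\W(L,\sigma)$, the $C^{\ast}$-norm on a Weyl algebra is unique. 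Composing with the canonical identification $\W(S(\sB)\oplus S(\sC))\cong\W(S(\sB))\otimes\W(S(\sC))$, under which $W(\psi\oplus\psi')\mapsto W_{\sB}(\psi)\,W_{\sC}(\psi')$, produces the homomorphism $\iota$ fixed by \eqref{iota-B;C}; its uniqueness follows because the Weyl generators span a dense subalgebra.

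The step I expect to be genuinely delicate is the first one in the second paragraph: showing that the Dafermos--Rodnianski decay, originally derived in $J^{+}(\Sigma)\cap\sD$, really controls the restrictions along the \emph{full} horizons $\sB$ and $\sC$ --- including the portions $\Hor_{b}^{+}$ and $\Hor_{c}^{+}$, whose decay is inherited from the static regions $IV$ and $IV'$ --- and that it converts into the precise Kruskal-coordinate bounds of \eqref{S_sB}--\eqref{S_sC} with enough room to spare to make the horizon symplectic integrals converge. It is exactly this asymptotic bookkeeping that, in the Schwarzschild case treated in \citep{DappiaggiMorettiPinamonti09}, required a separate proposition; here the two-horizon estimate \eqref{Sch-dS_decay} supplies it directly, which is why the remaining algebraic content of the theorem can be imported essentially verbatim.
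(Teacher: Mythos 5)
Your overall scaffolding matches the paper's: well-definedness of $\Gamma$ from the Dafermos--Rodnianski estimates \eqref{Sch-dS_decay} plus smoothness on $\sK$ (with regions $IV$, $IV'$ supplying the bounds on $\Hor_{b}^{+}$, $\Hor_{c}^{+}$), injectivity from nondegeneracy of $\sigma_{\M}$, and part (b) from lifting the symplectomorphism to the Weyl algebras using uniqueness of the $C^{\ast}$-norm (the paper invokes theorem 5.2.8 of \citep{BraRob-II}). The gap is at the decisive step, the identity \eqref{sigma-B;C}. You obtain it by invoking Dimock's Cauchy-surface independence for the surface $\sB\cup\sC$. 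But the result quoted in section \ref{subsec_wave-eq} concerns smooth \emph{spacelike} Cauchy hypersurfaces, where the unit normal and induced measure exist and where, for the spacelike-compact solutions $\mathds{E}f$, Stokes' theorem is applied in a compact region between two slices. On the null hypersurfaces $\sB$, $\sC$ the symplectic current must instead be defined by the separate expressions $r_{b}^{2}\int\left(\phi'\partial_{U_{b}}\phi-\phi\partial_{U_{b}}\phi'\right)dU_{b}\wedge d\mathbb{S}^{2}$ (and its $\sC$ analogue), and the assertion that these sum to the bulk symplectic form is not an instance of the cited theorem --- it \emph{is} the content of part (a). At the crucial point your argument is therefore circular.

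What actually has to be shown, and what the paper's proof does, is this: first, by Stokes--Poincar\'e, $\sigma_{\M}(\phi,\phi')$ equals the current integrated over the mixed Cauchy surface $\Hor_{b}^{+}\cup(\Sigma\cap\sD)\cup\Hor_{c}^{+}$, with $\Sigma$ the $t=0$ slice of $\sD$; second, the spacelike piece $\int_{\Sigma\cap\sD}$ equals the sum of the integrals over $\Hor_{b}^{-}$ and $\Hor_{c}^{-}$. The paper proves the second statement by evolving $\Sigma_{t}$ along the Killing flow and taking $t\rightarrow-\infty$, exhausting each past horizon by compact pieces, showing that the integral of the closed 3-form $\eta[\phi,\phi']$ over the interpolating null 3-surfaces vanishes in the limit, and controlling the residual tails near the noncompact past corner (where $U_{b}\rightarrow-\infty$ on $\Hor_{b}^{-}$ and $V_{c}\rightarrow-\infty$ on $\Hor_{c}^{-}$) by dominated convergence using precisely the decay built into \eqref{S_sB}--\eqref{S_Hor_c-}. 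Absolute convergence of the horizon integrals, which is all your proposal offers at this point, does not imply that they reproduce the bulk value: without the flux-control argument, symplectic ``charge'' could escape through the corner as the slice degenerates to the null boundary. So your outline assembles the right ingredients but omits the limiting argument that constitutes the actual proof of \eqref{sigma-B;C}.
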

This result established the following
\begin{thm}\label{state-homomorphism}
With the same definitions as in the theorem \ref{algebra-horizon-isomorphism} and defining, for $\phi\in S(\sD)$, $\phi_{\upharpoonright\Hor_{b}^{-}}=\lim_{\rightarrow\Hor_{b}^{-}}\phi$ and $\phi_{\upharpoonright\Hor_{b}^{0}}=\lim_{\rightarrow\Hor_{b}^{0}}\phi$ (similarly for $\Hor_{c}^{-}$ and $\Hor_{c}^{0}$), the linear maps
\begin{align*}
\Gamma_{-}:S(\sD)\ni\phi \mapsto \left(\phi_{\Hor_{b}^{-}},\phi_{\Hor_{c}^{-}}\right)\in S(\Hor_{b}^{-})\oplus S(\Hor_{c}^{-}) \\
\Gamma_{0}:S(\sD)\ni\phi \mapsto \left(\phi_{\Hor_{b}^{0}},\phi_{\Hor_{c}^{0}}\right)\in S(\Hor_{b}^{0})\oplus S(\Hor_{c}^{0}) \; ,
\end{align*}
are well-defined injective symplectomorphisms. As a consequence, there exists two corresponding injective isometric $\ast$-homomorphisms:
\begin{align*}
\iota^{-}:\W(S(\sD))\rightarrow \W(S(\Hor_{b}^{-}))\otimes\W(S(\Hor_{c}^{-})) \\
\iota^{0}:\W(S(\sD))\rightarrow \W(S(\Hor_{b}^{0}))\otimes\W(S(\Hor_{c}^{0})) \; .
\end{align*}
\end{thm}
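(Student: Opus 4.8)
The plan is to treat $\Gamma_-$ and $\Gamma_0$ as the restrictions of a solution on $\sD$ to, respectively, its past and its future Cauchy surface, and to prove that each is a symplectomorphism by the bulk-to-boundary deformation underlying Theorem \ref{algebra-horizon-isomorphism}. Each of the null hypersurfaces $\Hor_b^-\cup\Hor_c^-$ and $\Hor_b^0\cup\Hor_c^0$ is achronal and is crossed by every inextensible causal curve in $\sD$, so each is a Cauchy surface of $\sD$ and both maps are well defined on $S(\sD)$. That the images land in the stated spaces --- $\phi_{\Hor_b^-}\in S(\Hor_b^-)$, $\phi_{\Hor_c^-}\in S(\Hor_c^-)$, and likewise on the future horizons --- follows from smoothness of the solutions on all of $\sK$ together with the Dafermos-Rodnianski bounds \eqref{Sch-dS_decay}: as $\Sigma$ lies to the past of $\sD$, these hold throughout $J^+(\Sigma)\cap\sD$ and, as recorded in \citep{DafermosRodnianski07}, on the bounding horizons as well, yielding on each the decay required for membership in the corresponding horizon space (for the past horizons, \eqref{S_Hor_b-} and \eqref{S_Hor_c-}) once rewritten in the appropriate affine coordinate.

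For $\Gamma_-$ there is nothing essentially new: the required identity
\[
\sigma_{\sD}(\phi,\phi')=\sigma_{\Hor_b^-}\bigl(\phi_{\Hor_b^-},\phi'_{\Hor_b^-}\bigr)+\sigma_{\Hor_c^-}\bigl(\phi_{\Hor_c^-},\phi'_{\Hor_c^-}\bigr)
\]
is the $\sD$-analogue of the splitting \eqref{sigma-B;C}, and it is produced by the same computation that proves Theorem \ref{algebra-horizon-isomorphism} (cf.\ the proof of Theorem 2.1 in \citep{DappiaggiMorettiPinamonti09}) once the Cauchy surface is taken to be $\Sigma\cap\sD$: one writes $\sigma_{\sD}$ as the flux of the conserved current built from $\phi$ and $\phi'$, deforms the constant-$t$ surface toward the past ($t\to-\infty$) by the Stokes-Poincar\'e theorem, and discards the corner contributions, which vanish, being integrals of a smooth form over a shrinking null surface.

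The genuinely new ingredient is $\Gamma_0$, and the plan is to run the identical deformation with the time direction reversed, letting the constant-$t$ surface limit onto $\Hor_b^0\cup\Hor_c^0$ as $t\to+\infty$. Since the current is conserved by virtue of the Klein-Gordon equation, its flux through $\Sigma_t\cap\sD$ is independent of $t$; splitting off a compact piece of each future horizon together with a remainder over a shrinking null surface, and appealing to Lebesgue dominated convergence --- licensed by the decay \eqref{Sch-dS_decay} on the future horizons --- one sends the remainder to zero and obtains
\[
\sigma_{\sD}(\phi,\phi')=\sigma_{\Hor_b^0}\bigl(\phi_{\Hor_b^0},\phi'_{\Hor_b^0}\bigr)+\sigma_{\Hor_c^0}\bigl(\phi_{\Hor_c^0},\phi'_{\Hor_c^0}\bigr).
\]
One could alternatively exploit the static reflection $(U_b,V_b)\mapsto(-V_b,-U_b)$, which preserves $\sD$ while interchanging $\Hor_b^0\leftrightarrow\Hor_b^-$ and $\Hor_c^0\leftrightarrow\Hor_c^-$, to reduce $\Gamma_0$ to $\Gamma_-$; I prefer the direct computation, since it avoids having to track how the future-oriented estimate \eqref{Sch-dS_decay} transforms under the reflection. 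I expect the main --- though routine --- technical point to be precisely this verification that the future-horizon decay is strong enough both for convergence of the limiting integrals and for the remainder to vanish as $t\to+\infty$.

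Injectivity of both maps is immediate from nondegeneracy of $\sigma_{\sD}$: any element of the kernel is $\sigma_{\sD}$-orthogonal to all of $S(\sD)$, hence zero. Part (b) then needs no separate argument: by Theorem 5.2.8 of \citep{BraRob-II} each symplectomorphism induces a unique isometric $\ast$-isomorphism of $\W(S(\sD))$ onto the sub-C$^{\ast}$-algebra generated by the images of its Weyl generators, and these furnish the injective isometric $\ast$-homomorphisms $\iota^-$ and $\iota^0$ into $\W(S(\Hor_b^-))\otimes\W(S(\Hor_c^-))$ and $\W(S(\Hor_b^0))\otimes\W(S(\Hor_c^0))$, uniquely fixed --- as in part (b) of Theorem \ref{algebra-horizon-isomorphism} and \eqref{iota-B;C} --- by sending $W_{\sD}(\phi)$ to the product of the two horizon Weyl generators.
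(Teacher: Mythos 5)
Your proposal is correct and takes essentially the same route as the paper: the paper obtains this theorem directly from the machinery of Theorem \ref{algebra-horizon-isomorphism} (read off from Theorem 2.1 of \citep{DappiaggiMorettiPinamonti09}), whose proof already contains the identity $\sigma_{\sD}(\phi,\phi')=\sigma_{\Hor_{b}^{-}}(\phi_{\Hor_{b}^{-}},\phi'_{\Hor_{b}^{-}})+\sigma_{\Hor_{c}^{-}}(\phi_{\Hor_{c}^{-}},\phi'_{\Hor_{c}^{-}})$ via the Stokes--Poincar\'e deformation $t\to-\infty$ of $\Sigma_{t}\cap\sD$, the Dafermos--Rodnianski decay for well-definedness, nondegeneracy of $\sigma_{\sD}$ for injectivity, and Theorem 5.2.8 of \citep{BraRob-II} for the Weyl-algebra homomorphisms. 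Your explicit time-reversed deformation $t\to+\infty$ for $\Gamma_{0}$ is precisely the intended (unwritten) completion of the paper's argument.
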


As a prelude to the next subsection, we note that if the linear functional $\omega:\W(S(\sB))\otimes\W(S(\sC))\rightarrow\mathbb{C}$ is an algebraic state, then the isometric $\ast$-homomorphism $\iota$ constructed in theorem \ref{algebra-horizon-isomorphism} above gives rise to a state $\omega_{\M}:\W(S(\M))\rightarrow\mathbb{C}$ defined by
\[\omega_{\M}\coloneqq \iota^{\ast}(\omega),\; \textrm{where }\iota^{\ast}(\omega)(W)=\omega(\iota(W)),\; \forall W\in S(\M)\; .\]
Specializing to quasifree states, we know that the ``quasifree property'' is preserved under pull-back and such a state is unambiguously defined on $\W(S(\sB))\otimes\W(S(\sC))$ by
\[\omega_{\M}(W_{\sB\cup\sC}(\psi))=e^{-\mu(\psi,\psi)/2},\qquad \forall\psi\in S(\sB)\oplus S(\sC) \; ,\]
where $\mu:(S(\sB)\oplus S(\sC))\times(S(\sB)\oplus S(\sC))\rightarrow\mathbb{R}$ is a real scalar product which majorizes the symplectic product.

\subsubsection{State}~\\


Before we start the construction of the state, we should comment on the theorems in \citep{KayWald91} which proved that there does not exist any Hadamard state on the whole Kruskal extension of the SdS spacetime. The first nonexistence Theorem proved in section 6.3 of that reference is based on causality arguments. They proved that the union of the algebras defined on the horizons $\Hor_{c}^{+}$ and $\Hor_{c}^{0-}$ (we will call this algebra $\W(S_{c}^{R})$; see figure \ref{Sch-dS_conformal_diagram}) is dense in $\W(S_{c})$, the union of the algebras defined on all the horizons corresponding to the cosmological horizon. Similarly, the union of the algebras defined on $\Hor_{b}^{+}$ and $\Hor_{b}^{0-}$ ($\W(S_{b}^{L})$) is dense in $\W(S_{b})$. However, by the Domain of Dependence property, $\W(S_{c}^{R})$ should be orthogonal to $\W(S_{b}^{L})$. But $\W(S_{c})$ and $\W(S_{b})$, again from the Domain of Dependence property, cannot be orthogonal, thus there is a contradiction.

We avoid this problem by not defining the state in the causal past of $\sB$ and in the causal past of $\sC$ (see figure \ref{Sch-dS_conformal_diagram}). The algebras $\W(S(\sB))$ and $\W(S(\sC))$ are not orthogonal, the same being valid for $\W(S(\Hor_{b}^{-}))$ and $\W(S(\Hor_{c}^{-}))$. The algebras $\W(S(\Hor_{b}^{+}))$ and $\W(S(\Hor_{c}^{+}))$ are indeed orthogonal, but they are not dense in $\W(S(\sB))$ and $\W(S(\sC))$. Thus there is no contradiction in our case.

The second nonexistence Theorem proved there arrives again at a contradiction by using properties of a KMS state. As it will be clear below, the state we will construct here is not a KMS state, thus we are not troubled by the contradiction at which they arrive.

Now, we will go on with the construction of our state.

On the set of complex, compactly supported smooth functions $\cC_{0}^{\infty}(\sB;\mathbb{C})$, we define its completion $\overline{\left(\cC_{0}^{\infty}(\sB;\mathbb{C}),\lambda\right)}$ in the norm defined by the scalar product \citep{Moretti08}
\begin{equation}
\lambda(\psi_{1},\psi_{2})\coloneqq \lim_{\epsilon\rightarrow 0^{+}}-\frac{r_{b}^{2}}{\pi}\int_{\mathbb{R}\times\mathbb{R}\times\mathbb{S}^{2}}\frac{\overline{\psi_{1}(U_{b1},\theta,\varphi)}\psi_{2}(U_{b2},\theta,\varphi)}{(U_{b1}-U_{b2}-i\epsilon)^{2}}dU_{b1}\wedge dU_{b2}\wedge d\mathbb{S}^{2} \; .
\label{product-U_b}
\end{equation}
Thus, $\overline{\left(\cC_{0}^{\infty}(\sB;\mathbb{C}),\lambda\right)}$ is a Hilbert space.

The $U_{b}$-Fourier-Plancherel transform
 of $\psi$ is given by (we denote $(\theta,\varphi)$ by $\omega$)
\begin{equation}
\mathcal{F}(\psi)(K,\omega)\coloneqq \frac{1}{\sqrt{2\pi}}\int_{\mathbb{R}}e^{iKU_{b}}\psi(U_{b},\omega)dU_{b} \eqqcolon \hat{\psi}(K,\omega) \; .
\end{equation}

We can, more conveniently, write the scalar product \eqref{product-U_b} in the Fourier space:
\begin{equation}
\lambda(\psi_{1},\psi_{2})=\int_{\mathbb{R}\times\mathbb{S}^{2}}\overline{\hat{\psi}_{1}(K,\omega)}\hat{\psi}_{2}(K,\omega)2KdK\wedge r_{b}^{2}d\mathbb{S}^{2} \; .
\label{product-Fourier}
\end{equation}

Let $\hat{\psi}_{+}(K,\omega)\coloneqq \mathcal{F}(\psi)(K,\omega)_{\upharpoonright\{K\ge0\}}$. Then, the linear map
\begin{equation}
\cC_{0}^{\infty}(\sB;\mathbb{C})\ni\psi \mapsto \hat{\psi}_{+}(K,\omega)\in L^{2}\left(\mathbb{R}_{+}\times\mathbb{S}^{2},2KdK\wedge r_{b}^{2}d\mathbb{S}^{2}\right) \eqqcolon H_{\sB}
\label{Fourier_C-H}
\end{equation}
is isometric and uniquely extends, by linearity and continuity, to a Hilbert space isomorphism of
\begin{equation}
F_{U_{b}}:\overline{\left(\cC_{0}^{\infty}(\sB;\mathbb{C}),\lambda\right)} \rightarrow H_{\sB} \; .
\label{Fourier_C-H_isomorphism}
\end{equation}
One can similarly define the $(V_{b},V_{c},U_{c})$-Fourier-Plancherel transforms acting on the spaces of complex, compactly supported smooth functions restricted to the hypersurfaces $\Hor_{b}^{0}$, $\sC$ and $\Hor_{c}^{0}$ respectively, all completed in norms like \eqref{product-U_b}, and extend the transforms to Hilbert space isomorphisms.

Not every solution of the Klein-Gordon equation belonging to the space $S(\sB)$ (or any other of the spaces defined in \eqref{S_Hor_b-}-\eqref{S_Hor_c-}) is compactly supported. However, we can still form isomorphisms between the completion of each of these spaces (in the norm $\lambda$ defined above) and the corresponding Hilbert space, as in \eqref{Fourier_C-H} and \eqref{Fourier_C-H_isomorphism}. First, we note that the decay estimates found in \citep{DafermosRodnianski07} and presented at the definition of $S(\sB)$, together with smoothness of the functions in this space, let us conclude that these functions (and their derivatives) are square integrable in the measure $dU_{b}$. Hence we can apply the Fourier-Plancherel transform to these functions. Therefore the product \eqref{product-Fourier} gives, for $\psi_{1},\psi_{2}\in S(\sB)$
\begin{align}
&\left|\lambda(\psi_{1},\psi_{2})\right|= \nonumber \\
&\left|\int_{\mathbb{R}\times\mathbb{S}^{2}}\overline{\hat{\psi}_{1}(K,\omega)}\hat{\psi}_{2}(K,\omega)2KdK\wedge r_{b}^{2}d\mathbb{S}^{2}\right|=2\left|\int_{\mathbb{R}\times\mathbb{S}^{2}}\overline{\hat{\psi}_{1}(K,\omega)}\Big(K\hat{\psi}_{2}(K,\omega)\Big)dK\wedge r_{b}^{2}d\mathbb{S}^{2}\right|= \nonumber \\
&2\left|\int_{\mathbb{R}\times\mathbb{S}^{2}}\overline{\hat{\psi}_{1}(K,\omega)}\widehat{\partial_{U_{b}}\psi}_{2}(K,\omega)dK\wedge r_{b}^{2}d\mathbb{S}^{2}\right|=2\left|\int_{\mathbb{R}\times\mathbb{S}^{2}}\overline{\psi_{1}(U_{b},\omega)}\partial_{U_{b}}\psi_{2}(U_{b},\omega)dU_{b}\wedge r_{b}^{2}d\mathbb{S}^{2}\right|<\infty \; .
\label{product-sB-integrable}
\end{align}

Let again $\hat{\psi}_{+}(K,\omega)\coloneqq \mathcal{F}(\psi)(K,\omega)_{\upharpoonright\{K\ge0\}}$, but now $\psi\in S(\sB)$. Then, the linear map
\begin{equation}
S(\sB)\ni\psi \mapsto \hat{\psi}_{+}(K,\omega)\in L^{2}\left(\mathbb{R}_{+}\times\mathbb{S}^{2},2KdK\wedge r_{b}^{2}d\mathbb{S}^{2}\right) \eqqcolon H_{\sB}
\label{Fourier_sB-H}
\end{equation}
is isometric and uniquely extends, by linearity and continuity, to a Hilbert space isomorphism of
\begin{equation}
F_{U_{b}}:\overline{\left(S(\sB),\lambda\right)} \rightarrow H_{\sB} \; ,
\label{Fourier_sB-H_isomorphism}
\end{equation}
and similarly for the horizon $\sC$. We then define the real-linear map $K_{\sB}$ as
\begin{equation}
K_{\sB} \coloneqq F_{U_{b}} :\overline{\left(S(\sB),\lambda\right)} \rightarrow H_{\sB} \; .
\label{linear_map}
\end{equation}

When proving some properties of the state individuated by the two-point function \eqref{product-U_b} (Theorem \ref{state-existence} below), it will be convenient to analyse the restrictions of such two-point function to $\Hor_{b}^{\pm}$. The initial point of this analysis is the following

\begin{prop}
Let the natural coordinates covering $\Hor_{b}^{+}$ and $\Hor_{b}^{-}$ be $u\coloneqq (1/\kappa_{b})\ln(\kappa_{b}U_{b})$ and $u\coloneqq (-1/\kappa_{b})\ln(-\kappa_{b}U_{b})$, respectively. Let also $\mu(k)$ be the positive measure on $\mathbb{R}$, given by
\begin{equation}
d\mu(k)\equiv \frac{1}{2}\frac{ke^{\pi k/\kappa_{b}}}{e^{\pi k/\kappa_{b}}-e^{-\pi k/\kappa_{b}}}dk \; .
\label{measure-Fourier_u}
\end{equation}
Then, if $\widetilde{\psi}=(\mathcal{F}(\psi))(k,\omega)$ denotes the Fourier transform of either $\psi\in S(\Hor_{b}^{+})$ or $\psi\in S(\Hor_{b}^{-})$ with respect to $u$, then the maps
\begin{equation}
S(\Hor_{b}^{\pm})\ni\psi \mapsto \widetilde{\psi}(k,\omega)\in L^{2}\left(\mathbb{R}\times\mathbb{S}^{2},d\mu(k)\wedge r_{b}^{2}d\mathbb{S}^{2}\right) \eqqcolon H_{\Hor_{b}^{\pm}}
\label{Fourier_Hor_b+-}
\end{equation}
are isometric (when $S(\Hor_{b}^{\pm})$ are equipped with the scalar product $\lambda$) and uniquely extend, by linearity and continuity, to Hilbert space isomorphisms of
\begin{equation}
F_{u}^{(\pm)}:\overline{\left(S(\Hor_{b}^{\pm}),\lambda\right)} \rightarrow H_{\Hor_{b}^{\pm}} \; .
\label{Fourier_Hor_b-H_isomorphism}
\end{equation}
\end{prop}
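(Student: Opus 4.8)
The plan is to reduce the statement to the evaluation of a single convolution kernel, exploiting the fact that on each of $\Hor_{b}^{\pm}$ the affine coordinate $U_{b}$ is an exponential function of the Killing coordinate $u$ (cf.\ \eqref{coord_UbVb}). First I would start not from the Fourier representation \eqref{product-Fourier} but from the position-space form \eqref{product-U_b} of $\lambda$ restricted to $\Hor_{b}^{+}$, and change variables $U_{b}=(1/\kappa_{b})e^{\kappa_{b}u}$, so that $dU_{b}=e^{\kappa_{b}u}\,du$. Using $e^{\kappa_{b}u_{1}}-e^{\kappa_{b}u_{2}}=2e^{\kappa_{b}(u_{1}+u_{2})/2}\sinh\!\big(\tfrac{\kappa_{b}}{2}(u_{1}-u_{2})\big)$, the prefactors $e^{\kappa_{b}(u_{1}+u_{2})}$ produced by the two Jacobians cancel exactly against those from the denominator, and the kernel collapses to
\[
\lambda(\psi_{1},\psi_{2})=-\frac{r_{b}^{2}}{\pi}\frac{\kappa_{b}^{2}}{4}\int\frac{\overline{\psi_{1}(u_{1},\omega)}\,\psi_{2}(u_{2},\omega)}{\sinh^{2}\!\big(\tfrac{\kappa_{b}}{2}(u_{1}-u_{2}-i\epsilon')\big)}\,du_{1}\,du_{2}\,d\mathbb{S}^{2}\,,
\]
where the crucial point is that, because $dU_{b}/du>0$, the $-i\epsilon$ prescription of \eqref{product-U_b} is transported into a shift $u_{1}\mapsto u_{1}-i\epsilon'$ with $\epsilon'>0$. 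The form $\lambda$ has thus become translation invariant in $u$, depending only on $u_{1}-u_{2}$; this is the mechanism by which the thermal character will emerge.

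Second, since the kernel now depends only on $u_{1}-u_{2}$, I would invoke the convolution (Parseval) theorem, so that the isometry of $\psi\mapsto\widetilde{\psi}$ is equivalent to identifying the $u$-Fourier transform $\hat G(k)$ of the distribution $\sinh^{-2}\!\big(\tfrac{\kappa_{b}}{2}(u-i\epsilon')\big)$. This I would compute by residues: the integrand has double poles at $u=i\epsilon'+2\pi i n/\kappa_{b}$, $n\in\mathbb{Z}$; closing the contour in the half-plane selected by the sign of $k$ and summing the residues of the enclosed double poles produces the geometric series $\sum_{n\ge0}e^{-2\pi nk/\kappa_{b}}=(1-e^{-2\pi k/\kappa_{b}})^{-1}$. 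Collecting this with the constants $-r_{b}^{2}\kappa_{b}^{2}/(4\pi)$ and the Jacobian turns the $u$-Fourier representation of $\lambda$ into $\int_{\mathbb{R}\times\mathbb{S}^{2}}|\widetilde{\psi}|^{2}\,d\mu(k)\wedge r_{b}^{2}d\mathbb{S}^{2}$ with $d\mu$ the thermal weight \eqref{measure-Fourier_u} (manifestly positive on all of $\mathbb{R}$, the surface gravity $\kappa_{b}$ playing the role of an inverse temperature), which is exactly the asserted isometry. All the interchanges (the $\epsilon\to0^{+}$ limit, Fubini, and the integration by parts implicit in the convolution) are licensed by the smoothness and decay estimates built into \eqref{S_Hor_b-}, just as these were used to justify \eqref{product-sB-integrable} and \eqref{Fourier_sB-H_isomorphism}.

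Third, to promote the isometry to the Hilbert-space isomorphism \eqref{Fourier_Hor_b-H_isomorphism}, I would extend $\psi\mapsto\widetilde{\psi}$ to the completion $\overline{(S(\Hor_{b}^{\pm}),\lambda)}$ by continuity (legitimate precisely because the map is isometric) and then establish surjectivity: the range already contains the $u$-Fourier transforms of a dense family of functions and, the Fourier--Plancherel transform being onto the relevant weighted $L^{2}$ space, this range is dense; being the range of an isometry between Hilbert spaces it is also closed, hence all of $H_{\Hor_{b}^{\pm}}$. The case $\Hor_{b}^{-}$ then requires no separate computation: with $U_{b}=-(1/\kappa_{b})e^{-\kappa_{b}u}$ one still has $dU_{b}/du>0$ and the same identity for the difference of exponentials, so the $\sinh^{-2}$ kernel, the sign of $\epsilon'$, and therefore the measure $d\mu(k)$ are unchanged.

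The step I expect to be the genuine obstacle is the residue evaluation together with the correct transport of the $i\epsilon$ prescription through the nonlinear change of variables: one must verify that $U_{b}(u-i\delta)=U_{b}(u)-i\epsilon$ fixes the \emph{sign} of the displacement in $u$, choose the correct half-plane for each sign of $k$, and handle the double (rather than simple) poles so that the residue sum assembles into precisely the Bose-type factor of \eqref{measure-Fourier_u} and not a variant differing by a sign, a factor of the temperature, or an overall numerical constant. Pinning down that overall constant exactly is the delicate part; everything else is bookkeeping.
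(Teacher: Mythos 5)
Your proposal is correct and follows essentially the same route as the paper's own proof, which is itself only a three-line sketch: start from \eqref{product-U_b}, change variables from $U_{b}$ to the Killing coordinate $u$, Fourier transform in $u$ to read off the measure, and inherit the isometry/extension/isomorphism statements from the $S(\sB)$ case, the decay in \eqref{S_Hor_b-} licensing the ordinary Fourier transform. The only real difference is the device used to evaluate the transform of the transported kernel: the paper invokes the distributional identity $\lim_{\epsilon\rightarrow 0^{+}}(x-i\epsilon)^{-2}=x^{-2}-i\pi\delta'(x)$, whereas you expand the kernel over its double poles, $\kappa_{b}^{2}/4\,\sinh^{-2}(\kappa_{b}z/2)=\sum_{n\in\mathbb{Z}}(z-2\pi i n/\kappa_{b})^{-2}$, and sum residues; the two computations are equivalent, yours being the more explicit. (Only one small imprecision: for $k<0$ the enclosed poles are those with $n\leq-1$, so the series is $\sum_{n\geq 1}e^{2\pi nk/\kappa_{b}}$ rather than $\sum_{n\geq 0}e^{-2\pi nk/\kappa_{b}}$, but both assemble into the single expression $k/(1-e^{-2\pi k/\kappa_{b}})$ valid on all of $\mathbb{R}$ --- which is the point: unlike on $\sB$, both signs of $k$ survive.)

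Concerning the overall constant, which you rightly flagged as the delicate step: carried out in the paper's own conventions (prefactor $-r_{b}^{2}/\pi$ in \eqref{product-U_b} and $(2\pi)^{-1/2}$ in the Fourier transform), your computation yields the weight $2k\,e^{\pi k/\kappa_{b}}/(e^{\pi k/\kappa_{b}}-e^{-\pi k/\kappa_{b}})$, i.e.\ \emph{four times} the density printed in \eqref{measure-Fourier_u}. Your value, not the printed one, is the consistent one: it is forced by requiring $\mathrm{Im}\,\lambda=-\tfrac{1}{2}\sigma$ on $\Hor_{b}^{\pm}$, exactly the check that the weight $2K\,dK$ of \eqref{product-Fourier} passes on $\sB$, and it coincides with the measure appearing in the corresponding proposition of \citep{DappiaggiMorettiPinamonti09}, of which this one is the adaptation. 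So the prefactor $\tfrac{1}{2}$ in \eqref{measure-Fourier_u} appears to be a transcription slip; it is harmless downstream, since the isometry statement, detailed balance and the KMS property in Theorem \ref{state-existence} are insensitive to an overall positive constant, but you should not be alarmed when your (correct) residue sum fails to reproduce the printed coefficient.
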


\begin{proof}
The measure \eqref{measure-Fourier_u} is obtained if one starts from \eqref{product-U_b}, makes the change of variables from $U_{b}$ to $u$ and then takes the Fourier transform with respect to $u$, keeping in mind that $\lim_{\epsilon\rightarrow 0^{+}}1/(x-i\epsilon)^{2}=1/x^{2}-i\pi\delta'(x)$ \citep{DuistermaatKolk10}. The other statements of the proposition follow exactly as the corresponding ones for the Fourier-Plancherel transform of $\psi\in S(\sB)$. The only formal difference is that, from the decay estimate \eqref{S_Hor_b-}, we can now employ the usual Fourier transform.
\end{proof}

Hence, the real-linear maps $K_{\Hor_{b}^{\pm}}^{\beta_{b}}$ are defined as $K_{\Hor_{b}^{\pm}}^{\beta_{b}} \coloneqq F_{u}^{(\pm)}$.

On the following theorem we will prove that we can construct a quasifree pure state $\omega_{\M}$ on the Weyl algebra defined on the region $\sB\cup\sC$. It is equivalent to Theorem 3.1 of \citep{DappiaggiMorettiPinamonti09}. Since its proof is quite lengthy, we will relegate it to \ref{theorem_state}. On the next subsection we will show that $\omega_{\M}$ satisfies the Hadamard condition.

\begin{thm}\label{state-existence}\ \

(a)\ \ The pair $(H_{\sB},K_{\sB})$ is the one-particle structure for a quasifree pure state $\omega_{\sB}$ on $\W(S(\sB))$ uniquely individuated by the requirement that its two-point function coincides with the rhs of
\[\lambda(\psi_{1},\psi_{2})\coloneqq \lim_{\epsilon\rightarrow 0^{+}}-\frac{r_{b}^{2}}{\pi}\int_{\mathbb{R}\times\mathbb{R}\times\mathbb{S}^{2}}\frac{\overline{\psi_{1}(U_{b1},\theta,\varphi)}\psi_{2}(U_{b2},\theta,\varphi)}{(U_{b1}-U_{b2}-i\epsilon)^{2}}dU_{b1}\wedge dU_{b2}\wedge d\mathbb{S}^{2} \; .\]

(b)\ \ The state $\omega_{\sB}$ is invariant under the action of the one-parameter group of $\ast$-automorphisms generated by $X_{\upharpoonright\sB}$ and of those generated by the Killing vectors of $\mathbb{S}^{2}$.

(c)\ \ The restriction of $\omega_{\sB}$ to $\W(S(\Hor_{b}^{\pm}))$ is a quasifree state $\omega_{\Hor_{b}^{\pm}}^{\beta_{b}}$ individuated by the one-particle structure $\left(H_{\Hor_{b}^{\pm}}^{\beta_{b}},K_{\Hor_{b}^{\pm}}^{\beta_{b}}\right)$ with:
\[H_{\Hor_{b}^{\pm}}^{\beta_{b}}\coloneqq L^{2}\left(\mathbb{R}\times\mathbb{S}^{2},d\mu(k)\wedge r_{b}^{2}d\mathbb{S}^{2}\right) \quad \textrm{and} \quad K_{\Hor_{b}^{\pm}}^{\beta_{b}}={F_{u}^{\pm}}_{\upharpoonright S(\Hor_{b}^{\pm})} \; .\]

(d)\ \ If $\{\beta_{\tau}^{(X)}\}_{\tau\in\mathbb{R}}$ denotes the pull-back action on $S(\Hor_{b}^{-})$ of the one-parameter group generated by $X_{\upharpoonright\sB}$, that is $\left(\beta_{\tau}(\psi)\right)(u,\theta,\varphi)=\psi(u-\tau,\theta,\varphi),\forall\tau\in\mathbb{R},\psi\in S(\Hor_{b}^{-})$, then it holds:
\[K_{\Hor_{b}^{-}}^{\beta_{b}}\beta_{\tau}^{(X)}(\psi)=e^{i\tau\hat{k}}K_{\Hor_{b}^{-}}^{\beta_{b}}\psi\]
where $\hat{k}$ is the k-multiplicative self-adjoint operator on $L^{2}\left(\mathbb{R}\times\mathbb{S}^{2},d\mu(k)\wedge d\mathbb{S}^{2}\right)$. An analogous statement holds for $\Hor_{b}^{+}$.

(e)\ \ The states $\omega_{\Hor_{b}^{\pm}}^{\beta_{b}}$ satisfy the KMS condition with respect to the one-parameter group of $\ast$-automorphisms generated by, respectively, $\mp X_{\upharpoonright\sB}$, with Hawking's inverse temperature $\beta_{b}=\frac{2\pi}{\kappa_{b}}$.
\end{thm}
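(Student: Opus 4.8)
The plan is to follow the proof of Theorem 3.1 of \citep{DappiaggiMorettiPinamonti09}, adapting each step to the geometry at hand. For part (a) I would verify the three defining properties of a one-particle structure directly from the Fourier--Plancherel representation \eqref{product-Fourier}. By Parseval the kernel $(U_{b1}-U_{b2}-i\epsilon)^{-2}$ has Fourier transform supported on $K>0$ with weight proportional to $K$, so the two-point function is carried by the positive-frequency part with positive weight $2K$; writing $\lambda(\psi_1,\psi_2)=\mu(\psi_1,\psi_2)+\tfrac{i}{2}\sigma_\sB(\psi_1,\psi_2)$ and using that real $\psi\in S(\sB)$ satisfy $\hat\psi(-K,\omega)=\overline{\hat\psi(K,\omega)}$, the symmetric part yields $\mu=\mathrm{Re}\langle K_\sB\cdot|K_\sB\cdot\rangle_{H_\sB}$ while the antisymmetric part reproduces $\tfrac12\sigma_\sB$ (the combination $\int(\psi_1\partial_{U_b}\psi_2-\psi_2\partial_{U_b}\psi_1)$ already visible in \eqref{product-sB-integrable}); the majorization \eqref{mu_geq_sigma} is automatic since $\lambda$ is a genuine inner product. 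Purity follows because $K_\sB L$ is already dense: any $g\in H_\sB$ is the image $\hat\psi_+$ of the real element determined by $\hat\psi(K)=g(K)$ for $K>0$ and $\hat\psi(K)=\overline{g(-K)}$ for $K<0$, so by \eqref{Fourier_sB-H_isomorphism} the real range is dense, which is exactly the stated purity criterion.

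For part (b) the essential observation is a scaling invariance. The Killing flow generated by $X_{\upharpoonright\sB}$ acts on $\sB$ as the dilation $U_b\mapsto e^{\kappa_b\tau}U_b$, which preserves the sign of $U_b$ and hence maps each of $\Hor_b^{\pm}$ to itself. Under it the kernel $(U_{b1}-U_{b2}-i\epsilon)^{-2}$ scales by $e^{-2\kappa_b\tau}$, exactly cancelling the Jacobian $e^{2\kappa_b\tau}$ of $dU_{b1}\wedge dU_{b2}$, so $\lambda$ is left invariant; invariance under the $\mathbb{S}^2$-isometries is immediate from the rotational invariance of $d\mathbb{S}^2$.

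Part (c) is where the thermal character appears and is, to my mind, the crux. Here I would invoke the Proposition preceding the theorem: restricting $\lambda$ to functions supported on $\Hor_b^{+}$ (resp.\ $\Hor_b^{-}$), i.e.\ on $U_b>0$ (resp.\ $U_b<0$), and passing to the $u$-Fourier transform $F_u^{(\pm)}$ replaces the Plancherel weight $2K\,dK$ by the measure $d\mu(k)$ of \eqref{measure-Fourier_u}. The delicate point---carried out in the proof of that Proposition---is that the change of variables $U_b\mapsto u$ combined with $\lim_{\epsilon\to0^+}(x-i\epsilon)^{-2}=x^{-2}-i\pi\delta'(x)$ mixes the symmetric and antisymmetric parts of the kernel and generates the factor $e^{\pi k/\kappa_b}/(e^{\pi k/\kappa_b}-e^{-\pi k/\kappa_b})$; it is precisely this factor that encodes the temperature. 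Since the restricted symmetric and symplectic parts of $\lambda$ are thereby expressed through $(H_{\Hor_b^\pm}^{\beta_b},K_{\Hor_b^\pm}^{\beta_b})$, the restriction of $\omega_\sB$ to $\W(S(\Hor_b^{\pm}))$ is the quasifree state individuated by that one-particle structure.

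Finally, parts (d) and (e) are short. For (d), the flow $\beta_\tau^{(X)}$ on $S(\Hor_b^-)$ is the $u$-translation $\psi\mapsto\psi(\cdot-\tau)$, which the $u$-Fourier transform turns into multiplication by $e^{i\tau k}$, giving $K_{\Hor_b^-}^{\beta_b}\beta_\tau^{(X)}=e^{i\tau\hat k}K_{\Hor_b^-}^{\beta_b}$ in one line. For (e), I would check the detailed-balance relation $d\mu(k)=e^{\beta_b k}\,d\mu(-k)$, which is immediate from \eqref{measure-Fourier_u} once $\beta_b=2\pi/\kappa_b$ is inserted so that $\pi/\kappa_b=\beta_b/2$. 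Combined with the diagonalization from (d), this makes $\tau\mapsto\langle K_{\Hor_b^-}^{\beta_b}\beta_\tau^{(X)}\psi_1|K_{\Hor_b^-}^{\beta_b}\psi_2\rangle$ extend analytically to the strip $0<\mathrm{Im}\,\tau<\beta_b$ with the interchanged boundary value of \eqref{KMS-density}, i.e.\ the KMS condition at inverse temperature $\beta_b$. The opposite sense of the flow on the two halves---the same dilation $U_b\mapsto e^{\kappa_b\tau}U_b$ induces $u\mapsto u-\tau$ on $\Hor_b^-$ but $u\mapsto u+\tau$ on $\Hor_b^+$---accounts for the generators $\mp X_{\upharpoonright\sB}$ in the statement.
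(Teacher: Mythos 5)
Your proposal is correct and follows essentially the same route as the paper's proof in Appendix A: part (a) via the Fourier--Plancherel isometry, Cauchy--Schwarz for the majorization, and density of the real range for purity; part (c) by invoking the Proposition establishing the measure $d\mu(k)$; part (d) as translation becoming multiplication by $e^{i\tau k}$; and part (e) via the detailed-balance property of $d\mu(k)$, which is exactly the identity the paper exploits in its explicit integral computation. The only cosmetic differences are that you verify invariance in (b) by the position-space scaling of the kernel against the Jacobian, where the paper performs the equivalent change of variables in the $K$-integral, and in (e) you phrase the computation as analytic continuation to the strip rather than writing out the chain of equalities.
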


One can equally define a quasifree pure KMS state $\omega_{\sC}^{\beta_{c}}$ on $S(\sC)$, at inverse temperature $\beta_{c}=\frac{2\pi}{\kappa_{c}}$.

We have successfully applied the bulk-to-boundary technique to construct two quasifree pure KMS states, one on $\W(S(\sB))$ and the other one on $\W(S(\sC))$, with temperatures given by $\kappa_{b}/2\pi$ and $\kappa_{c}/2\pi$, respectively. Thus, by the remarks after theorems \ref{algebra-horizon-isomorphism} and \ref{state-homomorphism}, we can define a state on $\M$ such that, for $\psi\in S(\M)$,
\begin{align}
\omega_{\M}(W_{\M}(\psi))&=e^{-\mu(\psi,\psi)}=e^{-\mu_{\sB}(\psi_{\upharpoonright\sB},\psi_{\upharpoonright\sB})-\mu_{\sC}(\psi_{\upharpoonright\sC},\psi_{\upharpoonright\sC})}=e^{-\mu_{\sB}(\psi_{\upharpoonright\sB},\psi_{\upharpoonright\sB})}e^{-\mu_{\sC}(\psi_{\upharpoonright\sC},\psi_{\upharpoonright\sC})} \nonumber \\
&=\omega_{\sB}(W_{\sB}(\psi_{\upharpoonright\sB}))\omega_{\sC}(W_{\sC}(\psi_{\upharpoonright\sC})) \; .
\label{state}
\end{align}

The resulting state is thus the tensor product of two states, each one a quasifree pure state, but each one a KMS at a different temperature. Thus $\omega_{\M}$ is not a KMS state, and neither can it be interpreted as a superposition, a mixture or as an entangled state. However, our result is important because it shows how expectation values of observables in the region $\M$ are related to the expectation values on the horizons. Formally, the state itself can be written in terms of its ``initial value''.

We still must prove that the two-point function of this state is a bidistribution in $\left(\cC^{\infty}_{0}\right)^{'}(\M\times\M)$. This will be easily proved in the following
\begin{prop}\label{prop_M=B+C}
The smeared two-point function $\Lambda_{\M}:\cC_{0}^{\infty}\left(\M;\mathbb{R}\right)\times\cC_{0}^{\infty}\left(\M;\mathbb{R}\right) \rightarrow \mathbb{C}$ of the state $\omega_{\M}$ can be written as the sum
\begin{equation}
\Lambda_{\M}=\Lambda_{\sB}+\Lambda_{\sC} \; ,
\label{2ptfcn_M=B+C}
\end{equation}
with $\Lambda_{\sB}$ and $\Lambda_{\sC}$  defined from the following relations as in \eqref{product-U_b},
\[\Lambda_{\sB}(f,h)=\lambda_{\sB}(\psi^{f}_{\sB},\psi^{h}_{\sB}) \quad ; \quad \Lambda_{\sC}(f,h)=\lambda_{\sC}(\psi^{f}_{\sC},\psi^{h}_{\sC})\]
for every $f,h\in\cC_{0}^{\infty}\left(\M;\mathbb{R}\right)$.

Separately, $\Lambda_{\sB}$, $\Lambda_{\sC}$ and $\Lambda_{\M}$ individuate elements of $\left(\cC^{\infty}_{0}\right)^{'}(\M\times\M)$ that we will denote, respectively, by the same symbols. These are uniquely individuated by complex linearity and continuity under the assumption \eqref{2ptfcn_M=B+C}, by
\begin{equation}
\Lambda_{\sB}(f\otimes h)\coloneqq \lambda_{\sB}(\psi^{f}_{\sB},\psi^{h}_{\sB}) \quad ; \quad \Lambda_{\sC}(f\otimes h)\coloneqq \lambda_{\sC}(\psi^{f}_{\sC},\psi^{h}_{\sC}) \; ,
\label{bidistr_M=B+C}
\end{equation}
for every $f,h\in\cC_{0}^{\infty}\left(\M;\mathbb{R}\right)$. Here, $\psi^{f}_{\sB}$ is a ``smeared solution'', $\psi^{f}_{\sB}=\left(\mathds{E}(f)\right)_{\upharpoonright\sB}$ (similarly for the other solutions).
\end{prop}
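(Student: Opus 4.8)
The plan is to read off the sum \eqref{2ptfcn_M=B+C} from the tensor-product structure of $\omega_{\M}$ and then invoke the Schwartz kernel theorem to upgrade the resulting sesquilinear forms to genuine bidistributions on $\M\times\M$. First I would record the properties of the smeared solutions. For $f\in\cC_{0}^{\infty}(\M;\mathbb{R})$ the causal propagator yields a smooth real solution $\psi^{f}\coloneqq\mathds{E}(f)\in S(\M)$, so by Theorem \ref{algebra-horizon-isomorphism} its restrictions $\psi^{f}_{\sB}=\psi^{f}_{\upharpoonright\sB}$ and $\psi^{f}_{\sC}=\psi^{f}_{\upharpoonright\sC}$ belong to $S(\sB)$ and $S(\sC)$. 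A decisive extra remark is that, since $\textrm{supp}(\mathds{E}f)\subset J^{+}(\textrm{supp}f)\cup J^{-}(\textrm{supp}f)$ and $\sB\cup\sC$ is a Cauchy surface of $\M$, the intersection of $J(\textrm{supp}f)$ with $\sB\cup\sC$ is compact; hence $\psi^{f}_{\sB}\in\cC_{0}^{\infty}(\sB;\mathbb{R})$ and $\psi^{f}_{\sC}\in\cC_{0}^{\infty}(\sC;\mathbb{R})$ are in fact compactly supported. It is this compactness that makes the estimates below elementary and bypasses any further appeal to the decay bounds \eqref{Sch-dS_decay}.

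For the decomposition, I would write the smeared two-point function as $\Lambda_{\M}(f,h)=w_{\omega_{\M}}^{(2)}(\psi^{f},\psi^{h})$ and use the quasifree splitting \eqref{quasifree-2ptfcn}, $w_{\omega_{\M}}^{(2)}=\mu+\tfrac{i}{2}\sigma_{\M}$. By the construction \eqref{state} the symmetric part splits as $\mu(\psi^{f},\psi^{h})=\mu_{\sB}(\psi^{f}_{\sB},\psi^{h}_{\sB})+\mu_{\sC}(\psi^{f}_{\sC},\psi^{h}_{\sC})$, while Theorem \ref{algebra-horizon-isomorphism}(a), i.e. \eqref{sigma-B;C}, supplies the matching splitting $\sigma_{\M}=\sigma_{\sB}+\sigma_{\sC}$ under $\Gamma$. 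Recombining the real and imaginary contributions on each horizon gives $w_{\omega_{\M}}^{(2)}(\psi^{f},\psi^{h})=\lambda_{\sB}(\psi^{f}_{\sB},\psi^{h}_{\sB})+\lambda_{\sC}(\psi^{f}_{\sC},\psi^{h}_{\sC})$, which is exactly \eqref{2ptfcn_M=B+C} together with the defining relations for $\Lambda_{\sB}$ and $\Lambda_{\sC}$.

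It then remains to show that $\Lambda_{\sB}$ and $\Lambda_{\sC}$ (hence $\Lambda_{\M}$, as their sum) extend to elements of $\left(\cC_{0}^{\infty}\right)'(\M\times\M)$. By the Schwartz kernel theorem it suffices that $(f,h)\mapsto\Lambda_{\sB}(f,h)$ be separately continuous, and the uniqueness asserted in the statement follows from the density, in $\cC_{0}^{\infty}(\M\times\M)$, of the span of the tensors $f\otimes h$ together with complex linearity. Since $\lambda_{\sB}$ is the inner product of the one-particle space $H_{\sB}$, Cauchy--Schwarz gives $|\Lambda_{\sB}(f,h)|\le\|K_{\sB}\psi^{f}_{\sB}\|_{H_{\sB}}\|K_{\sB}\psi^{h}_{\sB}\|_{H_{\sB}}$, so I only need to dominate $\|K_{\sB}\psi^{f}_{\sB}\|_{H_{\sB}}^{2}=\lambda_{\sB}(\psi^{f}_{\sB},\psi^{f}_{\sB})$ by a continuous seminorm of $f$. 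Using the Fourier representation \eqref{product-Fourier} and splitting the weight $2K$ over $0<K\le1$ and $K>1$ one obtains $\lambda_{\sB}(\psi^{f}_{\sB},\psi^{f}_{\sB})\le C\big(\|\psi^{f}_{\sB}\|_{L^{2}(\sB)}^{2}+\|\partial_{U_{b}}\psi^{f}_{\sB}\|_{L^{2}(\sB)}^{2}\big)$, in agreement with \eqref{product-sB-integrable}.

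The main obstacle is precisely this last bound: one must control the two $L^{2}$ norms on $\sB$ by Schwartz seminorms of $f$, that is, establish sequential continuity of the composite map $f\mapsto(\mathds{E}f)_{\upharpoonright\sB}$ from $\cC_{0}^{\infty}(\M;\mathbb{R})$ into $\cC_{0}^{\infty}(\sB;\mathbb{R})$. Here the compact-support observation is what makes the argument work: because $\psi^{f}_{\sB}$ is supported in a fixed compact subset of $\sB$ determined by $\textrm{supp}f$, its $L^{2}$ norm and that of $\partial_{U_{b}}\psi^{f}_{\sB}$ are dominated by sup-norm seminorms over that compact set, and the continuity of $\mathds{E}$ together with continuity of the trace onto the Cauchy surface $\sB\cup\sC$ yields the required estimate. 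The identical argument applies on $\sC$ with $U_{b}$ replaced by $V_{c}$, and the identity $\Lambda_{\M}=\Lambda_{\sB}+\Lambda_{\sC}$ then exhibits $\Lambda_{\M}$ as a sum of two bidistributions, completing the proof.
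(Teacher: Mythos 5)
Your overall strategy (read off the splitting from the tensor-product form of $\omega_{\M}$, then obtain bidistributions from separate continuity plus the Schwartz kernel theorem) coincides with the paper's, and the first part of your argument — the quasifree decomposition of $w^{(2)}_{\omega_{\M}}$ into $\lambda_{\sB}+\lambda_{\sC}$ via \eqref{state} and Theorem \ref{algebra-horizon-isomorphism} — is fine. The proof breaks, however, at what you yourself call the ``decisive extra remark'': the claim that $J(\textrm{supp}f)\cap(\sB\cup\sC)$ is compact, hence that $\psi^{f}_{\sB}$ and $\psi^{f}_{\sC}$ are compactly supported. This is false. For any $x\in\sD$ with coordinates $(U_{b}^{x},V_{b}^{x})$, the causal past $J^{-}(x)$ contains the entire generator segment $\left\{(U_{b},V_{b}=0)\,:\,U_{b}\le U_{b}^{x}\right\}$ of $\Hor_{b}^{-}$: from any such point one reaches $x$ by running future-directed up the horizon generator to $(U_{b}^{x},0)$ and then along the null line of constant $U_{b}$. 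This segment is unbounded; it runs down the past horizon toward the ``vertex'' where $\Hor_{b}^{-}$ and $\Hor_{c}^{-}$ would meet, which sits at infinity ($U_{b}\to-\infty$, equivalently $V_{c}\to-\infty$) and is not a point of the manifold. The standard lemma ``$J(K)\cap\Sigma$ is compact for compact $K$'' needs $\Sigma$ to be a genuine Cauchy surface of a globally hyperbolic ambient spacetime; here $\sB\cup\sC$ is a null boundary which past-inextendible causal curves can avoid altogether (e.g.\ the constant-$r$ Killing orbits in $\sD$ as $t\to-\infty$, which sink into the vertex), so the lemma cannot be invoked, and indeed $\textrm{supp}(\psi^{f}_{\sB})$ generically has an infinite tail along $\Hor_{b}^{-}$.

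Because of this, your closing continuity estimate (domination of the $L^{2}$ norms of $\psi^{f}_{\sB}$ and $\partial_{U_{b}}\psi^{f}_{\sB}$ by sup-norms over a fixed compact set) collapses, and with it the claim that one can ``bypass any further appeal to the decay bounds \eqref{Sch-dS_decay}.'' Those bounds are precisely what the paper uses instead: the Dafermos--Rodnianski estimates guarantee that the non-compactly supported restrictions lie in $S(\sB)$ and $S(\sC)$, i.e.\ decay like $C_{\phi}/U_{b}$ with derivative like $C_{\phi}/U_{b}^{2}$, which makes them and their derivatives square-integrable in the horizon measure (equation \eqref{product-sB-integrable}); finiteness and continuity of $\lambda_{\sB}$ and $\lambda_{\sC}$ then follow from continuity of the Fourier--Plancherel transform on these spaces, and only after that does the Schwartz kernel theorem produce the bidistributions. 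To repair your argument, replace the compactness remark by the membership $\psi^{f}_{\sB}\in S(\sB)$, $\psi^{f}_{\sC}\in S(\sC)$ (Theorem \ref{algebra-horizon-isomorphism}, resting on \eqref{Sch-dS_decay}) together with \eqref{product-sB-integrable}; the rest of your outline then goes through essentially as in the paper.
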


\begin{proof}
The first statement follows trivially from the definition \eqref{state}, theorems \eqref{algebra-horizon-isomorphism} and \eqref{state-homomorphism} and the remarks at the end of section \ref{sec-Sch-dS-algebra}.

To prove the second statement, we have to prove that $\Lambda_{\sB}$ and $\Lambda_{\sC}$ are bidistributions in $\left(\cC^{\infty}_{0}\right)^{'}(\M\times\M)$. For this purpose, we note that
\[f\mapsto\Lambda_{i}(f,\cdot) \quad \textrm{and} \quad h\mapsto\Lambda_{i}(\cdot,h) \qquad ; \quad i=\sB,\sC \; ,\]
are continuous in the sense of distributions. This is true from the definition of $\lambda_{i}(\cdot,\cdot)$ and the fact that the Fourier-Plancherel transform is a continuous map. Thus, both $\Lambda_{i}(f,\cdot)$ and $\Lambda_{i}(\cdot,h)$ are in $\left(\cC^{\infty}_{0}\right)^{'}(\M)$. The Schwarz kernel theorem \citep{Hormander-I} shows that $\Lambda_{i}\in\left(\cC^{\infty}_{0}\right)^{'}(\M\times\M)$.
\end{proof}

Before we proceed to the proof that $\omega_{\M}$ is a Hadamard state, we have to clarify its interpretation. The fact that our state is not defined in the causal past of $\cB_{b}$ and in region $IV$ of figure \ref{Sch-dS_conformal_diagram} makes $\omega_{\sB}$ very similar to the Unruh state defined in the Schwarzschild spacetime. Also the fact that $\omega_{\M}$ is Hadamard (see next section) on $\Hor_{b}^{0}$, but not on $\Hor_{b}^{\pm}$, as in the Schwarzschild case \citep{DappiaggiMorettiPinamonti09}, reinforces this similarity. But since neither is $\omega_{\M}$ defined in the causal past of $\cB_{c}$ and in region $IV'$, nor is it Hadamard on $\Hor_{c}^{\pm}$, although it is Hadamard on $\Hor_{c}^{0}$, $\omega_{\sC}$ is not similar to the Unruh state in de Sitter spacetime. As shown in \citep{NarnhoferPeterThirring96}, the Unruh state in the de Sitter spacetime is the unique KMS state which can be extended to a Hadamard state in the whole spacetime. The Unruh state in Schwarzschild-de Sitter spacetime, if it existed, should be well defined and Hadamard in $\M \cup J^{-}(\cB_{c}) \cup IV'$. But such a state cannot exist, by the nonexistence theorems proved in \citep{KayWald91}. Therefore $\omega_{\M}$ cannot be interpreted as the Unruh state in Schwarzschild-de Sitter spacetime.

\section{The Hadamard Condition}\label{sec_Hadamard_condition}

We must analyse the wave front set of the bidistribution individuated in Proposition \ref{prop_M=B+C} and show that it satisfies the Hadamard condition (equation \eqref{Wfcond}). The proof will be given in two parts: the first part will be devoted to prove the Hadamard condition in the region $\sD$. Here we can repeat {\it verbatim} the first part of the proof given in \citep{DappiaggiMorettiPinamonti09}, where the authors showed that the Unruh state in Schwarzschild spacetime is a Hadamard state in the wedge region. Their proof could be almost entirely repeated from \citep{SahlmannVerch00}. We will thus present the statements and the main points of the proof. The second part of the proof consists of extending these results to the regions $II$ and $II'$ (see figure \ref{Sch-dS_conformal_diagram}). This part of the proof can be repeated almost {\it verbatim} from the second part of the proof given in \citep{DappiaggiMorettiPinamonti09}, where the authors proved that their state is a Hadamard state inside the black hole region. The main differences rely on the fact that here we can apply the Fourier-Plancherel transform directly to the functions in $S(\sB)$ and in $S(\sC)$, since they are square-integrable, a fact which does not hold in \citep{DappiaggiMorettiPinamonti09}. Besides, we do not have to handle the solutions at infinity, only on the event horizons. Thus, our proof is technically simpler than the one given in \citep{DappiaggiMorettiPinamonti09}. As a last remark, we note that the proof of the Hadamard condition given there for the region inside the black hole is equally valid, in our case, for the region outside the cosmological horizon (region $II'$).

{\it Part 1:}\ \ In this first part, we will prove the following
\begin{lem}\label{lemma-omega_D_Hadamard}
The wave front set of the two-point function $\Lambda_{\M}$ of the state $\omega_{\M}$, individuated in \eqref{2ptfcn_M=B+C}, restricted to a functional on $\sD\times\sD$, is given by
\begin{equation}
WF((\Lambda_{\M})_{\upharpoonright\sD\times\sD})=\left\{\left(x_{1},k_{1};x_{2},-k_{2}\right) | \left(x_{1},k_{1};x_{2},k_{2}\right)\in {\mathcal T}^{*}\left(\sD\times\sD\right) \diagdown \{0\} ; (x_{1},k_{1})\sim (x_{2},k_{2}) ; k_{1}\in \overline{V}_{+}\right\} \; .
\label{Wf-omega_D}
\end{equation}
thus the state ${\omega_{\M}}_{\upharpoonright\sD}$ is a Hadamard state.
\end{lem}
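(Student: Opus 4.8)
The plan is to follow the strategy of Sahlmann and Verch \citep{SahlmannVerch00}, as adapted to the bulk-to-boundary setting by \citep{DappiaggiMorettiPinamonti09}, using the decomposition $\Lambda_{\M}=\Lambda_{\sB}+\Lambda_{\sC}$ of Proposition \ref{prop_M=B+C}. The whole argument reduces to establishing the single inclusion $WF((\Lambda_{\M})_{\upharpoonright\sD\times\sD})\subseteq C^{+}$; the reverse inclusion, hence the claimed equality \eqref{Wf-omega_D}, then follows by a standard commutator argument. Indeed, the antisymmetric part of $\Lambda_{\M}$ equals $\tfrac{i}{2}\sigma_{\M}$, i.e. it is proportional to the causal propagator $\mathds{E}$, whose wave front set is the full set $C^{+}\cup C^{-}$. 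The flipped bidistribution $\Lambda_{\M}^{\mathrm{flip}}:(x_{1},x_{2})\mapsto\Lambda_{\M}(x_{2},x_{1})$ has wave front set obtained from $WF(\Lambda_{\M})$ by $(x,k)\mapsto(x,-k)$, so it lies in $C^{-}$ once $WF(\Lambda_{\M})\subseteq C^{+}$ is known. Since $C^{+}\cap C^{-}=\emptyset$, for any $\rho\in C^{+}$ the singularity of $i\sigma_{\M}=\Lambda_{\M}-\Lambda_{\M}^{\mathrm{flip}}$ at $\rho$ cannot come from $\Lambda_{\M}^{\mathrm{flip}}$, so it must belong to $WF(\Lambda_{\M})$, forcing $C^{+}\subseteq WF(\Lambda_{\M})$.

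To prove the inclusion I would first record the microlocal input from the horizons. The kernel $-\tfrac{r_{b}^{2}}{\pi}(U_{b1}-U_{b2}-i\epsilon)^{-2}$ defining $\lambda_{\sB}$ is, in the $U_{b}$ variable, the standard positive-frequency two-point distribution; its wave front set is contained in $\{(U_{b1},k_{1};U_{b2},k_{2}):U_{b1}=U_{b2},\,k_{1}=-k_{2},\,k_{1}>0\}$, the angular dependence being smooth. This is precisely where the $i\epsilon$-prescription (equivalently, the choice of $\mu$) fixes the sign of the admissible covectors. The same statement holds on $\sC$ with $U_{b}$ replaced by $V_{c}$.

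Next I would propagate this information into $\sD$. Writing $\Lambda_{\sB}(f,h)=\lambda_{\sB}(\psi^{f}_{\sB},\psi^{h}_{\sB})$ with $\psi^{f}_{\sB}=(\mathds{E}f)_{\upharpoonright\sB}$ exhibits $\Lambda_{\sB}$ (and likewise $\Lambda_{\sC}$) as the pull-back of the positive-frequency boundary kernel along the bulk-to-boundary map $f\mapsto(\mathds{E}f)_{\upharpoonright\sB}$. Because $\mathds{E}f$ solves the Klein-Gordon equation, $\Lambda_{\M}$ is a weak bisolution in $\sD$, so by the propagation of singularities theorem \citep{DuistermaatHormander72} its wave front set is confined to pairs of null covectors lying over a common bicharacteristic, with $(x_{1},k_{1})\sim(x_{2},k_{2})$. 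Since $\sB\cup\sC$ is a Cauchy surface for $\M$, every such bicharacteristic meets $\sB$ or $\sC$, and the remaining task is to show that the positive-frequency sign $k>0$ carried on the horizon is inherited as $k_{1}\in\overline{V}_{+}$ in the bulk. This is the content of the microlocal restriction/pull-back estimate of \citep{SahlmannVerch00,DappiaggiMorettiPinamonti09}, whose proof I would transcribe here.

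The main obstacle is this last step: the microlocal control of the bulk-to-boundary map across a \emph{characteristic} (null) surface. One must verify that the relevant wave front sets never contain covectors conormal to $\sB$ or $\sC$, so that restriction to the horizon is microlocally well-defined, and that no spurious singularities are generated at the bifurcation spheres $\cB_{b},\cB_{c}$ or at the edges where $\sD$ abuts the horizons. Here I expect the argument to be technically lighter than in \citep{DappiaggiMorettiPinamonti09}: the decay estimates \eqref{Sch-dS_decay} of \citep{DafermosRodnianski07} let the ordinary Fourier-Plancherel transform be applied directly to the elements of $S(\sB)$ and $S(\sC)$, so the singular analysis of the boundary kernel and of its pull-back is cleaner, while the propagation-of-singularities argument rules out any contribution not originating on $\sB\cup\sC$.
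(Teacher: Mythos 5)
Your overall architecture (prove $WF((\Lambda_{\M})_{\upharpoonright\sD\times\sD})\subseteq C^{+}$, then upgrade the inclusion to an equality using the antisymmetric part) matches the paper, and your equality step is essentially the standard argument the paper imports from item (ii) of Remark 5.9 in \citep{SahlmannVerch01}. But for the inclusion itself you take a route that is genuinely different from the paper's, and that route has a real gap. The paper's proof of this lemma never performs a microlocal pull-back of the boundary kernel into $\sD$. Instead it exploits the stationarity of the state under the Killing flow in $\sD$: the two-point functions $\Lambda_{\sB}$ and $\Lambda_{\sC}$ separately satisfy the KMS-type identity \eqref{Lambda-timeshift} (at the distinct inverse temperatures $\beta_{b}$ and $\beta_{c}$), from which the asymptotic pair correlation spectrum dichotomy (Proposition \ref{prop-ACS}, i.e.\ item (2) of Proposition 2.1 of \citep{SahlmannVerch00}) follows; the proof of Theorem 5.1 of \citep{SahlmannVerch00} is then rerun with passivity replaced by this dichotomy. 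Your proposal uses none of this structure --- the KMS property, the invariance under $\beta^{(X)}_{t}$, and the $ACS$ machinery never appear --- yet these are the actual content of the paper's proof for $\sD$.

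The step you defer --- showing that the positive frequency $k>0$ on the horizon is ``inherited as $k_{1}\in\overline{V}_{+}$ in the bulk'', citing a ``microlocal restriction/pull-back estimate of \citep{SahlmannVerch00,DappiaggiMorettiPinamonti09}'' --- is precisely what is missing, and the cited sources do not contain it in the form you need. \citep{SahlmannVerch00} has no bulk-to-boundary analysis at all (it concerns passive states of a time evolution; no horizons or characteristic surfaces appear), and the corresponding analysis in \citep{DappiaggiMorettiPinamonti09} (composition of the boundary kernel with $\mathds{E}_{\upharpoonright\sB}\otimes\mathds{E}_{\upharpoonright\sB}$ via Theorem 8.2.13 of \citep{Hormander-I}) is their argument for the black-hole \emph{interior}; for the wedge they deliberately switch to the passivity-style argument, exactly as this paper does for $\sD$ in Part 1 while reserving the composition argument for $II$ and $II'$ in Part 2. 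Moreover, the decay input that makes the composition argument work here, Proposition \ref{prop_rapid-decrease}, is stated only for $x\in II$ (or $II'$) with the geodesic meeting $\sB$ at non-negative $U_{b}$ (resp.\ $\sC$ at non-negative $V_{c}$); for $x\in\sD$ the relevant geodesics cross $\Hor_{b}^{-}$ or $\Hor_{c}^{-}$ at strictly negative $U_{b}$ (resp.\ $V_{c}$), ranging over the entire noncompact past horizons, and no analogue of the rapid-decrease estimate is available for those configurations. A smaller but related point: the propagation of singularities theorem by itself only confines $WF(\Lambda_{\M})$ to $\mathcal{N}_{g}\times\mathcal{N}_{g}$ and makes it invariant under the bicharacteristic flow in each entry \emph{separately}; the cross-correlation $(x_{1},k_{1})\sim(x_{2},k_{2})$ between the two entries is not delivered by that theorem and is part of what must be proven. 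As written, therefore, your plan cannot be completed from the ingredients you invoke; closing it for $\sD$ requires the stationarity/KMS argument of the paper (or a substantial new estimate replacing Proposition \ref{prop_rapid-decrease} on the past horizons).
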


\begin{proof}
In \citep{SahlmannVerch00} the authors proved that, given a state $\omega$, if it can be written as a convex combination of ground and KMS states at an inverse temperature $\beta>0$ (those authors named such state a {\it strictly passive state}), then its two-point function satisfies the microlocal spectrum condition, thus being a Hadamard state. However, our state $\omega_{\M}$ is not such a state, then we cannot directly apply this result. Nonetheless, as remarked in \citep{DappiaggiMorettiPinamonti09}, the passivity of the state is not an essential condition of the proof. Hence we will present here the necessary material to complete the proof that our state $\omega_{\M}$ is a Hadamard state in the region $\sD$. The proof follows the lines of the above cited papers.

First we note that, for every $f\in\cC_{0}^{\infty}\left(\mathbb{R};\mathbb{R}\right)$ and $h_{1},h_{2}\in\cC_{0}^{\infty}\left(\sD;\mathbb{R}\right)$, $\Lambda_{\sB}$ and $\Lambda_{\sC}$ satisfy
\begin{equation}
\int_{\mathbb{R}}\hat{f}(t)\Lambda_{\sB}(h_{1}\otimes\beta_{t}^{(X)}(h_{2}))dt=\int_{\mathbb{R}}\hat{f}(t+i\beta_{b})\Lambda_{\sB}(\beta_{t}^{(X)}(h_{2})\otimes h_{1})dt
\label{Lambda-timeshift}
\end{equation}
(for $\sC$, just change $\beta_{b}\rightarrow\beta_{c}$). For these states, we can define a subset of $\mathbb{R}^{2}\diagdown\{0\}$, the {\it global asymptotic pair correlation spectrum}, in the following way: we call a family $(A_{\lambda})_{\lambda>0}$ with $A_{\lambda}\in W(S(\sD))$ a {\it global testing family} in $W(S(\sD))$ provided there is, for each continuous semi-norm $\sigma$, an $s\geq 0$ (depending on $\sigma$ and on the family) such that
\[\underset{\lambda}{\textrm{sup}}\, \lambda^{s}\sigma(A_{\lambda}^{\ast}A_{\lambda})<\infty \; .\]
The set of global testing families will be denoted by {\bf A}.

Let $\omega$ be a state on $W(S(\sD))$ and ${\bf \xi}=(\xi_{1},\xi_{2})\in\mathbb{R}^{2}\diagdown\{0\}$. Then we say that ${\bf \xi}$ is a {\it regular direction} for $\omega$, with respect to the continuous one-parametric group of $\ast$-automorphisms $\{\alpha_{t}\}_{t\in\mathbb{R}}$ induced by the action of the Killing vector field\footnote{We remind the reader that, in the region $\sD$, $X=\partial_{t}$.} $X$, if there exists some $h\in\cC_{0}^{\infty}(\mathbb{R}^{2})$ and an open neighborhood $V$ of ${\bf \xi}$ in $\mathbb{R}^{2}\diagdown\{0\}$ such that, for each $s\in\mathbb{N}_{+}$, there are $C_{s},\lambda_{s}>0$ so that
\[\underset{{\bf k}\in V}{\textrm{sup}}\left|\int e^{-i\lambda^{-1}(k_{1}t_{1}+k_{2}t_{2})}h(t_{1},t_{2})\omega\left(\alpha_{t_{1}}(A_{\lambda})\alpha_{t_{2}}(B_{\lambda})\right)dt_{1}dt_{2}\right|<C_{s}\lambda^{s} \quad \textrm{as }\lambda\rightarrow 0\]
holds for all $(A_{\lambda})_{\lambda>0},(B_{\lambda})_{\lambda>0}\in {\bf A}$, and for $0<\lambda<\lambda_{s}$.

The complement in $\mathbb{R}^{2}\diagdown\{0\}$ of the set of regular directions of $\omega$ is called the {\it global asymptotic pair correlation spectrum} of $\omega$, $ACS_{\bf A}^{2}(\omega)$.

As noted in \citep{DappiaggiMorettiPinamonti09}, the fact that the two-point functions $\Lambda_{\sB}$ and $\Lambda_{\sC}$ satisfy \eqref{Lambda-timeshift}, suffices to prove
\begin{prop}\label{prop-ACS}
Let $\omega$ be an $\{\alpha_{t}\}_{t\in\mathbb{R}}$-invariant KMS state at inverse temperature $\beta >0$. Then,
\begin{align}
& either \quad ACS_{\bf A}^{2}(\omega)=\emptyset \; , \nonumber \\
& or \quad ACS_{\bf A}^{2}(\omega)=\left\{(\xi_{1},\xi_{2})\in{\mathcal T}^{*}\left(\sD\times\sD\right)\diagdown\{0\}\, |\, \xi_{1}(X)+\xi_{2}(X)=0\right\} \; .
\label{ACS}
\end{align}
\end{prop}
The proof of this Proposition can be found in the proof of item (2) of Proposition 2.1 in \citep{SahlmannVerch00}.

With this result, we can turn our attention to Theorem 5.1 in \citep{SahlmannVerch00}, where they prove that the wave front set of the two-point function of a strictly passive state which satisfies weakly the equations of motion\footnote{We say that a functional $F$ is a weak solution of a differential operator $P$ if, for $\phi$ such that $P\phi=0$, $PF[\phi]=F[P\phi]=0$.}, in both variables, and whose symmetric and antisymmetric parts are smooth at causal separation, is contained in the rhs of \eqref{Wf-omega_D}. As further noted in \citep{DappiaggiMorettiPinamonti09}, the passivity of the state is only employed in the proof of step (2) of the mentioned Theorem. However, what is actually needed for this proof is the result of Proposition \ref{prop-ACS}. Moreover, as proved in step (3) of the mentioned Theorem, the antisymmetric part of the two-point function of the state is smooth at causal separation if and only if the symmetric part is also smooth at causal separation. The antisymmetric part of the two-point function of our state, by definition, satisfies this condition. Besides, the two-point function of our state $\omega_{\M}$ satisfies weakly the equations of motion in both variables. Therefore, with the only modification being the substitution of the passivity of the state by the result of Proposition \ref{prop-ACS}, we have proved, as the authors of \citep{DappiaggiMorettiPinamonti09} did, an adapted version of Theorem 5.1 of \citep{SahlmannVerch00}. At last, as stated in item (ii) of Remark 5.9 in \citep{SahlmannVerch01}, the wave front set of the two-point function of a state being contained in the rhs of \eqref{Wf-omega_D} implies that the wave front set is equal to this set. 
\end{proof}

{\it Part 2:}\ \ Our analysis here will be strongly based on the Propagation of Singularities Theorem (Theorem 6.1.1 in \citep{DuistermaatHormander72}), which makes use of the concepts of characteristics and bicharacteristics of a linear differential operator, mentioned in section \ref{subsec_wave-eq}. The PST, applied to the weak bisolution $\Lambda_{\M}$ implies, on the one hand, that
\begin{equation}
WF(\Lambda_{\M})\subset\left(\{0\}\cup\mathcal{N}_{g}\right)\times\left(\{0\}\cup\mathcal{N}_{g}\right) \; ,
\label{WF_0+null}
\end{equation}
while, on the other hand,
\begin{equation}
\textrm{if }(x,k_{x};y,k_{y})\in WF(\Lambda_{\M}) \; , \; \textrm{then} \; B(x,k_{x})\times B(y,k_{y})\subset WF(\Lambda_{\M}) \; .
\label{bicharac_WF}
\end{equation}

We will now quote from \citep{DappiaggiMorettiPinamonti09} a couple of technical results which will be useful in the final proof. The proof of these results can be found in \ref{tech_results}.

The first proposition characterizes the decay properties, with respect to $p\in{\mathcal T}^{\ast}\M$, of the distributional Fourier transforms:
\[\psi^{f_{p}}_{\sB} \coloneqq \mathds{E}\left(fe^{i\langle p,\cdot \rangle}\right)_{\upharpoonright\sB} \quad ; \quad \psi^{f_{p}}_{\sC} \coloneqq \mathds{E}\left(fe^{i\langle p,\cdot \rangle}\right)_{\upharpoonright\sC} \; ,\]
where we have used the complexified version of the causal propagator, which enjoys the same causal and topological properties as those of the real one. Henceforth $\langle \cdot , \cdot \rangle$ denotes the scalar product in $\mathbb{R}^{4}$ and $|\cdot|$ the corresponding norm.

\begin{prop}\label{prop_rapid-decrease}
Let us take $(x,k_{x})\in\mathcal{N}_{g}$ such that (i) $x\in II$ (or $II'$) and (ii) the unique inextensible geodesic $\gamma$ cotangent to $k_{x}$ at $x$ intersects $\sB$ ($\sC$) in a point whose $U_{b}$ ($V_{c}$) coordinate is non-negative. Let us also fix $\chi'\in\cC_{0}^{\infty}(\sB;\mathbb{R})$ with $\chi'=1$ if $U_{b}\in\left(-\infty,U_{b_{0}}\right]$ and $\chi'=0$ if $U_{b}\in\left[U_{b_{1}},+\infty\right)$ for constants $U_{b_{0}}<U_{b_{1}}<0$ ($\chi'\in\cC_{0}^{\infty}(\sC;\mathbb{R})$, $\chi'=1$ if $V_{c}\in\left(-\infty,V_{c_{0}}\right]$ and $\chi'=0$ if $V_{c}\in\left[V_{c_{1}},+\infty\right)$, $V_{c_{0}}<V_{c_{1}}<0$).

For any $f\in\cC_{0}^{\infty}(\M)$ with $f(x)=1$ and sufficiently small support, $k_{x}$ is a direction of rapid decrease for both $p\mapsto\lVert\chi'\psi^{f_{p}}_{\sB}\rVert_{\sB}$ and $p\mapsto\lVert\psi^{f_{p}}_{\sC}\rVert_{\sC}$ ($p\mapsto\lVert\psi^{f_{p}}_{\sB}\rVert_{\sB}$ and $p\mapsto\lVert\chi'\psi^{f_{p}}_{\sC}\rVert_{\sC}$), where $\rVert\cdot\lVert_{\sB}$ is the norm induced by $\lambda_{\sB}$ (and similarly for $\sC$; see equations \eqref{Fourier_sB-H}-\eqref{linear_map}).
\end{prop}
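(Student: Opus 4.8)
The plan is to adapt the argument used for the corresponding statement in \citep{DappiaggiMorettiPinamonti09}, taking advantage of the simplification afforded by the decay estimates \eqref{Sch-dS_decay}: since the restricted solutions are genuinely square-integrable, the ordinary Fourier--Plancherel transform is available throughout. Recall that ``$k_{x}$ is a direction of rapid decrease for $p\mapsto\lVert\chi'\psi^{f_p}_{\sB}\rVert_{\sB}$'' means that there is a conic neighbourhood $\Gamma\subset\mathbb{R}^{4}\diagdown\{0\}$ of $k_{x}$ such that, for every $N\in\mathbb{N}$, there is $C_{N}>0$ with $\lVert\chi'\psi^{f_p}_{\sB}\rVert_{\sB}\leq C_{N}(1+|p|)^{-N}$ for all $p\in\Gamma$.

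First I would reduce the norm to an oscillatory integral. By the Hilbert space isomorphism \eqref{Fourier_sB-H_isomorphism},
\[\lVert\chi'\psi^{f_p}_{\sB}\rVert_{\sB}^{2}=\int_{\mathbb{R}_{+}\times\mathbb{S}^{2}}\bigl|\mathcal{F}(\chi'\psi^{f_p}_{\sB})(K,\omega)\bigr|^{2}\,2K\,dK\wedge r_{b}^{2}d\mathbb{S}^{2}\; ,\]
so it suffices to control, uniformly in $\omega\in\mathbb{S}^{2}$, the joint decay in $(K,p)$ of $\mathcal{F}(\chi'\psi^{f_p}_{\sB})(K,\omega)$. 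Writing $\psi^{f_p}_{\sB}=\mathds{E}(fe^{i\langle p,\cdot\rangle})_{\upharpoonright\sB}$ and inserting the Schwartz kernel of the restricted causal propagator, this quantity becomes an integral over $\textrm{supp}(\chi')\times\textrm{supp}(f)$ carrying the phase $e^{iKU_{b}+i\langle p,z\rangle}$.

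The microlocal input is the wave front set of the causal propagator \citep{DuistermaatHormander72,Hormander-I}: a covector at a point $y\in\sB$ and a covector at $z\in\textrm{supp}(f)$ can contribute to the singular behaviour of the kernel only if they are cotangent to one and the same null bicharacteristic strip, in the sense of \eqref{bichar_normhyp}. This is where hypothesis (ii) enters: the unique null geodesic $\gamma$ cotangent to $k_{x}$ at $x$ meets $\sB$ at a point with $U_{b}\geq 0$, hence \emph{outside} $\textrm{supp}(\chi')\subset\{U_{b}<U_{b_{1}}<0\}$. The heart of the proof, and the step I expect to be the main obstacle, is to upgrade this single-geodesic statement to a uniform non-stationary phase estimate. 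Choosing $\textrm{supp}(f)$ small enough around $x$ and $\Gamma$ a narrow enough cone around $k_{x}$, every null geodesic issued from $\textrm{supp}(f)$ with cotangent direction in $\Gamma$ still crosses $\sB$ at some $U_{b}>U_{b_{1}}$, hence avoids $\textrm{supp}(\chi')$; consequently, for $y\in\textrm{supp}(\chi')$ with $U_{b}$-frequency $K>0$ and for $z\in\textrm{supp}(f)$ with $p\in\Gamma$, the phase has no stationary point. Repeated integration by parts then yields, for all $N$,
\[\bigl|\mathcal{F}(\chi'\psi^{f_p}_{\sB})(K,\omega)\bigr|\leq C_{N}\,(1+K+|p|)^{-N}\; ,\]
uniformly in $\omega$. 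The delicate point is to make this bound uniform while restricting to the \emph{characteristic} surface $\sB$; here the square-integrability granted by \eqref{Sch-dS_decay} and the continuity of the restriction (Theorem \ref{algebra-horizon-isomorphism}) do the work that in \citep{DappiaggiMorettiPinamonti09} required substantially more effort.

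Finally I would assemble the estimate and treat $\sC$. Inserting the last bound into the integral over $K$ and using the weight $2K$,
\[\lVert\chi'\psi^{f_p}_{\sB}\rVert_{\sB}^{2}\leq C_{N}\int_{0}^{\infty}(1+K+|p|)^{-2N}\,2K\,dK\leq C_{N}'\,(1+|p|)^{-2N+2}\; ,\]
and, $N$ being arbitrary, $k_{x}$ is a direction of rapid decrease for $p\mapsto\lVert\chi'\psi^{f_p}_{\sB}\rVert_{\sB}$. For the companion statement on $\sC$ one observes that, because $\sB\cup\sC$ is a Cauchy surface for $\M$ and $\gamma$ already meets it on $\sB$, the geodesic $\gamma$ cannot meet $\sC$ at all; thus the bicharacteristic obstruction holds on the whole of $\sC$, no cutoff is needed, and the same reduction and non-stationary phase argument give rapid decrease of $p\mapsto\lVert\psi^{f_p}_{\sC}\rVert_{\sC}$ directly. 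The parenthetical case, with the roles of $\sB$ and $\sC$ (and of $U_{b}$ and $V_{c}$) exchanged, follows by the identical argument.
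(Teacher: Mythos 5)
Your proposal takes a genuinely different route from the paper's, and the route has a hole exactly where the paper's own proof does its real work. The paper never runs a stationary-phase argument on the horizon kernel. Its proof analyses the behaviour, for large $p$ in a conic neighbourhood of $k_{x}$, of the constant $C_{\phi}$ appearing in the Dafermos--Rodnianski estimates \eqref{Sch-dS_decay}: $C_{\phi}$ is a geometric constant times $\sqrt{{\bf E}_{0}}$, the square root of the energy of the Cauchy data of $\mathds{E}f_{p}$ on the bulk Cauchy surface $\Sigma$ of $\sK$. The localization is then placed not on the horizon but on $\Sigma$: one fixes $\rho$ with $\rho=1$ on $J^{-}(\mathrm{supp}(f);\M)\cap\Sigma$ and such that the null geodesics issuing from $\mathrm{supp}(f)$ with cotangent $k_{x}$ miss $\mathrm{supp}(\rho)$; rapid decrease in $p$ of the so-localized data, domain-of-dependence reasoning, and the Dafermos--Rodnianski bound then give rapid decrease of the horizon norms, including their tails.

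The gap in your argument is that non-stationary phase and wave-front-set calculus are local statements — rapid decrease of Fourier transforms of \emph{compactly supported} localizations — while neither norm you must estimate is compactly supported. By hypothesis $\chi'=1$ on the whole half-line $U_{b}\in(-\infty,U_{b_{0}}]$, so $\mathrm{supp}(\chi')$ is non-compact, and on $\sC$ there is no cutoff at all: $\lVert\psi^{f_{p}}_{\sC}\rVert_{\sC}$ integrates over all of $\sC$. Your integration-by-parts estimate says nothing about the tails $U_{b}\to-\infty$ or $V_{c}\to\pm\infty$; what is needed there is a bound on $\psi^{f_{p}}$ along the horizon whose \emph{constant decays rapidly in} $p$, and this is precisely what tracking the $p$-dependence of $C_{\phi}$ (through the localized initial energy of $\mathds{E}f_{p}$) supplies in the paper. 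Your claim that ``the square-integrability granted by \eqref{Sch-dS_decay} \ldots does the work'' misses the point: square-integrability holds for each fixed $p$ separately and carries no uniformity in $p$ whatsoever. Two smaller remarks: the kernel of $\mathds{E}$ restricted to the characteristic surface $\sB$ is a distribution, so ``repeated integration by parts'' must be replaced by a careful use of H\"ormander's restriction and composition theorems even on the compactly supported piece; and your conclusion that $\gamma$ cannot meet $\sC$ is correct, but the one-line Cauchy-surface justification is too quick — null curves may meet an achronal set in more than one point, and the segment between two putative crossings need not stay in the domain of dependence of $\sB\cup\sC$. The clean argument is monotonicity of the Kruskal null coordinates along causal curves together with the fact that $\sB$ and $\sC$ are disjoint.
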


The second technical result is the following Lemma, which states that 
\begin{lem}\label{lem_isolated_sing}
Isolated singularities do not enter the wave front set of $\Lambda_{\M}$, i.e.
\[(x,k_{x};y,0)\notin WF(\Lambda_{\M}) \quad ; \quad (x,0;y,k_{y})\notin WF(\Lambda_{\M})\]
\[\textrm{if }x,y,\in\M \quad ; \quad k_{x}\in{\mathcal T}^{\ast}_{x}\M \, , \, k_{y}\in{\mathcal T}^{\ast}_{y}\M \; .\]
Hence, \eqref{WF_0+null} yields
\begin{equation}
WF(\Lambda_{\M})\subset\mathcal{N}_{g}\times\mathcal{N}_{g} \; .
\label{WF_null}
\end{equation}
\end{lem}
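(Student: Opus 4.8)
The plan is to rule out, for $x,y\in\M$ and $k_{x},k_{y}\neq0$, the ``mixed'' elements $(x,k_{x};y,0)$ and $(x,0;y,k_{y})$ of $WF(\Lambda_{\M})$; together with the propagation-of-singularities inclusion \eqref{WF_0+null} this is exactly what yields \eqref{WF_null}. First I would note that by \eqref{WF_0+null} the surviving covector in any such element is necessarily null, so it suffices to treat $k_{x},k_{y}\in\mathcal{N}_{g}$. Since $\Lambda_{\M}$ is the two-point function of a state it is Hermitian, $\Lambda_{\M}(f,h)=\overline{\Lambda_{\M}(h,f)}$, whence $WF(\Lambda_{\M})$ is invariant under $(x,k_{x};y,k_{y})\mapsto(y,-k_{y};x,-k_{x})$; it is therefore enough to exclude $(x,k_{x};y,0)$, and, using that the singularity in the first slot propagates along bicharacteristics while the region $\sD\times\sD$ is already settled by Lemma \ref{lemma-omega_D_Hadamard}, I may take the carrier $x$ of the non-zero null covector to lie in $II$ or $II'$.

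I would then use the Fourier characterization of the wave front set: $(x,k_{x};y,0)\notin WF(\Lambda_{\M})$ provided there are $f,h\in\cC_{0}^{\infty}(\M)$ with $f(x)=h(y)=1$ and sufficiently small support, and a conic neighbourhood $\Gamma$ of $(k_{x},0)$, such that $\Lambda_{\M}(f_{-\xi},h_{-\eta})=O((1+|\xi|+|\eta|)^{-N})$ for every $N$ on $\Gamma$, where $f_{p}\coloneqq f\,e^{i\langle p,\cdot\rangle}$ and $\psi_{\sB}^{f_{p}}=\mathds{E}(f_{p})_{\upharpoonright\sB}$, $\psi_{\sC}^{f_{p}}=\mathds{E}(f_{p})_{\upharpoonright\sC}$. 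Writing $\Lambda_{\M}=\Lambda_{\sB}+\Lambda_{\sC}$ as in \eqref{2ptfcn_M=B+C} and applying Cauchy--Schwarz in $H_{\sB}$ and $H_{\sC}$ gives
\[
|\Lambda_{\M}(f_{-\xi},h_{-\eta})|\leq \|\psi_{\sB}^{f_{-\xi}}\|_{\sB}\,\|\psi_{\sB}^{h_{-\eta}}\|_{\sB}+\|\psi_{\sC}^{f_{-\xi}}\|_{\sC}\,\|\psi_{\sC}^{h_{-\eta}}\|_{\sC}\; .
\]
On $\Gamma$ one has $\xi\to\infty$ along $k_{x}$ while $|\eta|\leq\epsilon|\xi|$. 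The factors $\|\psi_{\bullet}^{h_{-\eta}}\|_{\bullet}$ are at most polynomially bounded in $(1+|\eta|)$ (continuity of $h\mapsto\psi^{h}$ and of the Fourier--Plancherel transform), so the whole estimate reduces to rapid decay of $\|\psi_{\bullet}^{f_{-\xi}}\|_{\bullet}$ in $|\xi|$ near the direction $k_{x}$.

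This last decay is supplied by Proposition \ref{prop_rapid-decrease}: since $x\in II$ (the case $x\in II'$ being identical) and $k_{x}$ is null, the inextensible geodesic cotangent to $k_{x}$ meets $\sB$ or $\sC$, and, after relabelling the horizons if needed, the proposition yields rapid decrease of $\|\chi'\psi_{\sB}^{f_{-\xi}}\|_{\sB}$ and of $\|\psi_{\sC}^{f_{-\xi}}\|_{\sC}$. The conceptual reason the mixed elements are absent is that each bulk term is the pull-back, through the characteristic restriction $\mathds{E}(\cdot)_{\upharpoonright\sB}$, of the boundary two-point function \eqref{product-U_b}, whose kernel $\propto(U_{b1}-U_{b2}-i\epsilon)^{-2}$ has wave front set concentrated on $\{U_{b1}=U_{b2},\,K_{1}=-K_{2}>0\}$: both boundary frequencies are non-zero and positivity-selected, so there are no isolated singularities on the horizons, and null geodesics carry non-zero covectors to non-zero covectors.

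The hard part is that $\sB$ and $\sC$ are null (characteristic) hypersurfaces, so H\"ormander's pull-back theorem for wave front sets is unavailable and the quantitative estimate of Proposition \ref{prop_rapid-decrease} must replace a black-box argument. Concretely, the inequality above only controls the cut-off norm $\|\chi'\psi_{\sB}^{f_{-\xi}}\|_{\sB}$; the complementary contribution $\lambda_{\sB}\big((1-\chi')\psi_{\sB}^{f_{-\xi}},\psi_{\sB}^{h_{-\eta}}\big)$, whose first entry has non-decaying norm because it is supported where the geodesic meets $\Hor_{b}^{+}$, must be handled by the frequency mismatch in \eqref{product-Fourier}: the $U_{b}$-Fourier content of $(1-\chi')\psi_{\sB}^{f_{-\xi}}$ concentrates at $|K|\sim|\xi|\to\infty$, whereas for bounded $\eta$ that of $\psi_{\sB}^{h_{-\eta}}$ stays in a fixed compact set of frequencies, so the $2K\,dK$-weighted overlap decays rapidly, while in the regime $|\eta|\to\infty$ with $|\eta|\leq\epsilon|\xi|$ the decay in $|\xi|$ beats the polynomial growth in $|\eta|$. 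Summing the two horizons and invoking the Hermitian symmetry to dispatch $(x,0;y,k_{y})$ then establishes that no mixed element lies in $WF(\Lambda_{\M})$, and \eqref{WF_null} follows at once from \eqref{WF_0+null}.
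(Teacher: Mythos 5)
Your opening moves are fine, but the reduction you base everything on has a genuine gap. After the (correct) Hermitian-symmetry step, you claim that propagation of singularities in the first slot lets you assume that the carrier $x$ of the non-zero null covector lies in $II$ or $II'$. That reduction only works when the bicharacteristic $B(x,k_{x})$ actually acquires a representative over $II\cup II'$; it fails for $x\in\sD$ whenever the null geodesic through $(x,k_{x})$ never leaves $\sD$, or leaves it only through the past horizons $\Hor_{b}^{-}\cup\Hor_{c}^{-}$, i.e.\ through the boundary of $\M$. Such geodesics exist in Schwarzschild--de Sitter: the photon sphere sits at $r=3M$ with $r_{b}<3M<r_{c}$, so there are inextendible null geodesics (the circular orbits and those spiralling onto them) entirely contained in $\sD$. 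For an element $(x,k_{x};y,0)$ with such an $(x,k_{x})$ and $y\in II\cup II'$, neither Lemma \ref{lemma-omega_D_Hadamard} (which needs both points in $\sD$) nor propagation in the first slot (which never moves $y$, and never moves $x$ into $II\cup II'$) applies; and your quantitative machinery is also unavailable, because Proposition \ref{prop_rapid-decrease} assumes $x\in II$ (or $II'$) and that the geodesic meets $\sB$ ($\sC$) at non-negative $U_{b}$ ($V_{c}$) --- precisely the hypotheses that fail here. Worse, Cauchy--Schwarz plus one-sided norm decay cannot work in principle for directions whose geodesic reaches $\Hor_{b}^{-}$ at negative $U_{b}$: along those directions $\lVert\psi^{f_{-\xi}}_{\sB}\rVert_{\sB}$ is \emph{not} rapidly decreasing, since these are exactly the directions along which the boundary two-point function is singular (they occur in the Hadamard wave front set of $(\Lambda_{\M})_{\upharpoonright\sD\times\sD}$).

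The paper closes precisely this case ($x\in\sD$ carrying $k_{x}\neq0$, $y\in II$ carrying the zero covector) by a device your proposal never invokes: the invariance of $\Lambda_{\M}$ under the diagonal Killing flow together with Proposition 2.1 of \citep{Verch99}. One first notes that the pair of non-null, non-vanishing covectors $q=(0,\underline{k_{x}})$, $q'=(-k_{xt},0)$ cannot occur in $WF(\Lambda_{\M})$ by \eqref{WF_0+null}; the resulting rapid-decay estimate is then transported, using the invariance $\Lambda_{\M}\circ(\beta^{(X)}_{\tau}\otimes\beta^{(X)}_{\tau'})=\Lambda_{\M}\circ(\beta^{(X)}_{-\tau'}\otimes\beta^{(X)}_{-\tau})$, which in frequency space acts as $\bigl((k_{t},\underline{k}),(k'_{t},\underline{k'})\bigr)\mapsto\bigl((-k'_{t},\underline{k}),(-k_{t},\underline{k'})\bigr)$, into a rapid-decay estimate on a conic neighbourhood of $(k_{x},0)$. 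Without this step (or an equivalent one) your case analysis is incomplete, so the Lemma is not proved. Two secondary remarks: your Hermitian-symmetry reduction is a perfectly good, arguably cleaner, substitute for the paper's observation that the antisymmetric part of $\Lambda_{\M}$ is proportional to $\mathds{E}$, whose wave front set contains no zero covectors; and your ``frequency mismatch'' treatment of $\lambda_{\sB}\bigl((1-\chi')\psi^{f_{-\xi}}_{\sB},\psi^{h_{-\eta}}_{\sB}\bigr)$ is only a heuristic where the paper gives an actual argument --- the supports of $\chi=1-\chi'$ and of a cutoff $\chi''$ covering $supp\,\psi^{h}_{\sB}$ are disjoint, so the kernel $\chi T\chi''$ is smooth and the pairing decays rapidly.
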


Now, we need to analyse the points of $\Lambda_{\M}$ such that $(x,k_{x};y,k_{y})\in\mathcal{N}_{g}\times\mathcal{N}_{g}$ with either $x$, either $y$, or both of them in $\M\diagdown\sD$. The case where either $x$ or $y$ is in $\M\diagdown\sD$ will be treated in {\bf Case A} below. The case when both $x$ and $y$ lie in $\M\diagdown\sD$ will be treated in {\bf Case B}.


{\bf Case A:}\ \ If $x\in\M\diagdown\sD$ and $y\in\sD$ (the symmetric case being analogous), suppose that $(x,k_{x};y,-k_{y})\in WF(\Lambda_{\M})$ and there exists a representative of $(q,k_{q})\in B(x,k_{x})$ such that $(q,k_{q})\in{\mathcal T}^{\ast}(\sD)\diagdown\{0\}$. Then $(q,k_{q};y,-k_{y})\in WF((\Lambda_{\M})_{\upharpoonright(\sD\times\sD)})$ and, by the results of {\it Part 1} above, $WF((\Lambda_{\M})_{\upharpoonright(\sD\times\sD)})$ is of Hadamard form. Since there exists only one geodesic passing through a point with a given cotangent vector, the Propagation of Singularities Theorem allow us to conclude that $(x,k_{x})\sim(y,k_{y})$ with $k_{x}\in\overline{V}_{+}$, thus $WF(\Lambda_{\M})$ is of Hadamard form. We remark that this reasoning is valid for both $x\in II$ and $x\in II'$.

We are still left with the possibility that $x\in\M\diagdown\sD$ and $y\in\sD$, but no representative of $B(x,k_{x})$ lies in ${\mathcal T}^{\ast}(\sD)\diagdown\{0\}$. We intend to show that, in this case, $(x,k_{x};y,-k_{y})\notin WF(\Lambda_{\M})$ for every $k_{y}$. Without loss of generality, we will consider $x\in II$, the case $x\in II'$ being completely analogous.

We start by choosing two functions $f,h\in\cC_{0}^{\infty}(\M;\mathbb{R})$ such that $f(x)=1$ and $h(y)=1$. Since $B(x,k_{x})$ has no representative in $\sD$, there must exist $(q,k_{q})\in B(x,k_{x})$ with $q\in\sB$ such that the coordinate $U_{q}$ is non-negative. Now, considering the supports of $f$ and $h$ to be sufficiently small, we can devise a function $\chi$ such that $\chi(U_{q},\theta,\varphi)=1$ for all $(\theta,\varphi)\in\mathbb{S}^{2}$ and $\chi=0$ on $J^{-}(supp \, h)\cap\sB$. Besides, we can define $\chi'\coloneqq 1-\chi$ and, by using a coordinate patch which identifies an open neighborhood of $supp(f)$ with $\mathbb{R}^{4}$, one can arrange a conical neighborhood $\Gamma_{k_{x}}\in\mathbb{R}^{4}\diagdown\{0\}$ of $k_{x}$ such that all the bicharacteristics $B(s,k_{s})$ with $s\in supp(f)$ and $k_{s}\in\Gamma_{k_{x}}$ do not meet any point of $supp(\chi')$. One can analyse the two-point function $\Lambda_{\M}$ as
\begin{equation}
\Lambda_{\M}(f_{k_{x}}\otimes h_{k_{y}})=\lambda_{\sB}(\chi \psi^{f_{k_{x}}}_{\sB},\psi^{h_{k_{y}}}_{\sB})+\lambda_{\sB}(\chi' \psi^{f_{k_{x}}}_{\sB},\psi^{h_{k_{y}}}_{\sB})+\lambda_{\sC}(\psi^{f_{k_{x}}}_{\sC},\psi^{h_{k_{y}}}_{\sC}) \; .
\label{Lambda_M-chi}
\end{equation}
Lemma \ref{lem_isolated_sing} above tells us that only nonzero covectors are allowed in the wave front set of $\Lambda_{\M}$. The analysis of the points of the form $(x,k_{x};y,k_{y})\in\mathcal{N}_{g}\times\mathcal{N}_{g}$ is similar to the analysis presented after equation \eqref{Lambda-M_part_unity} in the proof of the mentioned Lemma.  

{\bf Case B:}\ \ The only situation not yet discussed is the case of $x,y\notin\sD$ and $B(x,k_{x})$, $B(y,k_{y})$ having no representatives in ${\mathcal T}^{\ast}(\sD)\diagdown\{0\}$ (if either $B(x,k_{x})$ or $B(y,k_{y})$ has a representative in ${\mathcal T}^{\ast}(\sD)\diagdown\{0\}$, then we fall back in the previous cases).

As in {\bf Case A}, we will consider $x,y\in II$, the case $x,y\in II'$ being completely analogous. We introduce a partition of unit $\chi,\chi'$ on $\sB$, $\chi,\chi'\in\cC_{0}^{\infty}(\sB;\mathbb{R})$ and $\chi+\chi'=1$. Moreover, these functions can be devised such that the inextensible null geodesics $\gamma_{x}$ and $\gamma_{y}$, which start respectively at $x$ and $y$ with cotangent vectors $k_{x}$ and $k_{y}$ intersect $\sB$ in $U_{x}$ and $U_{y}$ respectively, included in two open neighborhoods, $\mathcal{O}_{x}$ and $\mathcal{O}_{y}$where $\chi'$ vanishes (possibly $U_{x}=U_{y}$ and $\mathcal{O}_{x}=\mathcal{O}_{y}$; we omit the subscript $_{b}$ to simplify the notation). Hence, the two-point function reads
\begin{align}
\lambda_{\M}(\psi^{f_{k_{x}}},\psi^{h_{k_{y}}})&=\lambda_{\sB}(\chi\psi^{f_{k_{x}}}_{\sB},\chi\psi^{h_{k_{y}}}_{\sB})+\lambda_{\sB}(\chi\psi^{f_{k_{x}}}_{\sB},\chi'\psi^{h_{k_{y}}}_{\sB}) \nonumber \\
&+\lambda_{\sB}(\chi'\psi^{f_{k_{x}}}_{\sB},\chi\psi^{h_{k_{y}}}_{\sB})+\lambda_{\sB}(\chi'\psi^{f_{k_{x}}}_{\sB},\chi'\psi^{h_{k_{y}}}_{\sB})+\lambda_{\sC}(\psi^{f_{k_{x}}}_{\sC},\psi^{h_{k_{y}}}_{\sC}) \; .
\label{lambda-chi_chi'}
\end{align}
The results of Proposition \ref{prop_rapid-decrease}, Lemma \ref{lem_isolated_sing} and of {\bf Case A} above tell us that all but the first term in the rhs of \eqref{lambda-chi_chi'} are smooth. We will then focus on this term. Writing the integral kernel of $\lambda_{\sB}$ as $T$, interpreted as a distribution in $(\cC_{0}^{\infty})'(\sB\times\sB)$, we notice that, as an element of $(\cC_{0}^{\infty})'(\sB\times\sB)$, $\lambda_{\sB}$ can be written as
\begin{equation}
\lambda_{\sB}(\chi\psi^{f_{k_{x}}}_{\sB},\chi\psi^{h_{k_{y}}}_{\sB})=\chi T\chi\left(\mathds{E}_{\upharpoonright \sB}\otimes\mathds{E}_{\upharpoonright \sB}(f\otimes h)\right) \; ,
\end{equation}
where $\mathds{E}_{\upharpoonright \sB}$ is the causal propagator with one entry restricted to $\sB$ and $\chi T\chi\in(\cC^{\infty})'(\sB\times\sB)$ (as an element of the dual space to $\cC^{\infty}$, $\chi T\chi$ is itself a compactly supported bidistribution). For the composition $\chi T\chi(\mathds{E}_{\upharpoonright \sB}\otimes\mathds{E}_{\upharpoonright \sB})$ to make sense as a composition of bidistributions, Theorem 8.2.13 of \citep{Hormander-I} shows that it is sufficient that
\begin{equation}
WF(\chi T\chi)\cap WF'(\mathds{E}_{\upharpoonright \sB}\otimes\mathds{E}_{\upharpoonright \sB})_{Y\times Y}=\emptyset \; .
\label{WF_comp_cond}
\end{equation}
The subscript $_{Y}$ makes sense if the bidistribution is viewed as an element of $(\cC_{0}^{\infty})'(X\times Y)$ and, for a general bidistribution $\Lambda_{2}$ of this sort\footnote{The subscript $_{Y}$ means that the ``original'' wave front set must contain the zero covector of ${\mathcal T}^{\ast}X$ and the $'$ means that the nonzero covector has its sign inverted. For more details, see section 8.2 of \citep{Hormander-I}},
\begin{equation}
WF'(\Lambda_{2})_{Y}=\left\{(y,\eta);\, (x,0;y,-\eta)\in WF(\Lambda_{2})\, \textrm{for }x\in X\right\} \; .
\label{WF'_Y}
\end{equation}
The wave front set of $\mathds{E}$ was calculated in \citep{Radzikowski96}:
\begin{equation}
WF(\mathds{E})=\left\{(x,k_{x};y,k_{y})\in{\mathcal T}^{\ast}(\M\times\M)\diagdown\{0\} \vert (x,k_{x})\sim(y,-k_{y})\right\} \; .
\label{WF_E}
\end{equation}
The wave front set of $\mathds{E}\otimes\mathds{E}$, from Theorem 8.2.9 of \citep{Hormander-I}, is
\begin{equation}
WF(\mathds{E}\otimes\mathds{E})\subset\left(WF(\mathds{E})\times WF(\mathds{E})\right)\cup\left((supp\mathds{E}\times\{0\})\times WF(\mathds{E})\right)\cup\left(WF(\mathds{E})\times(supp\mathds{E}\times\{0\})\right) \; .
\label{WF_E_otimes_E}
\end{equation}
From this last equation and the fact that the zero covector is not contained in $WF(\mathds{E})$, we conclude that
\begin{equation}
WF'(\mathds{E}_{\upharpoonright \sB}\otimes\mathds{E}_{\upharpoonright \sB})_{Y\times Y}=\emptyset \; .
\label{WF_E_otimes_E_YxY}
\end{equation}
Thus the composition $\chi T\chi(\mathds{E}_{\upharpoonright \sB}\otimes\mathds{E}_{\upharpoonright \sB})$ makes sense as a composition of bidistributions, and Theorem 8.2.13 of \citep{Hormander-I} shows that
\begin{equation}
WF(\chi T\chi(\mathds{E}_{\upharpoonright \sB}\otimes\mathds{E}_{\upharpoonright \sB}))\subset WF(\mathds{E}_{\upharpoonright \sB}\otimes\mathds{E}_{\upharpoonright \sB})_{X\times X}\cup WF'(\mathds{E}_{\upharpoonright \sB}\otimes\mathds{E}_{\upharpoonright \sB})\circ WF(\chi T\chi) \; .
\label{WF_chiTchiEotimesE}
\end{equation}
The same reasoning which led to equation \eqref{WF_E_otimes_E_YxY} leads to the conclusion that the first term in the rhs of \eqref{WF_chiTchiEotimesE} is empty.

The wave front set of $T$ was calculated in Lemma 4.4 of \citep{Moretti08}. We will again introduce a coordinate system at which the coordinate along the integral lines of $X$ is denoted by $t$, the remaining coordinates being denoted by $\underline{x}$. The same splitting will be used for covectors. The wave front set of $T$ is written as
\[WF(T)=A\cup B \; ,\]
where
\begin{align}
A&\coloneqq\left\{\left((t,\underline{x}),(t',\underline{x'});(k_{t},k_{\underline{x}}),(k_{t'},k_{\underline{x'}})\right)\in{\mathcal T}^{\ast}(\sB\times\sB)\diagdown\{0\} \, \vert \, x=x' ; k_{x}=-k_{x'} ; k_{t}>0\right\} \nonumber \\
B&\coloneqq\left\{\left((t,\underline{x}),(t',\underline{x'});(k_{t},k_{\underline{x}}),(k_{t'},k_{\underline{x'}})\right)\in{\mathcal T}^{\ast}(\sB\times\sB)\diagdown\{0\} \, \vert \, \underline{x}=\underline{x'} ; k_{\underline{x}}=-k_{\underline{x'}} ; k_{t}=k_{t'}=0\right\} \; .
\label{WF_T}
\end{align}
With these at hand, the author of \citep{Moretti08} proved that the wave front set \eqref{WF_chiTchiEotimesE} is of Hadamard form.

Hence we have completed the proof of
\begin{thm}\label{theorem-omega_M_Hadamard}
The wave front set of the two-point function $\Lambda_{\M}$ of the state $\omega_{\M}$, individuated in \eqref{2ptfcn_M=B+C} is given by
\begin{equation}
WF(\Lambda_{\M})=\left\{\left(x_{1},k_{1};x_{2},-k_{2}\right) | \left(x_{1},k_{1};x_{2},k_{2}\right)\in {\mathcal T}^{*}\left(\M\times\M\right) \diagdown \{0\} ; (x_{1},k_{1})\sim (x_{2},k_{2}) ; k_{1}\in \overline{V}_{+}\right\} \; ,
\label{Wf-omega_M}
\end{equation}
thus the state $\omega_{\M}$ is a Hadamard state.
\end{thm}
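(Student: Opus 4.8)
The plan is to establish the single inclusion $WF(\Lambda_{\M})\subset C^{+}$; the reverse inclusion then comes for free, because (as recalled in \textit{Part 1}, following item (ii) of Remark 5.9 in \citep{SahlmannVerch01}) for a weak bisolution of the Klein--Gordon equation whose antisymmetric part is the causal propagator, mere containment in $C^{+}$ already forces equality. So the whole task reduces to controlling which covectors are permitted in $WF(\Lambda_{\M})$. First I would feed $\Lambda_{\M}$ into the Propagation of Singularities Theorem (Theorem 6.1.1 of \citep{DuistermaatHormander72}): since $\Lambda_{\M}$ is a weak bisolution of $P=\Box+m^{2}$ in each entry, the PST supplies both the null-cone constraint \eqref{WF_0+null} and the bicharacteristic-invariance property \eqref{bicharac_WF}. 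Invoking Lemma \ref{lem_isolated_sing} to discard the points containing a zero covector, I would then upgrade \eqref{WF_0+null} to \eqref{WF_null}, so that every element of $WF(\Lambda_{\M})$ has the form $(x,k_{x};y,k_{y})$ with $(x,k_{x}),(y,k_{y})\in\mathcal{N}_{g}$.

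The core of the argument is a case analysis organised by how the null strips $B(x,k_{x})$ and $B(y,k_{y})$ sit relative to the static diamond $\sD$. If both base points lie in $\sD$, Lemma \ref{lemma-omega_D_Hadamard} already delivers the Hadamard form. If at least one strip, say $B(x,k_{x})$, has a representative $(q,k_{q})$ over $\sD$, then I would transport the singularity along the unique geodesic into $\sD$, apply the \textit{Part 1} result to $(q,k_{q};y,-k_{y})$, and combine uniqueness of geodesics with the PST to conclude $(x,k_{x})\sim(y,k_{y})$ with $k_{x}\in\overline{V}_{+}$; this is \textbf{Case A}, valid for $x\in II$ and $x\in II'$ alike.

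The delicate situation is \textbf{Case B}, where neither strip reaches $\sD$ and both geodesics meet $\sB$ (or $\sC$) at a nonnegative $U_{b}$-coordinate. Here \textit{Part 1} is unavailable, so I would work directly on the horizon: after splitting the two-point function with a partition of unity $\chi+\chi'=1$ chosen so that the relevant geodesics hit $\sB$ inside $\{\chi'=0\}$, Proposition \ref{prop_rapid-decrease} renders the mixed $\chi$--$\chi'$ terms smooth by supplying rapid decay along $k_{x}$, and the $\sC$-term is handled symmetrically. The surviving term is written as the composition $\chi T\chi\,(\mathds{E}_{\upharpoonright\sB}\otimes\mathds{E}_{\upharpoonright\sB})$. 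I would check admissibility of this composition through H\"ormander's criterion \eqref{WF_comp_cond}, which holds because the absence of the zero covector from $WF(\mathds{E})$ gives \eqref{WF_E_otimes_E_YxY}, and then combine the explicit wave front set $WF(T)=A\cup B$ of \eqref{WF_T} with \eqref{WF_E} via Theorem 8.2.13 of \citep{Hormander-I} to locate $WF\!\left(\chi T\chi\,(\mathds{E}_{\upharpoonright\sB}\otimes\mathds{E}_{\upharpoonright\sB})\right)$.

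I expect \textbf{Case B} to be the genuine obstacle. Because the bulk-to-boundary map only lets me read $\Lambda_{\M}$ off the horizon data, inside $II$ and $II'$ there is no ambient positive-frequency condition to exploit, so the Hadamard form must be extracted purely from the singularity structure of $T$ composed with the restricted propagator. The real work is verifying that the $A\cup B$ structure of \eqref{WF_T}, intertwined with the geodesic relation of \eqref{WF_E}, collapses precisely onto $C^{+}$ rather than generating spurious wrong-frequency covectors; this is exactly the computation of \citep{Moretti08} adapted to our $\sB$. Once \textbf{Case A} and \textbf{Case B} are assembled with the $\sD\times\sD$ result of \textit{Part 1}, every point of $WF(\Lambda_{\M})$ has been shown to lie in $C^{+}$, and the equality \eqref{Wf-omega_M} --- hence the Hadamard property of $\omega_{\M}$ --- follows.
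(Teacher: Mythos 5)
Your overall strategy is the paper's own: prove the single inclusion $WF(\Lambda_{\M})\subset C^{+}$ (equality then following from item (ii) of Remark 5.9 of \citep{SahlmannVerch01}), using the Propagation of Singularities Theorem together with Lemma \ref{lem_isolated_sing} to reduce to \eqref{WF_null}, Lemma \ref{lemma-omega_D_Hadamard} over $\sD\times\sD$, bicharacteristic transport for strips meeting $\mathcal{T}^{\ast}\sD\diagdown\{0\}$, and the composition $\chi T\chi\,(\mathds{E}_{\upharpoonright\sB}\otimes\mathds{E}_{\upharpoonright\sB})$ with the computation of $WF(T)$ from \citep{Moretti08} when neither strip reaches $\sD$.

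However, your case analysis has a genuine hole: the mixed configurations in which exactly one of the two bicharacteristic strips meets $\mathcal{T}^{\ast}\sD\diagdown\{0\}$. Take $y\in\sD$ and $x\in II$ such that the inextensible null geodesic cotangent to $k_{x}$ hits $\sB$ at $U_{b}\geq 0$, so that $B(x,k_{x})$ never reaches $\sD$. This pair falls under your Case A (since $B(y,k_{y})$ trivially has a representative over $\sD$), but your Case A argument cannot run: there is no representative of $(x,k_{x})$ to transport into $\sD$, so the pair $(q,k_{q};y,-k_{y})$ you would hand to \textit{Part 1} never lies over $\sD\times\sD$, and Lemma \ref{lemma-omega_D_Hadamard} --- a statement about $(\Lambda_{\M})_{\upharpoonright\sD\times\sD}$ only --- says nothing about it. Your Case B excludes this configuration by hypothesis, since it requires that \emph{neither} strip reach $\sD$. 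Note also that the desired conclusion here is of the opposite character from Case A's: two distinct points on a common null geodesic with parallel-transported covectors would lie on the \emph{same} strip, so $(x,k_{x})\sim(y,k_{y})$ is impossible when one strip reaches $\sD$ and the other does not; hence to obtain $WF(\Lambda_{\M})=C^{+}$ one must show that such covector pairs are \emph{absent} from the wave front set, not that they carry Hadamard-form singularities. This is precisely the second half of the paper's Case A: one chooses $\chi$ equal to $1$ near the point where the geodesic from $x$ meets $\sB$ and vanishing on $J^{-}(\mathrm{supp}\,h)\cap\sB$, decomposes $\Lambda_{\M}$ as in \eqref{Lambda_M-chi}, and uses Proposition \ref{prop_rapid-decrease} together with the disjoint-support analysis following \eqref{Lambda-M_part_unity} in the proof of Lemma \ref{lem_isolated_sing} to obtain rapid decrease in the direction $k_{x}$. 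The tools you invoke elsewhere suffice to fill this in, but as written the step is missing, and without it the inclusion $WF(\Lambda_{\M})\subset C^{+}$ is not established for these pairs.
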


\section{Conclusions}\label{sec_concl}

The state we constructed here, to our knowledge, is the first explicit example of a Hadamard state in the Schwarzschild-de Sitter spacetime. It is not defined in the complete extension of this spacetime, but rather in the (nonextended) region between the singularity at $r=0$ and the singularity at $r=\infty$. In this sense, our state cannot be interpreted as the Hartle-Hawking-Israel state in this spacetime, whose nonexistence was proven in \citep{KayWald91}. It can neither be interpreted as the Unruh state because, in the de Sitter spacetime, the Unruh state can be extended to the whole spacetime while retaining the Hadamard property \citep{NarnhoferPeterThirring96}. Hence we have exploited the features of field quantization in spacetimes with bifurcate Killing horizons to construct a Hadamard state which is invariant under the action of the isometries generated by the Killing vector in a spacetime with two bifurcate Killing horizons. Its generalization to spacetimes with more than two bifurcate Killing horizons might face difficulties similar to the ones pointed by \citep{KayWald91}.

Since our state was constructed solely from geometrical features of the Schwarzschild-de Sitter spacetime, it is automatically invariant under the action of its group of symmetries. Moreover, we showed that it can be isometrically mapped to a state on the past horizons, as expressed in equation \eqref{state}. This result shows how expectation values of observables in the region $\M$ are related to the expectation values on the horizons. Formally, the state itself can be written in terms of its ``initial value''.

This feature sufficed to prove, for the analogous state constructed in the Schwarzschild case \citep{DappiaggiMorettiPinamonti09}, that they had constructed a KMS state. Our state is not KMS because, under this mapping, the functional is written as the tensor product of two functionals, each corresponding to a KMS state at a different temperature. We further remark that, even in the Schwarzschild spacetime, the existence of the Hartle-Hawking-Israel state, whose features were analysed in \citep{KayWald91}, was only recently proved in \citep{Sanders13}, where the author analysed a Wick rotation in the Killing time coordinate. We believe that the method put forward in \citep{Sanders13}, if applied to the Schwarzschild-de Sitter spacetime, would give rise to the contradictions pointed out in \citep{KayWald91}.

At last, we remark that one of the issues explored by the authors of \citep{KayWald91} to prove that the Hartle-Hawking-Israel state does not exist in the Schwarzschild-de Sitter spacetime, already mentioned with the same purpose in \citep{GibbonsHawking77}, was that a thermal equilibrium state cannot exist, in this spacetime, because each of the event horizons would work as a ``thermal reservoir'', each at a different temperature. It is well known that thermal equilibrium cannot be attained in such a situation. The authors of \citep{KayWald91} went even further and proved the nonexistence by showing that such a state would give rise to contradictions related to causality. We remark that the point of view adopted in \citep{KayWald91} is more robust because, recently, a novel definition of local thermal equilibrium has been proposed \citep{BuchholzOjimaRoos02,BuchholzSchlemmer07} and one of the consequences of this definition is that a thermal state does not always describe a situation in which local thermal equilibrium is attained \citep{BuchholzSolveen13,Solveen12}. We do not wish to extend the discussion here, but we will address this topic in more detail in a future work.

\ack

We would like to thank Claudio Dappiaggi, Pedro Ribeiro and Daniel Vanzella for useful discussions and criticism on this work. MB also acknowledges financial support from CNPq.

\appendix

\renewcommand{\thesection}{Appendix \Alph{section}}
\section{Proof of Theorem \ref{state-existence}}\label{theorem_state}
\renewcommand{\thesection}{\Alph{section}}
\numberwithin{equation}{section}

\begin{proof}
(a)\ \ Recall the definition of one-particle structure given in section \ref{subsec_wave-eq}. The map $K_{\sB}$, as defined in \eqref{linear_map}, is a real-linear map which satisfies $\overline{K_{\sB}S(\sB)}=H_{\sB}$. Therefore, we only need to show that $K_{\sB}$ satisfies the other hypotheses of that Proposition. First,
\[K_{\sB}:S(\sB)\ni\psi\mapsto K_{\sB}\psi\in H_{\sB}\]
and
\[\lambda(\psi_{1},\psi_{2})=\langle K_{\sB}\psi_{1},K_{\sB}\psi_{2} \rangle_{H_{\sB}} \; .\]
The symmetric part of this two-point function is given by
\[\mu_{\sB}(\psi_{1},\psi_{2})=\textrm{Re}\langle K_{\sB}\psi_{1},K_{\sB}\psi_{2} \rangle_{H_{\sB}} \; .\]

We need to check that $\mu_{\sB}$ majorizes the symplectic form. Since
\[\sigma_{\sB}(\psi_{1},\psi_{2})=-2\textrm{Im}\langle K_{\sB}\psi_{1},K_{\sB}\psi_{2} \rangle_{H_{\sB}} \; ,\]
we have
\begin{align*}
\lvert\sigma_{\sB}(\psi_{1},\psi_{2})\rvert^{2} &=4\lvert\textrm{Im}\langle K_{\sB}\psi_{1},K_{\sB}\psi_{2} \rangle_{H_{\sB}}\rvert^{2}\le 4\lvert\langle K_{\sB}\psi_{1},K_{\sB}\psi_{2} \rangle_{H_{\sB}}\rvert^{2} \\
&\le 4\langle K_{\sB}\psi_{1},K_{\sB}\psi_{1} \rangle_{H_{\sB}}\langle K_{\sB}\psi_{2},K_{\sB}\psi_{2} \rangle_{H_{\sB}}=4\mu_{\sB}(\psi_{1},\psi_{1})\mu_{\sB}(\psi_{2},\psi_{2}) \; .
\end{align*}
We thus proved that $(H_{\sB},K_{\sB})$ is the one-particle structure associated to the state $\omega_{\sB}$. Since $\overline{K_{\sB}S(\sB)}=H_{\sB}$, this state is pure.

(b)\ \ On $\sB$, defining
\begin{align*}
u &\coloneqq \frac{1}{\kappa_{\sB}}\ln(\kappa_{\sB}U_{b}) \,\textrm{on}\, \Hor_{b}^{+} \; , \\
u &\coloneqq -\frac{1}{\kappa_{\sB}}\ln(-\kappa_{\sB}U_{b}) \,\textrm{on}\, \Hor_{b}^{-} \; ,
\end{align*}
we have
\[{\partial_{t}}_{\upharpoonright\Hor_{b}^{-}}=\partial_{u}=-\kappa_{\sB}U_{b}\partial_{U_{b}}\]
(on $\Hor_{b}^{+}$, the future-pointing Killing vector is $-\partial_{t}=-\partial_{u}=-\kappa_{\sB}U_{b}\partial_{U_{b}}$).

The one-parameter group of symplectomorphisms $\beta_{\tau}^{(X)}$ generated by $X$ individuates $\beta_{\tau}^{(X)}(\psi)\in S(\sB)$ such that $\beta_{\tau}^{(X)}(\psi)(U_{b},\theta,\varphi)=\psi(e^{\kappa_{b}\tau}U_{b},\theta,\varphi)$. Since $\beta_{\tau}^{(X)}$ preserves the symplectic form $\sigma_{\sB}$, there must be a representation $\alpha^{(X)}$ of $\beta_{\tau}^{(X)}$ in terms of $\ast$-automorphisms of $\W(S(\sB))$. From the definition of $K_{\sB}$, one has
\begin{align*}
K_{\sB}(\beta_{\tau}^{(X)}(\psi))(K,\omega)&=\frac{1}{\sqrt{2\pi}}\int_{\mathbb{R}}e^{iKU_{b}}\psi(e^{\kappa_{b}\tau}U_{b},\omega)dU_{b} \\
&=e^{-\kappa_{b}\tau}\frac{1}{\sqrt{2\pi}}\int_{\mathbb{R}}e^{i(Ke^{-\kappa_{b}\tau})U'}\psi(U',\omega)dU'=e^{-\kappa_{b}\tau}\hat{\psi}(e^{-\kappa_{b}\tau}K,\omega) \; .
\end{align*}

One then has $K_{\sB}(\beta_{\tau}^{(X)}(\psi))(K,\omega)\eqqcolon (U_{\tau}^{(X)}\psi)(K,\omega)= e^{-\kappa_{b}\tau}K_{\sB}(\psi)(e^{-\kappa_{b}\tau}K,\omega)$, $\forall\psi\in S(\sB)$. Thus,
\begin{align*}
&\langle K_{\sB}(\beta_{\tau}^{(X)}(\psi_{1})),K_{\sB}(\beta_{\tau}^{(X)}(\psi_{2})) \rangle_{H_{\sB}}=\int_{\mathbb{R}\times\mathbb{S}^{2}}e^{-\kappa_{b}\tau}\overline{\hat{\psi}_{1}(e^{-\kappa_{b}\tau}K,\omega)}e^{-\kappa_{b}\tau}\hat{\psi}_{2}(e^{-\kappa_{b}\tau}K,\omega)2KdK\wedge r_{b}^{2}d\mathbb{S}^{2}= \\
&\int_{\mathbb{R}\times\mathbb{S}^{2}}\overline{\hat{\psi}_{1}(e^{-\kappa_{b}\tau}K,\omega)}\hat{\psi}_{2}(e^{-\kappa_{b}\tau}K,\omega)2\left(e^{-\kappa_{b}\tau}K\right)d\left(e^{-\kappa_{b}\tau}K\right)\wedge r_{b}^{2}d\mathbb{S}^{2}=\langle K_{\sB}\psi_{1},K_{\sB}\psi_{2} \rangle_{H_{\sB}} \; ,
\end{align*}
hence $U_{\tau}^{(X)}$ is an isometry of $L^{2}\left(\mathbb{R}\times\mathbb{S}^{2},2KdK\wedge r_{b}^{2}d\mathbb{S}^{2}\right)$. In view of the definition of $\omega_{\sB}$, it yields that $\omega_{\sB}(W_{\sB}(\beta_{\tau}^{(X)}(\psi)))=\omega_{\sB}(W_{\sB}(\psi))$ $\forall\psi\in S(\sB)$, and, per continuity and linearity, this suffices to conclude that $\omega_{\sB}$ is invariant under the action of the group of $\ast$-automorphisms $\alpha^{(X)}$ induced by $X$. The proof for the Killing vectors of $\mathbb{S}^{2}$ is similar.

(c)\ \ We only consider $\Hor_{b}^{+}$, the other case being analogous. The state $\omega_{\Hor_{b}^{+}}^{\beta_{b}}$, which is the restriction of $\omega_{\sB}$ to $\W(S(\Hor_{b}^{+}))$, is individuated by
\[\omega_{\Hor_{b}^{+}}^{\beta_{b}}(W_{\Hor_{b}^{+}}(\psi))=e^{-\mu_{\Hor_{b}^{+}}(\psi,\psi)/2} \quad , \quad \textrm{for } \psi\in S(\Hor_{b}^{+}) \; .\]
Then, if $\psi,\psi'\in S(\Hor_{b}^{+})$, the symmetric part of $\lambda$ is given by
\[\mu_{\Hor_{b}^{+}}(\psi,\psi')=\textrm{Re}\lambda(\psi,\psi')=\textrm{Re}\langle F_{u}^{(+)}\psi,F_{u}^{(+)}\psi' \rangle_{H_{\Hor_{b}^{+}}^{\beta_{b}}}=\textrm{Re}\langle K_{\Hor_{b}^{+}}^{\beta_{b}}\psi,K_{\Hor_{b}^{+}}^{\beta_{b}}\psi' \rangle_{H_{\Hor_{b}^{+}}^{\beta_{b}}} \; .\]
It is immediate that
\[\sigma_{\Hor_{b}^{+}}(\psi,\psi')=-2\textrm{Im}\langle K_{\Hor_{b}^{+}}^{\beta_{b}}\psi,K_{\Hor_{b}^{+}}^{\beta_{b}}\psi' \rangle_{H_{\Hor_{b}^{+}}^{\beta_{b}}} \; .\]
Therefore,
\[|\sigma_{\Hor_{b}^{+}}(\psi,\psi')|^{2}\leq 4\mu_{\Hor_{b}^{+}}(\psi,\psi)\mu_{\Hor_{b}^{+}}(\psi',\psi') \; .\]
This, and the fact that $K_{\Hor_{b}^{+}}^{\beta_{b}}$ is a real-linear map which satisfies $\overline{K_{\Hor_{b}^{+}}^{\beta_{b}}S(\Hor_{b}^{+})}=H_{\Hor_{b}^{+}}^{\beta_{b}}$, suffice to conclude that $(H_{\Hor_{b}^{+}}^{\beta_{b}},K_{\Hor_{b}^{+}}^{\beta_{b}})$ is the one-particle structure of the quasifree pure state $\omega_{\Hor_{b}^{+}}^{\beta_{b}}$ (a completely analogous statement is valid for the state $\omega_{\Hor_{b}^{-}}^{\beta_{b}}$).

(d)\ \ In $S(\Hor_{b}^{-})$, the natural action of the one-parameter group of isometries generated by $X_{\upharpoonright\Hor_{b}^{-}}$ is $\beta_{\tau}^{(X)}:\psi\mapsto\beta_{\tau}^{(X)}(\psi)$ with $\beta_{\tau}^{(X)}(\psi)(u,\theta,\varphi)\coloneqq \psi(u-\tau,\theta,\varphi)$, for all $u,\tau\in\mathbb{R}$, $(\theta,\varphi)\in\mathbb{S}^{2}$ and for every $\psi\in S(\Hor_{b}^{-})$. As previously, this is an obvious consequence of $X=\partial_{u}$ on $\Hor_{b}^{-}$. Since $\beta^{(X)}$ preserves the symplectic form $\sigma_{\Hor_{b}^{-}}$, there must be a representation $\alpha^{(X)}$ of $\beta^{(X)}$ in terms of $\ast$-automorphisms of $W(S(\Hor_{b}^{-}))$. Let us prove that $\alpha^{(X)}$ is unitarily implemented in the GNS representation of $\omega_{\Hor_{b}^{-}}^{\beta_{b}}$. To this end, we notice that $\beta$ is unitarily implemented in $H_{\Hor_{b}^{-}}$, the one-particle space of $\omega_{\Hor_{b}^{-}}^{\beta_{b}}$, out of the strongly-continuous one-parameter group of unitary operators $V_{\tau}$ such that $(V_{\tau}\widetilde{\psi})(k,\theta,\varphi)=e^{ik\tau}\widetilde{\psi}(k,\theta,\varphi)$. This describes the time-displacement with respect to the Killing vector $\partial_{u}$. Thus the self-adjoint generator of $V$ is $h:\textrm{Dom}(\hat{k})\subset L^{2}\left(\mathbb{R}\times\mathbb{S}^{2},d\mu(k)\wedge r_{b}^{2}d\mathbb{S}^{2}\right)\rightarrow L^{2}\left(\mathbb{R}\times\mathbb{S}^{2},d\mu(k)\wedge r_{b}^{2}d\mathbb{S}^{2}\right)$ with $\hat{k}(\phi)(k,\theta,\varphi)=k\phi(k,\theta,\varphi)$ and
\[\textrm{Dom}(\hat{k}) \coloneqq \left\{\phi \in L^{2}\left(\mathbb{R}\times\mathbb{S}^{2},d\mu(k)\wedge r_{b}^{2}d\mathbb{S}^{2}\right)\, \Big|\, \int_{\mathbb{R}\times\mathbb{S}^{2}}|k\phi(k,\theta,\varphi)|^{2}d\mu(k)\wedge r_{b}^{2}d\mathbb{S}^{2}<+\infty\right\} \; .\]
Per direct inspection, if one employs the found form for $V$ and exploits
\[\omega_{\Hor_{b}^{-}}^{\beta_{b}}(W_{\Hor_{b}^{-}}(\psi))=e^{-\frac{1}{2}\langle \widetilde{\psi},\widetilde{\psi} \rangle_{L^{2}\left(\mathbb{R}\times\mathbb{S}^{2},d\mu(k)\wedge r_{b}^{2}d\mathbb{S}^{2}\right)}} \; ,\]
one sees, by the same argument as in the proof of item c) above, that $\omega_{\Hor_{b}^{-}}^{\beta_{b}}$ is invariant under $\alpha^{(X)}$, so that it must admit a unitary implementation.

(e)\ \ We will prove this statement by explicitly calculating the two-point function and verifying that it satisfies the KMS condition. Let $\psi,\psi'\in S(\Hor_{b}^{-})$. Since these are real functions, $\overline{\tilde{\psi}(k,\theta,\varphi)}=\tilde{\psi}(-k,\theta,\varphi)$. Then
\begin{align}
\lambda(\beta_{\tau}^{(X)}(\psi),\psi')&=\langle e^{i\tau\hat{k}}K_{\Hor_{b}^{-}}^{\beta_{b}}\psi,K_{\Hor_{b}^{-}}^{\beta_{b}}\psi' \rangle_{H_{\Hor_{b}^{-}}^{\beta_{b}}} \nonumber \\
&=\frac{r_{b}^{2}}{2}\int_{\mathbb{R}\times\mathbb{S}^{2}} e^{-i\tau k}\overline{\tilde{\psi}(k,\theta,\varphi)}\tilde{\psi}'(k,\theta,\varphi)\frac{ke^{\pi k/\kappa_{b}}}{e^{\pi k/\kappa_{b}}-e^{-\pi k/\kappa_{b}}}dk\wedge d\mathbb{S}^{2} \nonumber \\
&=\frac{r_{b}^{2}}{2}\int_{\mathbb{R}\times\mathbb{S}^{2}} e^{-i\tau k}\overline{\tilde{\psi}'(-k,\theta,\varphi)}\tilde{\psi}(-k,\theta,\varphi)\frac{ke^{\pi k/\kappa_{b}}}{e^{\pi k/\kappa_{b}}-e^{-\pi k/\kappa_{b}}}dk\wedge d\mathbb{S}^{2} \nonumber \\
&\overset{k\rightarrow -k}{=}\frac{r_{b}^{2}}{2}\int_{\mathbb{R}\times\mathbb{S}^{2}} \overline{\tilde{\psi}'(k,\theta,\varphi)}e^{i\tau k}\tilde{\psi}(k,\theta,\varphi)\frac{ke^{-\pi k/\kappa_{b}}}{e^{\pi k/\kappa_{b}}-e^{-\pi k/\kappa_{b}}}dk\wedge d\mathbb{S}^{2} \nonumber \\
&=\frac{r_{b}^{2}}{2}\int_{\mathbb{R}\times\mathbb{S}^{2}} \overline{\tilde{\psi}'(k,\theta,\varphi)}e^{-2\pi k/\kappa_{b}}e^{i\tau k}\tilde{\psi}(k,\theta,\varphi)\frac{ke^{\pi k/\kappa_{b}}}{e^{\pi k/\kappa_{b}}-e^{-\pi k/\kappa_{b}}}dk\wedge d\mathbb{S}^{2} \nonumber \\
&=\langle K_{\Hor_{b}^{-}}^{\beta_{b}}\psi',e^{i\hat{k}(\tau+2\pi i/\kappa_{b})}K_{\Hor_{b}^{-}}^{\beta_{b}}\psi \rangle_{H_{\Hor_{b}^{-}}^{\beta_{b}}}=\lambda(\psi',\beta_{\tau+i\beta_{b}}^{(X)}(\psi))
\end{align}
\end{proof}

\renewcommand{\thesection}{Appendix \Alph{section}}
\section{Proof of technical results}\label{tech_results}
\renewcommand{\thesection}{\Alph{section}}
\numberwithin{equation}{section}

Proof of Proposition \ref{prop_rapid-decrease}:

\begin{proof}
The proof here is an adapted version of the proof of Proposition 4.4 of \citep{DappiaggiMorettiPinamonti09}. It consists in analysing the behavior of the constant $C_{\phi}$ appearing in \eqref{S_sB} and \eqref{S_sC} for large values of $p\in V_{k}$ ($k$ a direction of rapid decrease). The constant $C_{\phi}$ is given in \citep{DafermosRodnianski07} as a constant dependent on the geometry of the spacetime multiplied by the square root of
\begin{equation}
{\bf E}_{0}(\phi_{l},\dot{\phi}_{l})=\lVert \nabla\phi_{l} \rVert^{2}+\lVert \dot{\phi}_{l} \rVert^{2} \; ,
\end{equation}
where $\lVert \cdot \rVert$ is the Riemannian $L^{2}$ norm on $\Sigma\cap J^{-}(\sD)$ (see Figure \ref{Sch-dS_conformal_diagram}).

Now, we can choose the support of $f$ so small that every inextensible geodesic starting from $supp(f)$, with cotangent vector equal to $k_{x}$, intersects $\sB$ in a point with coordinate $U_{b}>0$ (similarly for $\sC$). Hence, we can fix $\rho\in\cC_{0}^{\infty}(\sK;\mathbb{R})$ such that (i) $\rho=1$ on $J^{-}(supp(f);\M)\cap\Sigma$ and (ii) the null geodesics emanating from $supp(f)$ with $k_{x}$ as cotangent vector do not meet the support of $\rho$. Henceforth we can proceed exactly as in the proof given in \citep{DappiaggiMorettiPinamonti09} using the properties mentioned in this paragraph, together with the compactness of the support of $f$, to coclude the proof of this proposition. We only remark that, differently from the Schwarzschild case, our Cauchy surface $\Sigma$ does not intercept the bifurcation surfaces $\cB_{b}$ and $\cB_{c}$, hence the reasoning depicted here is valid both for $x$ in $II$ ($II'$) and for $x$ on $\Hor_{b}^{0}$ ($\Hor_{c}^{0}$).
\end{proof}

Proof of Lemma \ref{lem_isolated_sing}

\begin{proof}
We start by noting that the antisymmetric part of $\Lambda_{\M}$ is the advanced-minus-retarded operator $\mathds{E}$ and that the wave front set of $\mathds{E}$ contains no null covectors \citep{Radzikowski96}. Hence, $(x,k_{x};y,0)\in WF(\Lambda_{\M}) \Leftrightarrow (y,0;x,k_{x})\in WF(\Lambda_{\M})$, otherwise $WF(\mathds{E})$ would contain a null covector. Thus it suffices to analyse $(x,k_{x};y,0)\in{\mathcal T}^{\ast}\left(\M\times\M\right)\diagdown\{0\}$ and to show that it does not lie in $WF(\Lambda_{\M})$. Besides, from the proof of {\it Part 1} above, if $(x,y)\in\sD\times\sD\Rightarrow(x,k_{x};y,0)\notin WF(\Lambda_{\M})$. From the Propagation of Singularities Theorem, if there exists $(q,k_{q})\in B(x,k_{x})$ such that $q\in\sD$ $(x\notin\sD)$, then again $(x,k_{x};y,0)\notin WF(\Lambda_{\M})$.

For the case $x\in II$, $y\in\sD$ with $B(x,k_{x})\cap{\mathcal T}^{\ast}(\sD)\diagdown 0=\emptyset$, there must exist $q\in \Hor_{b}^{+}\cup\cB_{b}$ such that $(q,k_{q})\in B(x,k_{x})$. Besides, we can introduce a partition of unit with $\chi,\chi'\in\cC_{0}^{\infty}(\sB;\mathbb{R})$, $\chi+\chi'=1$ such that $\chi=1$ in a neighborhood of $q$. Hence, with the same definitions as in the Proposition above,
\begin{equation}
\Lambda_{\M}(f_{k_{x}},h)=\lambda_{\sB}(\chi\phi^{f_{k_{x}}}_{\sB},\phi^{h}_{\sB})+\lambda_{\sB}(\chi'\phi^{f_{k_{x}}}_{\sB},\phi^{h}_{\sB})+\lambda_{\sC}(\phi^{f_{k_{x}}}_{\sC},\phi^{h}_{\sC}) \; .
\label{Lambda-M_part_unity}
\end{equation}
Since all the terms in equation \eqref{Lambda-M_part_unity} are continuous with respect to the corresponding $\lambda$-norms, the second and third terms in \eqref{Lambda-M_part_unity} are dominated by $C\lVert \chi' \psi^{f_{k_{x}}}_{\sB} \rVert_{\sB} \lVert \psi^{h}_{\sB} \rVert_{\sB}$ and $C'\lVert \psi^{f_{k_{x}}}_{\sC} \rVert_{\sC} \lVert \psi^{h}_{\sC} \rVert_{\sC}$, respectively, where $C$ and $C'$ are positive constants. From Proposition \ref{prop_rapid-decrease}, we know that $\lVert \chi' \psi^{f_{k_{x}}}_{\sB} \rVert_{\sB}$ and $\lVert \psi^{f_{k_{x}}}_{\sC} \rVert_{\sC}$ are rapid decreasing terms in $k_{x}\in\mathcal{T}^{\ast}(\M)\diagdown\{0\}$ for any $f$ with sufficiently small support and for $k_{x}$ in an open conical neighborhood of any null direction. By a similar argument as the one presented in the proof of that Proposition, one can conclude that $\lVert \psi^{h}_{\sB} \rVert_{\sB}$ and $\lVert \psi^{h}_{\sC} \rVert_{\sC}$ are bounded. Hence, we need only focus our attention on the first term, $\lambda_{\sB}(\chi \psi^{f_{k_{x}}}_{\sB},\psi^{h}_{\sB})$.

Choosing again $f$ and $h$ with sufficiently small, compact support, we can choose $\chi''\in\cC_{0}^{\infty}(\sB;\mathbb{R})$ such that both $\chi''(p)=1$ for every $p\in supp(\psi^{h}_{\sB})$ and $supp(\chi)\cap supp(\chi'')=\emptyset$. We can write the $\lambda$-product as
\begin{equation}
\lambda_{\sB}(\psi^{f_{k_{x}}}_{\sB},\psi^{h}_{\sB})=\int_{\sB\times\sB}\chi(x')(\mathds{E}(f_{k_{x}}))(x')T(x',y')\chi''(y')\psi^{h}_{\sB}dU_{x'}d\mathbb{S}^{2}(\theta_{x'},\varphi_{x'})dU_{y'}d\mathbb{S}^{2}(\theta_{y'},\varphi_{y'}) \; .
\label{prod-chi_E_T_chi''}
\end{equation}
$\psi^{f_{k_{x}}}_{\sB}$ was written as $(\mathds{E}(f_{k_{x}}))(x')$ and $T(x',y')$ is the integral kernel of $\lambda_{\sB}$, viewed as a distribution in $(\cC_{0}^{\infty})'(\sB\times\sB)$. The integral kernel $\chi T \chi''(x',y')$, with one entry $x'$ restricted to the support of $\chi$, and the other $y'$, restricted to the support of $\chi''$, is always smooth. Besides, if one keeps $x'$ fixed, this kernel is dominated by a smooth function whose $H^{1}$-norm in $y'$ is, uniformly in $x'$, finite\footnote{The $H^{1}$-norm of a function $f$ is defined as \[\lVert f \rVert_{H^{1}(\Omega)}=\left(\sum_{|\alpha|\leq 1} \lVert D^{\alpha}f \rVert_{L^{2}(\Omega)}^{2}\right)^{1/2} \; ,\] where $\Omega$ is an open measurable space and $\alpha$ is a multi-index.}. Hence the $H^{1}(\sB)_{U_{b}}$-norm $\lVert (T\chi'')\circ\chi\mathds{E}f_{k_{x}}  \rVert_{H^{1}(\sB)_{U_{b}}}$ is dominated by the product of two integrals, one over $x'$ and one over $y'$. Since $\chi$ is a compactly supported function, the integral kernel of $\chi T \chi''$ is rapidly decreasing in $k_{x}$. Furthermore, as stated above, $\lVert \psi^{h}_{\sB} \rVert_{\sB}$ is bounded. Putting all this together, we have
\begin{equation}
\lvert \lambda_{\sB}(\psi^{f_{k_{x}}}_{\sB},\psi^{h}_{\sB}) \rvert \leq C''\lVert (T\chi'')\circ\chi\mathds{E}f_{k_{x}}  \rVert_{H^{1}(\sB)_{U_{b}}}\lVert \psi^{h}_{\sB} \rVert_{\sB} \; .
\end{equation}
The fast decrease of the first norm, together with the boundedness of the second norm, imply that $(k_{x},0)$ is a direction of fast decrease of $\lambda_{\sB}(\psi^{f_{k_{x}}}_{\sB},\psi^{h}_{\sB})$.

Now, let us look at the case $x\in\sD$, $y\in II$. Adopting a coordinate system in which the coordinate along the integral lines of $X$ is denoted by $t$, and the others are denoted by $\underline{x}$, the pull-back action of the one parameter group generated by $X$ acts like $(\beta^{(X)}_{\tau}f)(t,\underline{x})=(t-\tau,\underline{x})$. Exploiting the same splitting for the covectors, we write ${\mathcal T}^{\ast}_{x}(\M\diagdown\sD)\diagdown\{0\}\equiv \mathbb{R}^{4}\ni k_{x}=(k_{xt},\underline{k_{x}})$.

We can now construct the two non-null and non-vanishing covectors $q=(0,\underline{k_{x}})$ and $q'=(-k_{xt},0)$. Since $(x,q;y,q')\notin WF(\Lambda_{\M})$, from Proposition 2.1 in \citep{Verch99} there exists an open neighborhood $V'$ of $(q,q')$, as well as a function $\psi'\in\cC_{0}^{\infty}(\mathbb{R}^{4}\times\mathbb{R}^{4};\mathbb{C})$ with $\psi'(0,0)=1$ such that, denoting $x'=(\tau,\underline{x'})$, $y'=(\tau',\underline{y'})$, there exist constants $C_{n}\geq 0$ and $\lambda_{n}>0$, such that for all $p\geqslant 1$, for all $0<\lambda<\lambda_{n}$ and for all $n\geq 1$,
\begin{equation}
\underset{k,k'\in V'}{\textrm{sup}}\left|\int d\tau d\tau' d\underline{x'}d\underline{y'}\psi'(x',y') e^{i\lambda^{-1}(k_{t}\tau+\underline{k}\underline{x'})}e^{i\lambda^{-1}(k'_{t}\tau'+\underline{k'}\underline{y'})}\Lambda_{\M}\left(\beta^{(X)}_{\tau}\otimes\beta^{(X)}_{\tau'}(F_{(\underline{x'},\underline{y'}),\lambda}^{(p)})\right)\right|<C_{n}\lambda^{n} \; ,
\label{Lambda_M-rapid_decrease}
\end{equation}
as $\lambda\rightarrow 0$, where
\[F_{(\underline{x'},\underline{y'}),\lambda}^{(p)}(z,u)\coloneqq F(x+\lambda^{-p}(z-\underline{x'}-x),y+\lambda^{-p}(u-\underline{y'}-y)) \quad \textrm{and} \quad \widehat{F}(0,0)=1 \; ,\]
where $\widehat{F}$ is the usual Fourier transform. Since $\Lambda_{\M}$ is invariant under $\beta^{(X)}_{-\tau-\tau'}\otimes\beta^{(X)}_{-\tau-\tau'}$, we infer that $\Lambda_{\M}\left(\beta^{(X)}_{\tau}\otimes\beta^{(X)}_{\tau'}(F_{(\underline{x'},\underline{y'}),\lambda}^{(p)})\right)=\Lambda_{\M}\left(\beta^{(X)}_{-\tau'}\otimes\beta^{(X)}_{-\tau}(F_{(\underline{x'},\underline{y'}),\lambda}^{(p)})\right)$. This implies that \eqref{Lambda_M-rapid_decrease} also holds if one replaces (i) $\psi'$ by $\psi(x',y')=\psi((\tau',\underline{x}),(\tau,\underline{y'}))$ and (ii) $V'$ by $V=\left\{(-k'_{t},\underline{k}),(-k_{t},\underline{k'})\in\mathbb{R}^{4}\times\mathbb{R}^{4}\, |\, ((k_{t},\underline{k}),(k'_{t},\underline{k'}))\in V'\right\}$. This is an open neighborhood of $(k_{x},0)$ as one can immediately verify since $(q,q')\in V'$, so that $(k_{x},0)\in V$, and the map $\mathbb{R}^{4}\times\mathbb{R}^{4}\ni((k_{t},\underline{k}),(k'_{t},\underline{k'}))\mapsto((-k'_{t},\underline{k}),(-k_{t},\underline{k'}))\in\mathbb{R}^{4}\times\mathbb{R}^{4}$ is an isomorphism. Hence, once again from Proposition 2.1 in \citep{Verch99}, $(x,y;k_{x},0)\notin WF(\Lambda_{\M})$.

For the case when both $x,y\in\M\diagdown\sD$, if a representative of either $B(x,k_{x})$ or $B(y,k_{y})$ lies in ${\mathcal T}^{\ast}(\sD)$, then we fall back in the case above. If no representative of both $B(x,k_{x})$ and $B(y,k_{y})$ lies in ${\mathcal T}^{\ast}(\sD)$, we can introduce a partition of unit on $\sB$ (or $\sC$) for both variables, and get a decomposition like \eqref{Lambda-M_part_unity}, for both variables. The terms of this decomposition can be analysed exactly as above.
\end{proof}


\end{document}